\definecolor{darkgreen}{rgb}{0.09, 0.45, 0.27}
\definecolor{darkred}{rgb}{0.55, 0.0, 0.0}
\renewcommand{\epsilon}{\varepsilon}
\renewcommand{\phi}{\varphi}
\newcommand{\C}{\mathbb{C}}
\newcommand{\bC}{\mathbb{C}}
\DeclareMathOperator{\Tr}{Tr}
\DeclareMathOperator{\id}{id}
\newcommand{\E}{\mathbb{E}}
\renewcommand{\P}{\mathbb{P}}
\newtheorem{theorem}{Theorem}[section]
\newtheorem{definition}[theorem]{Definition}
\newtheorem{proposition}[theorem]{Proposition}
\newtheorem{corollary}[theorem]{Corollary}
\newtheorem{lemma}[theorem]{Lemma}
\newtheorem{remark}[theorem]{Remark}
\newcommand{\Sym}{\mathrm{Sym}}
\newcommand{\Pcomp}{P_{\overline{\Omega}}}
\newcommand{\ketbra}[2]{| #1 \rangle \langle #2 |}
\begin{document}

\title[Entanglement criteria for the bosonic and fermionic induced ensembles]{Entanglement criteria for the\\bosonic and fermionic induced ensembles}

\author{{S. Dartois}}\email{stephane.dartois@u-bordeaux.fr}
\address{Univ. Bordeaux, LaBRI, UMR 5800, F-33400 Talence, France }

\author{Ion Nechita}
\email{ion.nechita@univ-tlse3.fr}
\address{Laboratoire de Physique Th\'eorique, Universit\'e de Toulouse, CNRS, UPS, France.}

\author{{A. Tanasa}}\email{ntanasa@u-bordeaux.fr}
\address{Univ. Bordeaux, LaBRI, UMR 5800, F-33400 Talence, France and
DFT, H. Hulubei Nat. Inst. Nucl. Engineering, P.O.Box MG-6, 077125 Magurele, Romania}

\begin{abstract}
We introduce the bosonic and fermionic ensembles of density matrices and study their entanglement. In the fermionic case, we show that random bipartite fermionic density matrices have non-positive partial transposition, hence they are typically entangled. The similar analysis in the bosonic case is more delicate, due to a large positive outlier eigenvalue. We compute the asymptotic ratio between the size of the environment and the size of the system Hilbert space for which random bipartite bosonic density matrices fail the PPT criterion, being thus entangled. We also relate moment computations for tensor-symmetric random matrices to evaluations of the circuit-counting and interlace graph polynomials for directed graphs.
\end{abstract}

\date{\today}

\maketitle

\vspace{1cm}

{\bf Keywords: } random fermionic and bosonic states, random tensors and matrices, entanglement, circuit polynomial 

\vspace{1cm}

\tableofcontents

\newpage

\section{Introduction}

In Quantum Mechanics, quantum states are modelled by vectors in a complex Hilbert space. In the presence of an environment, one uses the formalism of density matrices to describe quantum states. The dimension of the corresponding Hilbert space grows exponentially with the number of quantum particles one wants to describe, hence it is very difficult to precisely characterize the interesting physical properties of arbitrary density matrices. One resorts then to understanding the properties of \emph{typical quantum states}, in the sense that one would like to find properties which hold with large probability, for some suitable (natural from a physical perspective) probability measure on the set of density matrices. 

Several such natural probability measures have been considered in the literature: the induced measure (closely related to the Wishart ensemble from random matrix theory) \cite{zyczkowski2001induced}, the Bures measure (motivated by statistical considerations) \cite{hall1998random}, random matrix product states (motivated by condensed matter theory) \cite{garnerone2010typicality}, and random graph states (which encode a given entanglement structure) \cite{collins2010randoma}. Our first contribution in this work is to introduce density matrix ensembles for bipartite quantum systems which have a given \emph{symmetry}. We consider respectively, the symmetric and the anti-symmetric subspace of a tensor product of Hilbert spaces, and introduce random density matrices supported on these subspaces. Since the construction is very similar to that of the induced ensemble \cite{zyczkowski2001induced}, we call these the \emph{bosonic}, respectively the \emph{fermionic induced ensembles} of mixed quantum states. 

We then study the entanglement properties of states from these ensembles, focusing on the bipartite case (two fermions and two bosons). We analyze the spectrum of the \emph{partial transposition} \cite{horodecki1996separability,peres1996separability} of these random density matrices in the large $N$ limit. For the bipartite fermionic ensemble, we find (see Corollary \ref{cor:fermionic-threshold}) that a typical fermionic mixed state is entangled. This is due to the presence of a large negative eigenvalue in the spectrum of the partial transposition of these states. 
In the bosonic case, the situation is more complicated, since the symmetry of the state is responsible for a large positive eigenvalue of the partial transposition. This outlier eigenvalue makes the study of the spectrum of the partial transposition more delicate. The asymptotic spectrum of the partial transposition of a random bosonic density matrix is computed in Theorem \ref{thm:main}, which we state informally here: 

\begin{theorem}
Let $\rho_N$ be a $N^2 \times N^2$ element from the bosonic ensemble. The spectrum of $\rho_N$ contains, in the large $N$ limit: 
\begin{itemize}
    \item a large, \emph{outlier}, positive eigenvalue, of order $1/N$
    \item a \emph{bulk}, containing $N^2-1$ eigenvalues, of order $1/N^2$, whose empirical distribution follows, when properly normalized, a \emph{shifted semicircle distribution}. 
\end{itemize}
The bulk of the spectrum is characterized by the shape parameter of the bosonic induced ensemble, and we observe a \emph{threshold phenomenon}: the limiting probability distribution of the bulk eigenvalues has negative support if and only if the shape parameter $c$ is smaller than 4. 
\end{theorem}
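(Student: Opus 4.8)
The plan is to realize an element $\rho_N$ of the bosonic induced ensemble as the (normalized) partial trace of a random pure state supported on $\Sym^2(\C^N) \otimes \C^{Nc}$, where $c$ is the shape parameter and $Nc$ is the dimension of the environment. Concretely, $\rho_N = \Pcomp G G^* \Pcomp / \Tr(\Pcomp G G^* \Pcomp)$ where $G$ is an $N^2 \times Nc$ Ginibre matrix and $\Pcomp$ is the orthogonal projection onto $\Sym^2(\C^N) \subset \C^N \otimes \C^N$. The first step is thus a moment computation: I would compute $\E \Tr\big[(\Pcomp G G^* \Pcomp)^k\big]$ for fixed $k$, in the regime $N \to \infty$, $Nc$ environment dimension. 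Using the genus/Weingarten expansion for Ginibre matrices, each moment becomes a sum over permutations, but the presence of the symmetrizer $\Pcomp = \tfrac12(\id + F)$ (with $F$ the flip/swap on $\C^N \otimes \C^N$) means each trace is further decorated by a choice of $\pm$ and a swap at each of the $k$ insertions of $\Pcomp$; this is exactly where the circuit-counting/interlace graph polynomial evaluations referenced in the abstract will enter, so I would set up that combinatorial dictionary carefully.

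**Identifying the two scales and the outlier.** The key structural observation is that $\Pcomp$ has rank $\binom{N+1}{2} \sim N^2/2$, so $GG^*$ restricted to the symmetric subspace has the bulk of its mass at scale $Nc \cdot \tfrac{1}{N^2/2} \sim 1/N$ per eigenvalue after normalizing the trace to $1$ — wait, more carefully: $\Tr(\Pcomp GG^*\Pcomp) \sim N c \cdot \dim(\Sym^2) / N^2 = Nc/2 \cdot$, and dividing through, a generic eigenvalue sits at order $1/N^2$. The outlier arises because the flip operator $F$, living inside $\Pcomp$, contributes a rank-one-like coherent direction: the "maximally entangled" symmetric vector $\sum_i |ii\rangle$ is an eigenvector of $F$ with eigenvalue $+1$ and interacts with all the Ginibre columns constructively, producing an eigenvalue of order $Nc/N^2 \cdot N = c/N$, i.e.\ order $1/N$. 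I would make this precise by block-decomposing with respect to this distinguished vector (or equivalently using the $F = $ sum-over-$|ij\rangle\langle ji|$ structure) and showing the remaining $N^2 - 1$ eigenvalues decouple from it at leading order.

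**Extracting the bulk law.** For the bulk, after subtracting the outlier and rescaling eigenvalues by $N^2$ (and recentering by the appropriate constant coming from $\E \Tr \rho_N$ on the bulk), I would show the rescaled empirical moments converge to the moments of a semicircle law. The Wishart-type structure with aspect ratio $c$ (environment size $Nc$ over system size $\sim N^2/2$, but effectively the symmetric-subspace combinatorics rescales this) should give a Marchenko–Pastur law in the unnormalized picture; after the $1/N^2$ rescaling and recentering, the relevant limit is the semicircle (this is the standard "fluctuations of a nearly-flat Wishart spectrum are semicircular" phenomenon). The shift and the variance of this semicircle are read off from the first two bulk moments. The threshold at $c=4$ then comes from comparing the shift (center of the semicircle, which should scale like $c$ after normalization) against the radius (which scales like $\sqrt{c}$): the support reaches zero precisely when center $=$ radius, i.e.\ when some explicit function of $c$ vanishes, and one checks this happens at $c=4$.

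**Main obstacle.** The hard part will be the bookkeeping in the moment expansion: separating, term by term in the permutation/swap sum, the contributions that build the rank-one outlier from those that build the bulk, and showing the cross terms are lower order — i.e.\ proving the outlier genuinely detaches rather than merely shifting the bulk's edge. A secondary difficulty is justifying convergence of the \emph{empirical distribution} (not just expected moments) to the shifted semicircle, which requires either a variance bound on the traces (second-order genus expansion) or a concentration argument; and finally, the identification of the limiting moments with those of a semicircle rather than some other law must be done honestly, presumably by recognizing the surviving planar contributions. I expect the cleanest route is to push everything through the circuit-counting polynomial identities of the earlier section, so that the $c$-dependence of both the shift and the variance — and hence the $c=4$ threshold — falls out of a single generating-function evaluation.
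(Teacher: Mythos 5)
There is a fundamental gap: you never take the partial transposition. The informal statement says ``the spectrum of $\rho_N$'', but the precise version, Theorem~\ref{thm:main}, concerns the eigenvalues of $W^\Gamma = [\id\otimes\mathrm{transp}](W)$; this is forced by the content of the statement itself, since $\rho_N\ge 0$ cannot have negative spectral support and the threshold at $c=4$ is the PPT entanglement threshold for $\rho_N^\Gamma$. Every phenomenon you are asked to exhibit is an artifact of transposing one tensor factor. The outlier exists because $F^\Gamma = N\ketbra{\Omega}{\Omega}$: the flip $F$ is a $\pm 1$ involution with no distinguished direction, but its partial transpose is rank one with eigenvalue $N$, so $\E W^\Gamma = \frac{M}{2}I_{N^2} + \frac{MN}{2}\ketbra{\Omega}{\Omega}$ acquires a single eigenvalue on a larger scale (Proposition~\ref{prop:eigenval-eigenvec-average}). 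Likewise, the bulk of $W$ itself is Mar\u{c}enko--Pastur of parameter $2c$ on $\Sym_2(\mathbb{C}^N)$ (Proposition~\ref{prop:moments-W}), not a shifted semicircle and never negative; the semicircle appears only after transposition, when the index contractions twist into the digraphs $G_{\gamma\alpha,\gamma^{-1}\alpha}$ whose circuit polynomials the paper evaluates. Your plan to compute $\E\Tr[(P_s GG^*P_s)^k]$ therefore targets the wrong matrix and cannot produce either the outlier or the shifted-semicircle bulk. A secondary slip: you take the environment dimension to be $Nc$, whereas the regime is $M=cN^2$, which is what keeps the ratio $M/N_s[2]\to 2c$ finite and nondegenerate (with $Nc$ the Mar\u{c}enko--Pastur ratio would vanish and your ``nearly-flat Wishart'' heuristic would be doing completely different work).

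Your remaining structural instincts are close to the paper's once the transposition is restored. The paper compresses $W^\Gamma$ to the orthogonal complement of $\ket\Omega$ by passing to $Q=\Pcomp W^\Gamma \Pcomp$ with $\Pcomp = I - \ketbra{\Omega}{\Omega}$ (your ``block-decompose with respect to the distinguished vector''), computes $\E\Tr(Q^p)$ via circuit-counting polynomials and a reduction calculus at fixed points and adjacent transpositions to land on the moments of $\mathrm{SC}_{c/2,\sqrt{c}/2}$, and then transfers back to $W^\Gamma$ by Cauchy interlacing together with a localization estimate on $\langle\Omega|W^\Gamma|\Omega\rangle$ that pins the outlier at $\frac{c}{2}N^3$. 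If you substitute $W^\Gamma$ for $W$, fix the scaling to $M=cN^2$, and add the interlacing step to glue the compressed bulk to the full spectrum, your outline would align with the paper's argument.
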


We discuss the implications of this result for the entanglement of typical bosonic states in Corollary \ref{cor:bosonic-threshold}: when the shape parameter of the bosonic induced ensemble is smaller than $4$, a typical symmetric mixed state will be entangled. We also show that the \emph{realignment criterion} \cite{chen2002matrix, rudolph2003cross} gives exactly the same entanglement threshold as the positive partial transposition (PPT) criterion. 

We analyze the asymptotic spectrum of the random matrices we consider (and their partial transposition) using the moment method. Due to the imposed symmetry of the models, direct computations are cumbersome; we put forward a novel connection between such moment computations and a well-known graph polynomial: the \emph{circuit counting polynomial} of directed graphs. In combinatorics and graph theory, this polynomial has received a lot of attention recently, in relation to the \emph{interlace polynomial} \cite{arratia2004interlace}. We use this connection in our derivations to bound the asymptotic moments of the random matrices we are considering, showcasing the importance of the newly discovered parallel between moment computations and graph polynomials. 

The paper is organized as follows. In Section \ref{sec:first-def} we recall some basic definitions from quantum information theory and from the theory of symmetric operators. In Sections \ref{sec:fermionic} and \ref{sec:bosonic} we introduce, respectively, the fermionic and the bosonic induced ensembles. In Section \ref{sec:main-result} we state the main result on
the asymptotic spectrum of a random bosonic quantum state. In Section \ref{sec:graph-poly} we recall the main facts on the circuit counting graph polynomial, relating it to generalized traces of the symmetric projection. This connection is used in Section \ref{sec:moments} to express the moments of the partial transposition of a random bosonic density matrix to evaluations of the circuit counting graph polynomial. Sections \ref{sec:t-channel}, \ref{sec:diagrammatics},  \ref{sec:bounds}, \ref{sec:tensor-eval} contain technical results used in Section \ref{sec:proof-main-result} for the proof of the main result. We conclude in Section \ref{sec:conclusion}, where we also lay out some directions for future research. 

\section{Quantum states, entanglement, symmetry}\label{sec:first-def}

In this section we recall the basic notions from quantum information theory which motivate our work, and then gather some basic mathematical facts regarding symmetric operators needed later. We start with a brief overview of quantum states, the notion of quantum entanglement, and the partial transposition entanglement criterion.

In quantum mechanics, quantum states are modelled by \emph{density matrices}. Each quantum system comes with a complex Hilbert space $\mathcal H$, which we shall consider here to be finite dimensional: $\mathcal H \approx \mathbb C^N$; we say that the quantum system has $N$ degrees of freedom. A density matrix is a positive semidefinite operator acting on $\mathcal H$, with unit trace:
$$\rho \in \mathcal M_N(\mathbb C), \, \rho \geq 0, \, \Tr \rho = 1.$$

The Hilbert space corresponding to two quantum systems is obtained by taking the \emph{tensor product} of the individual Hilbert spaces: 
$$\mathcal H_{AB} = \mathcal H_A \otimes \mathcal H_B.$$
A bipartite quantum state $\rho_{AB}$ is said to be \emph{separable} if it can be written as a convex mixture of product states: 
$$\rho_{AB} = \sum_{i=1}^k p_i \alpha_i \otimes \beta_i,$$
for density matrices $\alpha_i$, $\beta_i$, and probabilities $p_i$. Non-separable quantum states are called \emph{entangled}. Deciding whether a given bipartite density matrix is separable or entangled is a computationally hard problem \cite{gurvits2004classical}. For this reason, several computationally efficient necessary or sufficient conditions for entanglement have been developed \cite[Section VI.B]{horodecki2009quantum}.

In this paper, we shall be concerned with the most famous entanglement criterion, the \emph{positive partial transpose} criterion \cite{peres1996separability,horodecki1996separability} (we shall also briefly discuss the realignment criterion in Section \ref{sec:bosonic}). The starting observation is that the partial transposition of a separable state is positive semidefinite: 

$$\rho_{AB}^\Gamma := [\operatorname{id} \otimes \Tr](\rho_{AB}) = \sum_i \alpha_i \otimes \beta_i^\top \geq 0.$$

Hence, if the partial transposition of a quantum state $\rho_{AB}$ is \emph{not} positive semidefinite, the state must be entangled. This criterion is known to be exact only if the total Hilbert space dimension is at most 6; we refer the reader to \cite[Section VI.B.1]{horodecki2009quantum} for further properties.

\bigskip

We shall now discuss several mathematical considerations regarding symmetry in quantum information theory; more precisely, we shall investigate the symmetric subspace of a tensor product of Hilbert spaces, and the related operators.  Consider the $r^{\textrm{th}}$ tensor power $\mathcal{H}^{\otimes r}$ of $\mathcal H$. A natural basis of $\mathcal{H}^{\otimes r}$ is given by the set $\{e_{i_1}\otimes \ldots \otimes e_{i_r} \, : \,  i_1,\ldots, i_r \in [N]\}$ where the $e_i$'s form a basis of $\mathcal{H}$, and $[N]$ denotes the set $\{1,2, \ldots, N\}$. Let $\sigma\in S_r$. We define the action of $\sigma$ on $\mathcal{H}^{\otimes r}$ by extending linearly its action on basis elements, so that,
\begin{equation}
    \sigma(e_{i_1}\otimes \ldots \otimes e_{i_r})=e_{i_{\sigma(1)}}\otimes \ldots \otimes e_{i_{\sigma(r)}}.
\end{equation}
We now define two orthogonal projectors 
\begin{equation}
   P_{s,r}=\frac 1 {r!}\sum_{\sigma\in S_r} \sigma \quad \textrm{ and } \quad P_{a,r}=\frac1{r!}\sum_{\sigma\in S_r} \epsilon(\sigma) \sigma
\end{equation}
where $\epsilon(\sigma)$ is the signature of the permutation $\sigma$. $P_{s,r}$ are projectors on the symmetric subspace $\mathrm{Sym}_r \mathcal{H}$ and $P_{a,r}$ are projectors on the antisymmetric subspace (or $r^{\textrm{th}}$ exterior power) $\wedge^r\mathcal{H}$. 
We define the $r^{\textrm{th}}$ symmetric power (resp. the $r^{\textrm{th}}$ exterior power) of $\mathcal H$ as the subspace $P_{s,r}(\mathcal{H}^{\otimes r})$ (resp. $P_{a,r}(\mathcal{H}^{\otimes r})$). We recall
\begin{equation}
    N_s[r]:=\mathrm{dim }\ \mathrm{Sym}_r \mathcal{H}= \binom{N+r-1}{r} \ \mathrm{ and } \ N_a[r]:= \mathrm{dim }\ \wedge^r\mathcal{H}=\binom{N}{r}.
\end{equation}
The orthogonal projection on the symmetric subspace can be written as \cite{harrow2013church}
\begin{equation}\label{eq:def-P-sym}
P_{s,r} =  \binom{N+r-1}{r} \int_{\|x\|=1} (xx^*)^{\otimes r} \mathrm{d}x,
\end{equation}
For example, in the bipartite case, we have
$$P_{s,2} = \frac 1 2 (I+F) \ \textrm{and } P_{a,2}=\frac12(I-F)$$
where $F$ is the \emph{flip} (or \emph{swap}) operator $F x \otimes y = y \otimes x$ while $I$ is the identity permutation and is represented by a $N^2\times N^2$ identity matrix. Bipartite tensors can be identified with matrices in a straightforward way, and, in that case, we have $P_{s,2} A = (A+A^\top)/2$ and $P_{a,2} A=(A-A^\top)/2$. Note that this corresponds to symmetrizing and anti-symmetrizing matrices; in the complex case, we do not obtain however self-adjoint or skew-adjoint matrix, since we are not taking complex conjugates of the entries. Since most of our work will use these projectors for $r=2$, we use a simplified notation for this case and denote $P_s=P_{s,2}$ and $P_a=P_{a,2}$.

We state below a computation which will be useful later (see \cite{holmes2018partial} for a related problem, stated in representation theoretic language).
\begin{lemma}\label{lem:partial-trace-P-sym}
The partial trace of the projection on the symmetric subspace can be computed as follows: (below, $0 \leq r \leq k$): 
$$[\id_r \otimes \Tr_{k-r}]P_{s, k} = \frac{N[k]}{N[r]} P_{s,r}.$$
\end{lemma}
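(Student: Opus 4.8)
The plan is to reduce the identity to an elementary integral computation by invoking the coherent-state representation \eqref{eq:def-P-sym} of the symmetric projector. Writing that formula with $r$ replaced by $k$ gives
$$P_{s,k} = N[k] \int_{\|x\|=1} (xx^*)^{\otimes k}\, \mathrm{d}x,$$
where $N[k] = \binom{N+k-1}{k} = \dim \Sym_k \mathcal H$ (the quantity denoted $N_s[k]$ earlier). The point of this representation is that $(xx^*)^{\otimes k}$ is an elementary tensor, $(xx^*)^{\otimes k} = (xx^*)^{\otimes r} \otimes (xx^*)^{\otimes(k-r)}$, so the map $\id_r \otimes \Tr_{k-r}$ acts on it transparently.

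Next I would push the linear map $\id_r \otimes \Tr_{k-r}$ inside the integral — legitimate because everything lives in the fixed finite-dimensional space $\M_{N^r}(\C)$ and the integrand is continuous on the compact unit sphere — and use that $\Tr_{k-r}$ factorizes on a product: $[\id_r \otimes \Tr_{k-r}]\big((xx^*)^{\otimes r} \otimes (xx^*)^{\otimes(k-r)}\big) = (xx^*)^{\otimes r} \cdot \Tr[(xx^*)^{\otimes(k-r)}]$. Since $\Tr(xx^*) = \|x\|^2 = 1$ on the sphere, the scalar factor $\Tr[(xx^*)^{\otimes(k-r)}] = \|x\|^{2(k-r)}$ equals $1$, leaving
$$[\id_r \otimes \Tr_{k-r}]P_{s,k} = N[k] \int_{\|x\|=1} (xx^*)^{\otimes r}\, \mathrm{d}x.$$
Finally, recognizing the remaining integral — via \eqref{eq:def-P-sym} once more, now in its original form — as $N[r]^{-1} P_{s,r}$ yields $[\id_r \otimes \Tr_{k-r}]P_{s,k} = (N[k]/N[r])\, P_{s,r}$, which is exactly the claim.

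I do not expect a real obstacle along this route; the only step that warrants an explicit word is the interchange of the partial trace with the sphere integral, which is immediate in finite dimensions. For completeness one could also note the alternative, more hands-on proof: expand $P_{s,k} = \frac{1}{k!}\sum_{\sigma \in S_k} \sigma$ and compute the partial trace of each permutation operator according to how the cycles of $\sigma$ meet the traced block $\{r+1,\dots,k\}$, then resum. That argument works but is combinatorially heavier (one must track cycles straddling the two blocks and the powers of $N$ produced by cycles internal to the traced block), so I would keep the integral proof as the main one and omit the permutation bookkeeping.
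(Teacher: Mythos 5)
Your proof is correct and follows exactly the same route as the paper's: apply the integral representation \eqref{eq:def-P-sym} for $P_{s,k}$, push $\id_r\otimes\Tr_{k-r}$ through the integral, use $\Tr[(xx^*)^{\otimes(k-r)}]=\|x\|^{2(k-r)}=1$ on the sphere, and recognize the remaining integral as $P_{s,r}/N[r]$. Your write-up simply spells out the interchange-of-integral step and the norm-one observation that the paper leaves implicit.
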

\begin{proof}
We use the integral representation from eq.~\eqref{eq:def-P-sym}:
\begin{align*}[\id_r \otimes \Tr_{k-r}]P_{s, k} &= N[k] \int_{\|x\|=1} [\id_r \otimes \Tr_{k-r}]\left((xx^*)^{\otimes k}\right) \mathrm{d}x,\\
&= N[k] \int_{\|x\|=1} (xx^*)^{\otimes r} \mathrm{d}x\\
&= \frac{N[k]}{N[r]} P_{s,r}.
\end{align*}
\end{proof}

\section{The fermionic induced ensemble}\label{sec:fermionic}

In this section we introduce the first random matrix model for quantum states that we shall study in this paper, that is the matrix model for random fermionic states. Random quantum states have played an important role in quantum physics and quantum information theory, being used to study properties of typical quantum systems and also as a source of examples of states with interesting properties. 

Ensembles of random density matrices have been considered by several authors, having different physical and mathematical motivations \cite{zyczkowski2001induced,zyczkowski2003hilbert,nechita2007asymptotics,garnerone2010typicality,zyczkowski2011generating}. The obvious candidate, the \emph{Lebesgue measure} on the convex body of density matrices, is part of a one-parameter family of probability measures, called the \emph{induced ensembles} \cite{zyczkowski2001induced}. The general setting is as follows. Consider a random pure state $\ket \psi \in \mathcal H \otimes \mathcal H_E$, uniformly distributed on the unit sphere the tensor product between the Hilbert space of interest $\mathcal H$ and an auxiliary Hilbert space $\mathcal H_E$, called the \emph{environment}. Define a random density matrix 
$$\rho := \Tr_E \ketbra{\psi}{\psi}$$
by tracing out the environment. In this way, we obtain a random matrix acting on $\mathcal H$; the dimension of the environment, $\dim \mathcal H_E$, is a parameter of the model. It was recognized in \cite{sommers2004statistical,nechita2007asymptotics} that the induced measure can be understood as a \emph{normalized Wishart measure}: 
$$\rho = \frac{Y}{\Tr Y},$$
where $Y = GG^*$, for $G : \mathcal H \to \mathcal H_E$ a random Gaussian matrix (an element of the \emph{Ginibre ensemble}).\\

We start by introducing the random matrix model for fermionic quantum states that we call the \emph{fermionic induced ensemble}. We then show the entanglement results for this model of random fermionic states.\\

In practice, we study an antisymmetric random Gaussian tensor $G_a\in \wedge^r \mathcal{H}\otimes \mathcal{H}_E$. We identify the space of antisymmetric tensors 
\begin{equation}\label{eq:identif_fermions}
    \wedge^r\mathcal{H}\cong \textrm{Ran}(P_{a,r})\subseteq \mathcal{H}^{\otimes r},
\end{equation}
where $P_{a,r}$ is defined in Section \ref{sec:first-def} to be the projector on the antisymmetric subspace of $\mathcal{H}^{\otimes r}$. We use this identification to construct the antisymmetric random Gaussian tensor from a complex random standard Gaussian tensor $G \in \mathcal{H}^{\otimes r}\otimes \mathcal{H}_E$, that is a tensor whose entries are standard normal random variables. We will be interested in the marginals obtained by tracing out the environmental degrees of freedom. We study the random matrix $A:= \Tr_E(G_a G_a^*)$. Note that, using the identification from \eqref{eq:identif_fermions}, we can also write
\begin{equation}\label{eq:fermionic-Wishart}
    A=P_{a,r}\Tr_E(GG^*)P_{a,r},
\end{equation}
where $G$ is the complex standard Gaussian tensor. In the language of quantum information theory, normalizing matrices $A$ yields random density matrices having an \emph{antisymmetry} constraint, reflecting their fermionic character on the $r$ copies of $\mathcal H$. 
\begin{definition}
 The \emph{fermionic induced ensemble} of parameters $(N,M,r)$ is the set of random bipartite density matrices of size $N^r$ defined by 
 \begin{equation}
     \rho^{(A)}=\frac{A}{\Tr A},
 \end{equation}
 where $A$ is an antisymmetrized Wishart random matrix from Eq.~\eqref{eq:fermionic-Wishart}. Matrices from the fermionic induced ensemble are supported on the antisymmetric subspace of $(\mathbb{C}^{N})^{\otimes r}$.
\end{definition}

Let us now comment on our choice of model. As already detailed in the introduction, our main motivation is that these types of states are more realistic than any (possibly non-symmetric) random states, since quite often in quantum experiments we consider states of electrons or say ${}^{87}\mathrm{Rb}$ (which are fermions), or states of photons or ${}^4\mathrm{He}$ atoms (which are bosons, whose ensemble we  introduce later in section \ref{sec:bosonic}). The corresponding degrees of freedom must have antisymmetric/symmetric wave functions, since they are fermions/bosons, while the environment has no reason to have any kind of symmetries. This explains why we impose the symmetry condition on the tensor product $\mathcal{H}^{\otimes r}$, while no symmetry is imposed on the tensor product with the environment system $\mathcal H_E$. \\

In the rest of this section, we set $r=2$. This is for technical reasons. If in the fermionic case, we could rather easily extend our results to general $r$, it does not appear feasible with our current techniques to study the bosonic case for general $r$. In order to stay coherent all along the paper we decided to present our results in the $r=2$ case for both fermionic and bosonic induced ensembles.\\

We set the dimension of the Hilbert spaces as follows
\begin{align*}
    \dim \mathcal{H}&=:N \\
    \dim \mathcal{H}_E&=:M.
\end{align*}
We shall consider an asymptotic regime where both $N,M \to \infty$ in such way that $M = cN^2$ for some fixed constant $c \in (0, \infty)$. This choice of scaling is the standard one needed for the convergence of the standard Wishart random matrices to the  Mar\u{c}enko-Pastur distribution:
\begin{equation}\label{eq:Marchenko-Pastur}
\mathrm{d}\mathrm{MP}_c=\max (1-c,0)\delta_0+\frac{\sqrt{(b-x)(x-a)}}{2\pi x} \; \mathbf{1}_{[a,b]}(x) \, \mathrm{d}x,
\end{equation}
with $a = (1-\sqrt c)^2$ and $b=(1+\sqrt c)^2$.\\

The limiting eigenvalue distribution of the properly normalized random matrix $A$ is the same as that of a non-symmetrized Wishart matrix. 

\begin{proposition}\label{prop:moments-A}
   The random matrix $A$ converges in moments, as $N \to \infty$, $M \sim cN^2$, to a  Mar\u{c}enko-Pastur distribution of parameter $2c$:
   $$\forall p \geq 1, \qquad \lim_{N \to \infty} \E \frac{1}{N_a[2]} \Tr \left[ \left(\frac{A}{N_a[2]}\right)^p \right] = \int x^p \mathrm{d}\mathrm{MP}_{2c}(x).$$
\end{proposition}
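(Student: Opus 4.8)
The plan is to use the moment method: compute $\E \frac{1}{N_a[2]} \Tr[(A/N_a[2])^p]$ for fixed $p$ and show it converges to the $p$-th moment of $\mathrm{MP}_{2c}$, namely $\sum_{k=0}^{p-1} \frac{(2c)^k}{k+1}\binom{p}{k}\binom{p-1}{k}$. Writing $A = P_{a,2}\,\Tr_E(GG^*)\,P_{a,2}$ with $G$ a standard complex Gaussian tensor on $\mathcal{H}^{\otimes 2}\otimes\mathcal{H}_E$ (dimensions $N^2$ and $M$), I would expand $\Tr[(P_a \Tr_E(GG^*) P_a)^p]$ as a sum over indices and apply the Wick/Genus expansion for the Gaussian entries. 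Each Wick pairing of the $2p$ Gaussian factors (the $p$ copies of $G$ with the $p$ copies of $G^*$) contributes a power of $M$ and a power of $N$ determined by the cycle structure of the associated permutations, exactly as in the classical Wishart computation, but now with extra factors of $P_a = \tfrac12(I - F)$ inserted at each of the $p$ ``corners'' of the trace cycle.

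The key step is to expand each $P_a = \tfrac12(I-F)$ and track how the flip operators $F$ interact with the trace contractions. First I would observe that $P_a$ is a projector and that $A = P_a \Tr_E(GG^*) P_a$ lives on the range of $P_a$, so $A^p = (P_a \Tr_E(GG^*) P_a)^p = P_a (\Tr_E(GG^*) P_a)^p$; but rather than exploit idempotency too early, I find it cleaner to expand all $2p$ copies of $P_a$ (the ones flanking each Gaussian-pair vertex) into $2^{2p}$ terms indexed by subsets $S$ of positions where we choose $F$ rather than $I$, with sign $(-1)^{|S|}$. Each such term is a generalized trace (a ``trace invariant'') of $\Tr_E(GG^*)$ against a permutation built from the $F$'s and the cyclic trace structure; after Wick's theorem this becomes $M^{\#\text{loops in }E} \cdot N^{\#\text{loops in the }\mathcal{H}\text{-strands}}$ summed over pairings $\pi \in S_p$. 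Normalizing by $1/N_a[2]^{p+1} \sim (N^2/2)^{-(p+1)}$ and using $M = cN^2$, I would show that the dominant contribution comes from the non-crossing pairings $\pi$ together with a specific choice of the $F$-pattern that maximizes the exponent of $N$, and that the net effect of summing the $\pm$ signs over $F$-patterns is precisely to replace the Marčenko–Pastur parameter $c$ by $2c$ — intuitively because the two terms of each $P_a$, when they align favorably, reinforce rather than cancel, producing the factor $2^p$ that turns $c^k$ into $(2c)^k$.

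More concretely, I expect the surviving contributions to be parametrized by a non-crossing pair partition $\pi \in NC_2(p)$ (the usual combinatorics of Wishart moments, Narayana numbers), and for each block of $\pi$ one gets an independent binary choice from the nearby $F$'s contributing a factor $(1+1) = 2$ at leading order in $N$ once one checks that the ``$I$'' choice and the ``$F$'' choice give the same top power of $N$ and matching signs along a non-crossing structure (while crossing or otherwise suboptimal configurations are $O(1/N)$ suppressed). Summing, $\lim_N \E \tfrac1{N_a[2]}\Tr[(A/N_a[2])^p] = \sum_{\pi \in NC_2(p)} (2c)^{\#\text{blocks of }\pi \text{ meeting } E} = \int x^p\, \mathrm{d}\mathrm{MP}_{2c}(x)$, the last equality being the standard free-probability identity that the Marčenko–Pastur law of parameter $t$ has moments $\sum_{\pi\in NC(p)} t^{|\pi|}$ suitably phrased. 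Finally I would note a shortcut that essentially bypasses the hard combinatorics: since $A$ and $\Tr_E(GG^*)$ agree after compression by $P_a$, and $\Tr_E(GG^*)$ is (up to normalization by $N^2$ rather than $N_a[2]$) a Wishart matrix of shape $M/N^2 = c$, one can try to relate the compressed spectrum to the free compression of $\mathrm{MP}_c$ by a projection of rank fraction $N_a[2]/N^2 \to 1/2$; a rank-$\tfrac12$ free compression of $\mathrm{MP}_c$ is $\tfrac12\,\mathrm{MP}_{2c}$ rescaled, which after restoring the normalization $N_a[2]$ vs $N^2$ gives exactly $\mathrm{MP}_{2c}$ — but justifying asymptotic freeness between $P_a$ and the Wishart matrix (they are not independent, as $P_a$ is deterministic but $G$ is invariant enough) requires care.

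The main obstacle will be controlling the sign cancellations among the $2^{2p}$ terms coming from expanding the projectors: one must prove that the leading-$N$ terms do not cancel (they in fact add up to give $2c$ in place of $c$) and that all genuinely crossing or ``wrong-parity'' $F$-configurations are subleading. I would handle this by a careful Euler-characteristic / genus bookkeeping on the fattened graph (ribbon graph) associated to each term, showing the exponent of $N$ is $p+1 - 2g - (\text{defect from non-optimal }F\text{-pattern})$, and that for a fixed non-crossing $\pi$ the $F$-choices that keep the defect zero all carry the same sign, so they sum constructively. Alternatively, if the asymptotic-freeness route is taken, the obstacle shifts to verifying a second-order-free-type independence statement, which in this deterministic-projector setting follows from the left/right $U(N^2)$-invariance of $\Tr_E(GG^*)$ combined with the fact that $P_a$ is a fixed projection of asymptotic rank $1/2$; this is the cleaner argument and the one I would present, with the diagrammatic computation relegated to a remark.
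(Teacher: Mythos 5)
Your proposal is correct in spirit (particularly the shortcut at the end), but you miss the observation that makes this proposition essentially trivial, which is what the paper uses. Writing $A = \Tr_E(G_a G_a^*)$ with $G_a = P_{a,2} G$, the orthogonal projection of a vector with iid standard complex Gaussian entries onto a subspace is again a vector with iid standard complex Gaussian entries in that subspace. Hence $G_a$, viewed as a map $\mathcal H_E \to \wedge^2 \mathcal H$, is \emph{exactly} an $N_a[2]\times M$ Ginibre matrix, and $A$ restricted to its support $\wedge^2\mathcal H$ is \emph{exactly} a Wishart matrix of parameters $(N_a[2], M)$ — no asymptotics needed. The classical Mar\u cenko--Pastur theorem then applies verbatim, with parameter $M/N_a[2] \to 2c$. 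That is the entire proof.

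Your main proposed route — expanding $P_a = \tfrac12(I-F)$ at every corner and tracking $2^{2p}$ signed terms via a genus/ribbon-graph bookkeeping — is not actually carried out, and the claimed constructive summation of the $F$-choices along non-crossing pairings is precisely the kind of delicate sign cancellation analysis you flag as the ``main obstacle''; as written, it is a gap, not a proof. Your ``shortcut'' via free compression of $\mathrm{MP}_c$ by a rank-$1/2$ projection would work (asymptotic freeness of a fixed projection with a unitarily invariant Wishart ensemble is standard), and your numerology is right, but you worry about justifying freeness where in fact the exact Wishart structure makes any freeness argument unnecessary: the spectrum is not approximately but exactly that of a Wishart matrix of the stated parameters, for every finite $N$. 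Recognizing this exact identification is what the paper's one-line proof rests on, and it is strictly more elementary than either of your two routes.
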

\begin{proof}
After restricting it to its support ($\Sym_2(\mathbb C^N)$), the random matrix $W$ is indeed a Wishart matrix of parameters $(N_a[2], M)$. The result follows from the standard theorem of Mar\u{c}enko and Pastur, see \cite{marcenko1967distribution}, \cite[Theorem 3.6]{bai2010spectral}, or \cite[Proposition 2.4]{dartois2020joint}. The parameter of the limiting distribution is given by 
$$\lim_{N \to \infty} \frac{M}{N_a[2]} = 2c.$$
\end{proof}

In order to understand the entanglement properties of the fermionic random states we just introduced, we proceed by first studying the average of their partial transposition. We show that $A^{\Gamma}$ has, on average, a negative eigenvalue, on a larger scale than the rest of the spectrum, for all $c>0$ (in the asymptotic regime where $M \sim cN^2$). Before we commence, recall that the \emph{maximally entangled state} is the unit norm vector
$$\ket \Omega = \frac{1}{\sqrt N} \sum_{i=1}^N \ket{ii} \in \mathbb C^N \otimes \mathbb C^N,$$
where $\{\ket i \}_{i \in [N]}$ is the canonical orthonormal basis of $\mathbb C^N$. We also write $\omega = \ketbra{\Omega}{\Omega}$ for the corresponding density matrix. 
\begin{proposition}\label{prop:fermionic-large-eigenvalue}
   The average of the partial transpose $A^{\Gamma}$ is given by
   \begin{equation}
       \E A^{\Gamma}=MP_{a,2}=\frac{M}{2}I_{N^2}-\frac{MN}{2} \omega.
   \end{equation}
   The eigenvalues of $\E A^{\Gamma}$ are as follows:
   \begin{equation}
       \mathrm{spec}(\E A^{\Gamma})=\begin{cases} -\frac{M(N-1)}{2} & \textrm{ with multiplicity } 1 \\
       \frac{M}{2} & \textrm{with multiplicity } N^2-1.
       \end{cases}
   \end{equation}
\end{proposition}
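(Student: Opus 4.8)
The plan is to reduce everything to the elementary average of a Wishart matrix together with the action of partial transposition on the flip operator. First I would observe that expectation is linear and commutes with the linear maps $X \mapsto P_{a,2} X P_{a,2}$ and $X \mapsto X^\Gamma$, so it suffices to compute $\E[\Tr_E(GG^*)]$. Viewing the standard complex Gaussian tensor $G$ as an $N^2 \times M$ matrix with i.i.d.\ entries normalized so that $\E|G_{ijk}|^2 = 1$, a one-line Wick computation gives $\E[\Tr_E(GG^*)] = M\, I_{N^2}$. Combined with the idempotency $P_{a,2}^2 = P_{a,2}$ and Eq.~\eqref{eq:fermionic-Wishart}, this yields $\E A = P_{a,2}\,(M I_{N^2})\,P_{a,2} = M\, P_{a,2}$.

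Next I would compute $(P_{a,2})^\Gamma$. Writing $P_{a,2} = \tfrac12(I_{N^2} - F)$ with $F$ the flip operator, and using $I_{N^2}^\Gamma = I_{N^2}$, the only nontrivial ingredient is $F^\Gamma$. Expanding $F = \sum_{i,j} \ketbra{i}{j} \otimes \ketbra{j}{i}$ and transposing the second leg gives $F^\Gamma = \sum_{i,j} \ketbra{i}{j} \otimes \ketbra{i}{j} = \sum_{i,j}\ketbra{ii}{jj} = N\omega$. Hence $(P_{a,2})^\Gamma = \tfrac12(I_{N^2} - N\omega)$, and therefore $\E A^\Gamma = M\,(P_{a,2})^\Gamma = \tfrac{M}{2} I_{N^2} - \tfrac{MN}{2}\,\omega$, which is the claimed identity.

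Finally, the spectrum is read off immediately. Since $\omega = \ketbra{\Omega}{\Omega}$ is a rank-one orthogonal projector, on its range $\E A^\Gamma$ acts as the scalar $\tfrac{M}{2} - \tfrac{MN}{2} = -\tfrac{M(N-1)}{2}$ (multiplicity $1$), and as $\tfrac{M}{2}$ on the orthogonal complement (multiplicity $N^2 - 1$). I do not expect any real obstacle here: the only points requiring a little care are fixing the normalization convention for the standard Gaussian tensor (which produces the overall factor $M$) and the bookkeeping identity $F^\Gamma = N\omega$ — equivalently $\omega^\Gamma = F/N$ — which is precisely the standard relation between the flip and the maximally entangled state under partial transposition.
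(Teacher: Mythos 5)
Your proof is correct and follows essentially the same route as the paper: compute $\E A = M P_{a,2}$ via Wick, then use $F^\Gamma = N\omega$ to obtain $\E A^\Gamma = \tfrac{M}{2} I_{N^2} - \tfrac{MN}{2}\omega$, and read off the spectrum from the rank-one projector $\omega$. You merely make one intermediate step slightly more explicit (factoring through $\E[\Tr_E(GG^*)] = M I_{N^2}$ and the idempotency of $P_{a,2}$), which is fine and matches the paper's computation $\E A = \tfrac{M}{2}(I_{N^2}-F) = M P_{a,2}$.
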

\begin{proof}
The proof is a direct computation. Using Wick-Isserlis theorem we compute $\E A=\frac{M}{2}(I_{N^2}-F)$, where $F$ is the flip (or swap) operator
$$F=\sum_{i,j=1}^N\ketbra{ij}{ji}.$$
Applying the partial transpose to the flip operator, we deduce that
$$F^{\Gamma}=\sum_{i,j=1}^N(\ketbra{ij}{ji})^{\Gamma}=\sum_{i,j=1}^N\ketbra{ii}{jj}=N\ketbra{\Omega}{\Omega} = N \omega.$$
Writing $P_{a,2} = (I-F)/2$ yields the formula for $\E A^\Gamma$.
The exact form of the spectrum follows from the fact that $\omega$ is a rank-one projection.
\end{proof}
We take away from the above proof that the maximally entangled vector $\lvert \Omega \rangle$ is an eigenvector of $\E A^{\Gamma}$ with a negative eigenvalue for all $N\ge2, c>0$, asymptotically of the form $-\frac{c}{2}N^3+O(N^2)$. The next theorem shows that, the same holds asymptotically for the random matrix $A^\Gamma$.

\begin{theorem}
The maximally entangled vector $\lvert \Omega \rangle$ is an approximate eigenvector of the random matrix $N^{-3}A^\Gamma$ with corresponding approximate eigenvalue $-c/2$: if  
\begin{equation}
    \delta_N:=\|N^{-3}A^\Gamma\lvert \Omega \rangle+c/2\lvert \Omega \rangle \|^2,
\end{equation}
then $\lim_{N\rightarrow\infty}\E\delta_N=0$. In particular, the random variable $\delta_N$ converges in probability towards zero as $N\rightarrow \infty$.
\end{theorem}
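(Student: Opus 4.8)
The plan is to split $A^\Gamma$ into its mean and a centered fluctuation, notice that the mean already produces the claimed approximate eigenvalue, and then control the fluctuation by a second-moment computation. First I would set $B := A^\Gamma - \E A^\Gamma$, so that $\E B = 0$, and recall from Proposition~\ref{prop:fermionic-large-eigenvalue} that $\E A^\Gamma = \tfrac{M}{2} I_{N^2} - \tfrac{MN}{2}\omega$, hence $\E A^\Gamma \ket\Omega = \tfrac{M}{2}(1-N)\ket\Omega$ (using $\omega \ket\Omega = \ket\Omega$). With $M = cN^2$ this gives $N^{-3}\E A^\Gamma \ket\Omega + \tfrac{c}{2}\ket\Omega = \tfrac{c}{2N}\ket\Omega$, a vector of norm $O(1/N)$ (the same holds, with a slightly different error, when only $M \sim cN^2$). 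Therefore
\[
N^{-3} A^\Gamma \ket\Omega + \tfrac{c}{2}\ket\Omega = \tfrac{c}{2N}\ket\Omega + N^{-3} B \ket\Omega .
\]
Since $A$ is self-adjoint, so are $A^\Gamma$ and $B$; taking squared norms and then expectations, and noting that the cross term has expectation $\tfrac{c}{2N}N^{-3}\langle\Omega| \E B|\Omega\rangle = 0$, I obtain
\[
\E\delta_N = \frac{c^2}{4N^2} + N^{-6}\, \E\langle\Omega| B^2|\Omega\rangle .
\]
Everything thus reduces to showing $\E\langle\Omega| B^2|\Omega\rangle = o(N^6)$, and I will in fact aim for the bound $O(N^4)$.

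To estimate this second moment I would expand $A = P_{a,2} W P_{a,2}$ with $W := \Tr_E(GG^*)$ and $P_{a,2} = \tfrac12(I-F)$. Expressing $\langle\Omega| (A^\Gamma)^2|\Omega\rangle = \tfrac{1}{N}\sum_{i,j,k,l} A_{il,ki}\,A_{kj,jl}$ (the partial transposes having crossed the second tensor leg) and substituting the four-term expansion of each $P_{a,2}$, one gets a finite linear combination of sums $\tfrac{1}{16N}\sum_{i,j,k,l}\E[W_\bullet W_\bullet]$. The Wick--Isserlis formula applied to each pair of $W$-factors yields two pairings. The ``disconnected'' pairing factors as $\E[W_\bullet]\,\E[W_\bullet]$ and, once resummed, reconstructs exactly $\langle\Omega| (\E A^\Gamma)^2|\Omega\rangle$, which cancels against the subtracted mean; hence $\E\langle\Omega| B^2|\Omega\rangle$ is the sum of the ``connected'' contributions alone. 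The connected pairing carries the usual single free environment index, contributing a factor $M = cN^2$, together with a product of Kronecker deltas identifying the ``left'' index block of the first $W$-factor with the ``right'' block of the second, and vice versa.

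The last step is to bound these connected contributions combinatorially. In every term the first $W$-factor carries indices drawn from $\{i,k,l\}$ with the label $i$ occurring — on account of the partial transpose — in \emph{both} halves of its index block, while the second factor carries indices from $\{j,k,l\}$ with $j$ in both halves; the connected contraction, matching positions of the first factor to positions of the second, then forces $i$ to equal one of $j,k,l$. So at most $N^3$ of the $N^4$ index configurations survive, each term is $O(MN^3/N) = O(cN^4)$, and summing the finitely many terms gives $\E\langle\Omega| B^2|\Omega\rangle = O(N^4)$. Plugging this back yields $\E\delta_N = \tfrac{c^2}{4N^2} + O(N^{-2}) \to 0$, and $\delta_N \to 0$ in probability then follows by Markov's inequality.

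I expect the main obstacle to be precisely this combinatorial bookkeeping of the connected pairing: one must check carefully that none of the sixteen terms coming from the two antisymmetrizers leaves all four summation indices free — equivalently, that the flips inside $A$ and the partial transpose cannot conspire to make $\ket\Omega$ an \emph{exact} eigenvector of the fluctuation at leading order. It is worth stressing that the cheap bound $\E\langle\Omega| B^2|\Omega\rangle \le \E\|B\|_2^2 = \E\Tr[(A - \E A)^2]$ is not good enough here, since its right-hand side is of exact order $N^6$; the gain of two powers of $N$ comes from testing the fluctuation against the single vector $\ket\Omega$, which is (approximately) its top eigendirection.
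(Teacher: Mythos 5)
Your proposal is correct and follows essentially the same route as the paper: both proofs reduce the claim to the Wick--Isserlis expansion of $\E\langle\Omega|(A^\Gamma)^2|\Omega\rangle$, identify the disconnected pairing as $\langle\Omega|(\E A^\Gamma)^2|\Omega\rangle$, and show that the connected pairing contributes only $O(N^4)$. The only real difference is how the connected term is handled: the paper evaluates it in closed form as $\tfrac{M}{N}\Tr\bigl[(\Tr_1 P_{a,2})^2\bigr]$ using $\Tr_1 P_{a,2}=\tfrac12(N-1)I_N$, giving $\tfrac{M(N-1)^2}{4}\sim\tfrac{c}{4}N^4$, whereas you bound it combinatorially by counting free indices. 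Your counting argument is sound --- in every one of the sixteen terms coming from the antisymmetrizers, the crossing induced by the partial transpose places the summation label $i$ in both halves of the first $W$-block and $j$ in both halves of the second, and since a connected Wick pairing matches the two index blocks crosswise, at least one Kronecker delta identifying $i$ with one of $j,k,l$ always survives; so no term can have all four indices free, and the $O(N^3)$ count (hence $O(N^4)$ for the connected contribution) holds uniformly. The mean-plus-fluctuation reorganization is a clean way to see where the two extra powers of $N$ come from, at the small cost of not producing the exact subleading constant that the paper's computation yields.
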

\begin{proof}
We start with a more explicit formula of $\delta_N$:
\begin{equation}
    \delta_N=N^{-6}\langle \Omega\rvert (A^\Gamma)^2\lvert \Omega\rangle+cN^{-3}\langle \Omega \rvert A^{\Gamma} \lvert \Omega \rangle+\frac{c^2}{4}.
\end{equation}
Computing the average, we obtain, thanks to Proposition \ref{prop:fermionic-large-eigenvalue},
\begin{equation}
    \E\delta_N=N^{-6}\E\langle \Omega\rvert (A^\Gamma)^2\lvert \Omega\rangle-\frac{c^2}{4}+o(1).
\end{equation}
We then use Wick-Isserlis theorem to compute $\E\langle \Omega\rvert (A^\Gamma)^2\lvert \Omega\rangle$, and we find an expression involving the partial trace $\Tr_1 P_{a,2}$,
\begin{align*}
    \E\langle \Omega\rvert (A^\Gamma)^2\lvert \Omega\rangle&= M^2\langle \Omega | (P_{a,2}^\Gamma)^2| \Omega \rangle +  \frac{M}{N} \Tr\left[ (\Tr_1 P_{a,2})^2 \right] \\
&=\frac{M^2}{N}\Tr\left[ (\Tr_1 P_{a,2})^2 \right]+\frac{M}{N} \Tr\left[ (\Tr_1 P_{a,2})^2 \right]\\
\end{align*}
We compute $\Tr_1(P_{a,2})=\frac12(N-1)I_N$ from which we deduce
\begin{align*}
    \E\langle \Omega\rvert (A^\Gamma)^2\lvert \Omega\rangle&=\left[ \frac{c^2}{4}N^6+\frac{c}{4}N^4\right]\left( 1-\frac2{N}+\frac1{N^2}\right)=\frac{c^2}{4}N^6+O(N^5),
\end{align*}
showing that $\E\delta_N\to 0$. We then use Markov's inequality to bound the probability that $\delta_N$ is positive. Indeed, for all $a>0$, we have
\begin{equation}
    \mathbb{P}(\delta_N\ge a)\le \frac1{a}\E\delta_N \to 0,
\end{equation}
concluding the proof.
\end{proof}
These two results are here sufficient to conclude that the random matrix $A^{\Gamma}$ is not positive semi-definite, as indeed it has a negative eigenvalue at scale $N^3$. As a consequence, the corresponding normalized random state $\rho^{(A)}$ is entangled. More precisely, we have the corollary
\begin{corollary}\label{cor:fermionic-threshold}
Consider a sequence $(\rho^{(A)}_N)_{N\ge2}$ of density matrices from the fermionic induced ensemble of parameters $(N,M,2)$, with $M=cN^2$. Then, for all $c>0$,
\begin{equation}
    \lim_{N\rightarrow \infty} \mathbb{P}[(\rho^{(A)}_N)^\Gamma\ge 0]=0,
\end{equation}
hence,
\begin{equation}
    \lim_{N\rightarrow \infty} \mathbb{P}[\rho^{(A)}_N \textrm{ is entangled}\,]=1.
\end{equation}
\end{corollary}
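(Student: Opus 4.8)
The plan is to deduce Corollary \ref{cor:fermionic-threshold} directly from the two preceding results, namely Proposition \ref{prop:fermionic-large-eigenvalue} (the average of $A^\Gamma$ has a negative eigenvalue at scale $N^3$, with eigenvector $\ket\Omega$) and the theorem stating that $\ket\Omega$ is an approximate eigenvector of $N^{-3}A^\Gamma$ with approximate eigenvalue $-c/2$, in the sense that $\E\delta_N\to 0$, hence $\delta_N\to 0$ in probability. First I would observe that if $\rho$ is a positive semidefinite operator and $v$ a unit vector, then $\langle v|\rho|v\rangle\geq 0$; contrapositively, finding a test vector with negative expectation certifies non-positivity. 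So it suffices to show that, with probability tending to $1$, $\langle\Omega|\,(\rho^{(A)}_N)^\Gamma\,|\Omega\rangle<0$.

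Next I would make the connection between $\rho^{(A)}=A/\Tr A$ and $A$ explicit at the level of partial transpositions: since the partial transpose is linear, $(\rho^{(A)})^\Gamma = A^\Gamma/\Tr A$, and $\Tr A = \Tr A^\Gamma>0$ almost surely (it is a Wishart-type trace), so $(\rho^{(A)})^\Gamma\geq 0$ if and only if $A^\Gamma\geq 0$. Thus the first displayed limit in the statement, $\mathbb P[(\rho^{(A)}_N)^\Gamma\geq 0]\to 0$, is equivalent to $\mathbb P[A^\Gamma\geq 0]\to 0$. To establish the latter, I would use the event $\{\delta_N<c^2/16\}$, say (any fixed threshold strictly below $c^2/4$ works): on this event, $\|N^{-3}A^\Gamma\ket\Omega + (c/2)\ket\Omega\|<c/4$, and since $\|\ket\Omega\|=1$ this forces $\langle\Omega|N^{-3}A^\Gamma|\Omega\rangle < -c/2 + c/4 = -c/4 < 0$ by Cauchy--Schwarz (write $N^{-3}\langle\Omega|A^\Gamma|\Omega\rangle = -c/2 + \langle\Omega|(N^{-3}A^\Gamma\ket\Omega+(c/2)\ket\Omega)\rangle$ and bound the last term in absolute value by $\sqrt{\delta_N}$). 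Hence on this event $A^\Gamma$ has a negative expectation against $\ket\Omega$, so $A^\Gamma\not\geq 0$, so $(\rho^{(A)}_N)^\Gamma\not\geq 0$. Since $\mathbb P[\delta_N\geq c^2/16]\to 0$ by the convergence in probability already proven, we get $\mathbb P[(\rho^{(A)}_N)^\Gamma\geq 0]\to 0$.

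Finally, the second conclusion is immediate from the PPT criterion recalled in Section \ref{sec:first-def}: a separable state has positive partial transpose, so $\{(\rho^{(A)}_N)^\Gamma\not\geq 0\}\subseteq\{\rho^{(A)}_N\text{ is entangled}\}$, whence $\mathbb P[\rho^{(A)}_N\text{ is entangled}]\geq \mathbb P[(\rho^{(A)}_N)^\Gamma\not\geq 0]= 1 - \mathbb P[(\rho^{(A)}_N)^\Gamma\geq 0]\to 1$.

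There is essentially no obstacle here: all the analytic work (the second moment computation via Wick--Isserlis, the partial-trace identity $\Tr_1 P_{a,2} = \tfrac12(N-1)I_N$, and Markov's inequality) has already been carried out in the two results preceding the corollary. The only mild subtlety is the bookkeeping passage from $A$ to the normalized state $\rho^{(A)}$ — one must note that normalizing by the positive scalar $\Tr A$ does not affect the sign of any eigenvalue of the partial transpose — and the choice of a fixed cutoff strictly between $0$ and $c^2/4$ so that the Cauchy--Schwarz bound yields a strictly negative expectation with probability tending to one. Both are routine.
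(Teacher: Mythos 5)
Your proof is correct and fills in the details of exactly the argument the paper only sketches (it states the corollary immediately after the approximate-eigenvector theorem with the remark that "these two results are here sufficient to conclude"). The Cauchy--Schwarz step to pass from $\delta_N$ small to a strictly negative Rayleigh quotient, and the observation that normalizing by $\Tr A>0$ preserves signs, are precisely the routine bookkeeping the paper leaves implicit.
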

This is in contrast with the result of Aubrun \cite{aubrun2012partial} who considered random distinguishable states. In his framework, it was shown that the analog of the above statement is valid if and only if the environmental Hilbert space is large enough (\textit{i.e.} for $c>4$). What we show here is that for states of fermions, typical states are always entangled, this does not depend on the size of the environment as long as it is commensurable with the two particle Hilbert space $\wedge^2\mathcal{H}$ size.  In the next part we study the same problem for random  states of bosons. Bosonic states prove to be considerably more difficult to study and the rest of this paper is devoted to them.

\section{The bosonic induced ensemble}\label{sec:bosonic}

In this section, we shall consider a setting analog to the one of the previous section \ref{sec:fermionic}, only now the system Hilbert space has a \emph{symmetric tensor product structure}. In other words, we shall start from a random Gaussian tensor $G_s  \in \Sym_r(\mathcal{H})\otimes \mathcal{H}_E$. In practice, we shall use the identification 
\begin{equation}\label{eq:sym-subspace-range-Psym}
\Sym_r (\mathcal{H}) \cong \operatorname{Ran} (P_{s,r}) \subseteq \mathcal{H}^{\otimes r},   
\end{equation}
where we recall that $P_{s}$ is the orthogonal projection on the symmetric subspace of $\mathcal H \otimes \mathcal H$. The tensor factor $\mathcal{H}_E$ is again the environment Hilbert space. These random vectors $G_s$ can be seen as (un-normalized) bosonic states on two copies of $\mathcal H$ in contact with an environment $\mathcal H_E$. Similarly, we will be interested in the marginals obtained by tracing out the environmental degrees of freedom. We study the corresponding random matrix $W=\Tr_E(G_sG_s^*)$. Again, using the identification from \eqref{eq:sym-subspace-range-Psym}, we can also write
\begin{equation}\label{eq:def-W}
W=P_{s,r}\Tr_E(G G^*)P_{s,r},    
\end{equation}
where $G$ is a random (complex) Gaussian vector in the Hilbert space $\mathcal{H}^{\otimes r}\otimes \mathcal{H}_E$; see Figure \ref{fig:W} for a pictorial representation of the random matrix $W$ at $r=2$. Normalizing matrices $W$ yields random density matrices which have a \emph{symmetry} constraint. 

\begin{definition}\label{def:bosonic-induced}
The \emph{bosonic induced ensemble} of parameters $(N,M,r)$ is the set of random bipartite density matrices of size $N^r$ defined by
$$\rho = \frac{W}{\Tr W},$$
where $W$ is the symmetrized Wishart random matrix from \eqref{eq:def-W}. Matrices from the bosonic induced ensemble are supported on the symmetric subspace of $(\mathbb{C}^N)^{\otimes r}$.
\end{definition}

\begin{figure}
    \centering
    \includegraphics{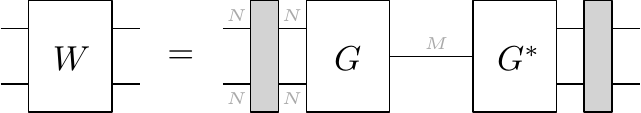}
    \caption{A graphical representation of the symmetrized Wishart matrix $W$ that we shall study in this work. The grey boxes surrounding the matrix $GG^*$ are symmetrizers.}
    \label{fig:W}
\end{figure}

Mimicking the construction of Section \ref{sec:fermionic}, we set the dimension of the Hilbert spaces as follows
\begin{align*}
    \dim \mathcal{H}&=:N \\
    \dim \mathcal{H}_E&:=M.
\end{align*}

Doing so, the limiting eigenvalue distribution of the properly normalized random matrix $W$ is the same as that of a non-symmetrized Wishart and also an anti-symmetrized matrix. 

\begin{proposition}\label{prop:moments-W}
   The random matrix $W$ converges in moments, as $N \to \infty$, $M \sim cN^2$, to a  Mar\u{c}enko-Pastur distribution of parameter $2c$:
   $$\forall p \geq 1, \qquad \lim_{N \to \infty} \E \frac{1}{N_s[2]} \Tr \left[ \left(\frac{W}{N_s[2]}\right)^p \right] = \int x^p \mathrm{d}\mathrm{MP}_{2c}(x).$$
\end{proposition}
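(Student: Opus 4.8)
The plan is to mirror the proof of Proposition~\ref{prop:moments-A}, exploiting the fact that $W$, restricted to its support $\Sym_2(\mathbb C^N)$, is an honest Wishart matrix. First I would note that $G \in \mathcal H^{\otimes 2} \otimes \mathcal H_E$ is a standard complex Gaussian tensor, and $P_{s,2}$ is an orthogonal projection of rank $N_s[2] = \binom{N+1}{2}$. Conjugating a Ginibre matrix by an orthogonal projection on the left amounts to selecting $N_s[2]$ rows (in a basis adapted to $\operatorname{Ran} P_{s,2}$), and since the entries of $G$ are i.i.d.\ complex Gaussians, the resulting $N_s[2] \times M$ matrix $\tilde G := P_{s,2}\, G$ (viewed as a map $\mathcal H_E \to \operatorname{Ran} P_{s,2}$) is again a standard complex Ginibre matrix of size $N_s[2] \times M$. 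Consequently $W = \tilde G \tilde G^*$ is, on $\operatorname{Ran} P_{s,2}$, a complex Wishart matrix with parameters $(N_s[2], M)$, and it is identically zero on the orthogonal complement.

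Next I would invoke the Mar\u{c}enko--Pastur theorem in the form cited in the excerpt (\cite{marcenko1967distribution}, \cite[Theorem 3.6]{bai2010spectral}, or \cite[Proposition 2.4]{dartois2020joint}): the empirical spectral distribution of $W / N_s[2]$, restricted to its support, converges in moments to $\mathrm{MP}_{c'}$ where $c' = \lim_{N\to\infty} M / N_s[2]$. Since $M \sim cN^2$ and $N_s[2] = \binom{N+1}{2} = \tfrac{N(N+1)}{2} \sim \tfrac{N^2}{2}$, we get $c' = 2c$, giving the claimed parameter. The only subtlety in matching the statement exactly is that the proposition normalizes the trace by $\tfrac{1}{N_s[2]}$, i.e.\ it computes moments of the empirical distribution on the support of $W$ — there is no contribution from the kernel because $W$ vanishes there, so the normalization is consistent and no correction term appears. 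I would also record that convergence in moments here is equivalent to, and can be upgraded from, the standard almost-sure weak convergence, since all moments of $\mathrm{MP}_{2c}$ are finite and the distribution is determined by its moments (it has compact support).

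The main obstacle — really the only point requiring care — is the reduction in the first step: one must justify that $P_{s,2} G$ has the law of a standard Ginibre matrix of the reduced dimension. This is a basis-change argument: pick a unitary $U$ on $\mathcal H^{\otimes 2}$ sending $\operatorname{Ran} P_{s,2}$ to the span of the first $N_s[2]$ coordinate vectors; by unitary invariance of the standard complex Gaussian measure, $(U \otimes \id_{\mathcal H_E}) G$ has the same law as $G$, and $P_{s,2} G$ then becomes literally the truncation to the first $N_s[2]$ rows, whose entries are i.i.d.\ standard complex Gaussians. Everything else is a direct citation. (One should be slightly careful about the normalization convention for the complex Gaussian — variance $1$ versus variance $1/2$ per real component — but this only rescales $W$ by a global constant that is absorbed into the $N_s[2]$ normalization and does not affect the limiting distribution, exactly as in Proposition~\ref{prop:moments-A}.) I expect the authors' proof to be essentially one sentence: restrict $W$ to its support, observe it is Wishart of parameters $(N_s[2], M)$, apply Mar\u{c}enko--Pastur, and compute $\lim M/N_s[2] = 2c$.
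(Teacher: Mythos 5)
Your argument is correct and is essentially the paper's proof, just spelled out in more detail: the paper likewise restricts $W$ to its support $\Sym_2(\mathbb C^N)$, observes it is a Wishart matrix of parameters $(N_s[2], M)$, cites Mar\u{c}enko--Pastur, and computes $\lim_{N\to\infty} M/N_s[2] = 2c$. The unitary-invariance justification that $P_{s,2}G$ is again a standard Ginibre matrix, which the paper leaves implicit, is a welcome addition.
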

\begin{proof}
After restricting it to its support ($\Sym_2(\mathbb C^N)$), the random matrix $W$ is indeed a Wishart matrix of parameters $(N_s[2], M)$. The result follows from the standard theorem of Mar\u{c}enko and Pastur, see \cite{marcenko1967distribution}, \cite[Theorem 3.6]{bai2010spectral}, or \cite[Proposition 2.4]{dartois2020joint}. The parameter of the limiting distribution is given by 
$$\lim_{N \to \infty} \frac{M}{N_s[2]} = 2c.$$
\end{proof}

\begin{remark}
   The exact moments of order $p$ of the random matrix $W$ can be computed in different manners, either as a sum over permutations, or as a sum over bipartite combinatorial maps 
   with one black vertex and $p$ edges $\mathcal{M}=(\alpha, \gamma)$ where $\alpha$ is a permutation of $S_p$ and $\gamma = (1 \, 2 \, 3\, \cdots \, p)$ is the full cycle:
   \begin{align*}
       \E  \Tr \left[ \left(\frac{W}{N_s[2]}\right)^p \right] &= \sum_{\alpha \in S_p} N_s[2]^{\# (\gamma^{-1}\alpha)} M^{\#\alpha} \\
       &= \sum_{\mathcal{M}}N_s[2]^{F(\mathcal{M})}M^{V_{\circ}(M)},
   \end{align*}
   where $F(\mathcal{M})$ is the number of faces of $\mathcal{M}$ and $V_{\circ}(\mathcal{M})$ its number of white vertices. We refer the reader to \cite[Section 2]{dartois2020joint} for the dictionary between these two approaches. 
\end{remark}

\begin{remark}
   The asymptotic moments from Proposition \ref{prop:moments-W} correspond to sums over \emph{geodesic permutations} $\alpha$, or, equivalently, over \emph{planar bipartite maps, or dessins d'enfants,} $\mathcal M$ with one black vertex. The limiting moments are values of the {Narayana polynomial}:
   $$\mathrm{Nar}(p, 2c) = \int x^p \mathrm{d}\mathrm{MP}_{2c}(x) = \sum_{k=0}^{p-1}\frac{1}{(k+1)(2c)^k} \binom{p-1}{k}\binom{p}{k}.$$
\end{remark}

\section{The partial transposition --- statement of main results}\label{sec:main-result}

In this section we state the main result of this work, the characterization of the limiting spectrum of the \emph{partial transposition} of the random matrix $W$, in the large $N$ limit. Recall that $W$ is a $N^2 \times N^2$ random matrix, supported on the symmetric subspace $\Sym_2(\mathbb C^N) \subseteq \mathbb C^N \otimes \mathbb C^N$. Its partial transpose is defined as the action of the transposition operator on the second tensor factor: 
\begin{equation}\label{eq:def-W-Gamma}
    W^\Gamma = [\mathrm{id} \otimes \mathrm{transp}](W).
\end{equation}

The next theorem is the main result of our paper, providing a description of the spectrum of $W^\Gamma$ in the large $N$ limit. Interestingly, $W^\Gamma$ has a large eigenvalue of order $N^3$, and $N^2-1$ eigenvalues of order $N^2$.

\begin{theorem}\label{thm:main}
Let $\lambda_1\ge \lambda_2\ge \ldots \lambda_{N^2}$ be the eigenvalues of the random matrix $W^\Gamma$ from Eq.~\eqref{eq:def-W-Gamma}. Then, in the asymptotic regime $N \to \infty$ and $M \sim cN^2$,
\begin{itemize}
    \item $N^{-3}\lambda_1 \rightarrow \frac{c}{2}$ in probability
    \item The empirical distribution $\sum_{i=2}^{N^2}\delta_{N^{-2}\lambda_i}$ converges in moments to a semi-circular distribution of mean $c/2$ and variance $c/4$. 
\end{itemize}
\end{theorem}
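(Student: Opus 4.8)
The plan is to attack both parts via the moment method applied to $W^\Gamma$, exploiting two structural features: first, that $\E W^\Gamma$ has an explicit rank-one outlier coming from the maximally entangled vector $\ket\Omega$ (by exact analogy with Proposition~\ref{prop:fermionic-large-eigenvalue}, since $P_{s,2}=(I+F)/2$ and $F^\Gamma=N\omega$, one gets $\E W^\Gamma = MP_{s,2}^\Gamma = \frac{M}{2}I + \frac{MN}{2}\omega$, whose spectrum is $\frac{M(N+1)}{2}$ with multiplicity $1$ and $\frac{M}{2}$ with multiplicity $N^2-1$); and second, that after projecting away this outlier the remaining operator is a genuinely self-adjoint random matrix whose normalized moments should converge. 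So the two bullets correspond to (a) isolating the outlier and (b) computing the limiting bulk moments.

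For the outlier, I would mimic the fermionic argument verbatim: show that $\ket\Omega$ is an approximate eigenvector of $N^{-3}W^\Gamma$ with approximate eigenvalue $c/2$, by bounding $\E\|N^{-3}W^\Gamma\ket\Omega - \tfrac{c}{2}\ket\Omega\|^2$. This reduces, via Wick--Isserlis, to evaluating $\E\bra\Omega (W^\Gamma)^2\ket\Omega$ in terms of $\Tr[(\Tr_1 P_{s,2})^2]$, using $\Tr_1 P_{s,2} = \frac{N+1}{2}I_N$ (the symmetric analog of $\Tr_1 P_{a,2}=\frac{N-1}{2}I_N$, itself a special case of Lemma~\ref{lem:partial-trace-P-sym}). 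The leading term is $\frac{c^2}{4}N^6$, matching $(\frac{c}{2})^2 N^6$, so the error vanishes; Markov's inequality then upgrades to convergence in probability. To conclude $N^{-3}\lambda_1 \to c/2$ rather than merely "there is an eigenvalue near $c/2 N^3$", I would combine this with an a priori bound showing all other eigenvalues are $O(N^2)$ — e.g. by controlling $\E\Tr[(W^\Gamma)^2] = O(N^8\cdot?)$ wait, more carefully: bound the operator norm of $W^\Gamma$ restricted to $\ket\Omega^\perp$, or simply note that the bulk moment computation in part (b) already forces the remaining spectrum onto the $N^2$ scale.

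For the bulk, the moments $\frac{1}{N^2}\E\Tr[(N^{-2}W^\Gamma)^p]$, after subtracting the outlier's contribution $N^{-2}(N^{-2}\lambda_1)^p \to 0$, should converge to $\int x^p\,d\nu(x)$ where $\nu$ is the semicircle of mean $c/2$, variance $c/4$. Expanding $\Tr[(W^\Gamma)^p]$ by Wick--Isserlis produces a sum over pairings/permutations, but the presence of the two symmetrizers $P_{s,2}$ per copy of $W$ and the partial transposition means each term is a generalized trace of a product of $P_{s,2}$'s along a graph — and this is precisely where the circuit-counting / interlace polynomial machinery announced in the introduction (Sections~\ref{sec:graph-poly}--\ref{sec:moments}) enters: one rewrites these tensor contractions as evaluations of a graph polynomial on a directed graph built from the pairing. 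The dominant contributions in the large-$N$ limit come from a restricted family of "planar-like" configurations; identifying exactly which configurations survive, and checking that their weighted count reproduces the moments of the shifted semicircle $\frac{c}{2} + \frac{\sqrt c}{2}\cdot(\text{standard semicircle})$, is the crux. Note the shifted-semicircle shape is natural here: $W$ itself has Mar\v{c}enko--Pastur$(2c)$ moments (Proposition~\ref{prop:moments-W}), and partial transposition of a Wishart-type matrix classically turns Mar\v{c}enko--Pastur into a (shifted, rescaled) semicircle à la Aubrun; the symmetrization rescales parameters from $2c$-type to $c$-type, consistent with the variance $c/4 = (\sqrt c/2)^2$ and mean $c/2$.

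\textbf{Main obstacle.} The hard part is the bulk moment computation — specifically, extracting the leading-order asymptotics of the Wick sum when every propagator is dressed by symmetric projections. Unlike the unsymmetrized case, a single pairing no longer contributes a clean power of $N$; instead it contributes a polynomial in $N$ whose degree is governed by the circuit structure of an associated directed graph, and one must prove that (i) the maximal degree is exactly the one producing the $N^2$ bulk scaling, (ii) the configurations achieving it are enumerated by a Catalan-type / Narayana-type count, and (iii) the sub-leading and "non-geodesic" terms genuinely drop out. Controlling this via the circuit-counting polynomial — in particular getting matching upper and lower bounds on its relevant evaluations (the "bounds" of Section~\ref{sec:bounds}) — is where essentially all the technical weight of the paper will sit.
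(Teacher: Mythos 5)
Your identification of the outlier and the circuit-counting machinery is on target, and the approximate-eigenvector argument for $\ket\Omega$ does appear in the paper (Theorem~\ref{thm:eigenvec-convergence}). For the convergence $N^{-3}\lambda_1 \to c/2$, the paper instead sandwiches $\langle\Omega|N^{-3}W^\Gamma|\Omega\rangle^{2p} \le \lambda_{\max}^{2p}(N^{-3}W^\Gamma) \le \Tr[(N^{-3}W^\Gamma)^{2p}]$, with both bounds tending to $(c/2)^{2p}$ by Proposition~\ref{prop:moments-overlap} and Theorem~\ref{thm:moments-WGamma-3-asympt}; this cleanly gives both upper and lower bounds on $\lambda_1$ without needing any a priori control of the bulk, so it avoids the circularity you gesture at.

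The genuine gap is in your bulk argument. You write that $\frac{1}{N^2}\E\Tr[(N^{-2}W^\Gamma)^p]$ converges after subtracting the outlier's contribution $N^{-2}(N^{-2}\lambda_1)^p \to 0$. But $N^{-2}\lambda_1 \sim (c/2)N$ diverges, so this term is $\sim (c/2)^pN^{p-2}$, which blows up for every $p\ge 3$; in fact Theorem~\ref{thm:moments-WGamma-3-asympt} shows $\E\Tr[(W^\Gamma)^p]\sim (c/2)^pN^{3p}$, i.e.\ the moments of $W^\Gamma$ are \emph{entirely} dominated by the outlier for $p\ge 3$, while the bulk contributes only $O(N^{2p+2})$. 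Subtracting $\E\lambda_1^p$ from $\E\Tr[(W^\Gamma)^p]$ to isolate the bulk means subtracting two quantities of size $N^{3p}$ to recover a residual of size $N^{2p+2}$ --- this requires relative precision $N^{2-p}$ on both quantities and is not feasible by the moment method. The paper's fix is structural: compress away the outlier \emph{before} taking moments, i.e.\ form $Q=P_{\overline\Omega}W^\Gamma P_{\overline\Omega}$ and compute $\E\Tr[Q^p]\sim N^{2p+2}\cdot 2^{-p}\sum_{\alpha\in\mathrm{NC}_{1,2}(p)}c^{\#\alpha}$ (Theorem~\ref{thm:eigs-Q}), where the cancellations that kill the $N^{3p}$ contributions are built into the tensor-network reduction moves of Proposition~\ref{prop:TN-red-moves}. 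This compression is the key idea your proposal is missing. You would then need Cauchy's interlacing theorem (Theorem~\ref{thm:cauchy-interlacing}) to transfer the limiting spectrum of the rank-$(N^2-1)$ matrix $Q$ to the lower $N^2-1$ eigenvalues of $W^\Gamma$, plus a separate estimate ($\E|\lambda_{N^2}| = o(N)$, obtained via $\E\lambda_{\min}^4(W^\Gamma/N^3)\to 0$) to rule out a negative outlier escaping the interlacing window; neither step appears in your outline.
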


We recall here that the \emph{semicircular distribution} with average $m$ and variance $\sigma^2$ is given by $$\mathrm{d} \mathrm{SC}_{m,\sigma} = \frac{1}{2 \pi \sigma} \sqrt{4\sigma^2 - (x-m)^2} \mathbf{1}_{[-2\sigma+m, 2\sigma+m]}(x) \, \mathrm{d}x.$$

Let us record here an important corollary for the theory of quantum information. The \emph{partial transposition criterion} \cite{peres1996separability,horodecki1996separability} states that a \emph{separable} (i.e.~non-entangled) density matrix $\rho$ has a positive semidefinite partial transposition: $\rho^\Gamma \geq 0$. This fact is mostly used in the reverse direction, as an entanglement criterion: given a density matrix $\sigma$ having a partial transposition with at least one negative eigenvalue ($\sigma^\Gamma \ngeq 0$) is entangled. We have thus the following result, regarding the bosonic induced ensemble defined in Definition \ref{def:bosonic-induced}. 

\begin{corollary}\label{cor:bosonic-threshold}
Consider a sequence $(\rho_N)_{N \geq 1}$ of density matrices from the bosonic induced ensemble of parameters $(N,M,2)$, with $M \sim cN^2$ as $N \to \infty$. If $c < 4$, then
$$\lim_{N \to \infty} \P[\rho_N^\Gamma \geq 0] = 0.$$
In particular, 
$$\lim_{N \to \infty} \P[\rho_N \text{ is entangled }] = 1.$$
\end{corollary}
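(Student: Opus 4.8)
The plan is to derive Corollary \ref{cor:bosonic-threshold} directly from Theorem \ref{thm:main} by transferring the spectral information about $W^\Gamma$ to the normalized state $\rho_N = W/\Tr W$. First I would note that the partial transposition commutes with normalization: $\rho_N^\Gamma = W^\Gamma / \Tr W$, since $\Tr W^\Gamma = \Tr W$. Hence $\rho_N^\Gamma \geq 0$ if and only if $W^\Gamma \geq 0$, and it suffices to show that $\P[W^\Gamma \geq 0] \to 0$ when $c < 4$. The positivity of $W^\Gamma$ fails as soon as its smallest eigenvalue $\lambda_{N^2}$ is negative, so the goal reduces to showing $\P[\lambda_{N^2} < 0] \to 1$.

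Next I would extract the location of the left edge of the bulk from Theorem \ref{thm:main}. The theorem tells us that $N^{-2}\lambda_i$, for $i = 2, \dots, N^2$, converges in moments to the semicircular law $\mathrm{SC}_{c/2,\sqrt{c}/2}$, whose support is $[c/2 - \sqrt c, c/2 + \sqrt c]$. The left endpoint $c/2 - \sqrt c$ is strictly negative precisely when $\sqrt c > c/2$, i.e.\ when $c < 4$ — this is exactly the threshold in the statement. So for $c < 4$ the limiting bulk distribution puts positive mass on the negative half-line; concretely, for any small $\eta > 0$ the interval $(c/2 - \sqrt c, c/2 - \sqrt c + \eta)$ has strictly positive semicircular mass, call it $p_\eta > 0$. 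Convergence in moments of a sequence of (random) empirical measures to a compactly supported deterministic law implies weak convergence in probability, so with probability tending to $1$ the number of indices $i \geq 2$ with $N^{-2}\lambda_i < c/2 - \sqrt c + \eta < 0$ is at least, say, $\frac{1}{2} p_\eta N^2 \geq 1$ for $N$ large. In particular at least one eigenvalue among $\lambda_2, \dots, \lambda_{N^2}$ is negative with probability $\to 1$, hence $\lambda_{N^2} < 0$ with probability $\to 1$, giving $\P[W^\Gamma \geq 0] \to 0$ and therefore $\P[\rho_N^\Gamma \geq 0] \to 0$.

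Finally, the entanglement conclusion follows from the partial transposition (PPT) criterion recalled in Section \ref{sec:first-def}: a separable density matrix has positive semidefinite partial transpose, so $\rho_N^\Gamma \ngeq 0$ implies $\rho_N$ is entangled. Combining this with the previous step,
\begin{equation*}
\P[\rho_N \text{ is entangled}] \geq \P[\rho_N^\Gamma \ngeq 0] = 1 - \P[\rho_N^\Gamma \geq 0] \to 1,
\end{equation*}
which is the claimed statement.

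The one point that needs a little care — and which I would regard as the main (though minor) obstacle — is the passage from \emph{convergence in moments} of the bulk empirical spectral distribution to a statement about \emph{individual} extreme eigenvalues. Moment convergence by itself controls only the bulk and does not rule out, a priori, that the smallest eigenvalue sits far from the edge of the limiting support, nor does it immediately give an almost-sure or in-probability lower bound on the \emph{count} of eigenvalues below a negative threshold. However, we do not need sharp edge control: it is enough that some fixed positive \emph{fraction} of the bulk eigenvalues lies below a negative constant, and this follows from weak convergence in probability of the empirical measures (itself a consequence of moment convergence to a measure determined by its moments, e.g.\ via Markov's inequality applied to the fluctuations of each moment). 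So the argument goes through with only elementary probabilistic bookkeeping, and no fine spectral-edge analysis is required.
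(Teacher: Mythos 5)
Your argument is correct and follows the route the paper clearly intends (the corollary is stated without proof immediately after Theorem~\ref{thm:main}): pass from $\rho_N$ to $W^\Gamma$ via $\rho_N^\Gamma = W^\Gamma/\Tr W$ with $\Tr W > 0$ a.s., read off the left edge $c/2-\sqrt c$ of $\mathrm{SC}_{c/2,\sqrt c/2}$, observe it is negative precisely for $c<4$, conclude that a macroscopic fraction of the bulk lies below zero, and invoke PPT.

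You have also correctly isolated the only non-trivial step: Theorem~\ref{thm:main}, as stated and proved, gives convergence of the \emph{expected} normalized bulk moments $\frac{1}{N^2}\E\Tr\big((N^{-2}Q)^p\big)$, and expected-moment convergence alone does not yield $\P[\lambda_{N^2}<0]\to 1$. Your sketch of the fix (fluctuations of each moment concentrate, hence weak convergence in probability) is the standard one, but note that it requires a variance estimate $\operatorname{Var}\big(N^{-2p-2}\Tr(Q^p)\big)\to 0$ that is \emph{not} supplied by the paper either — it would follow from a second Wick computation of $\E\big[(\Tr Q^p)^2\big]$ in the same combinatorial framework, which is routine but not written down. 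Absent that, the moment bound gives only $\liminf_N \P[\lambda_{N^2}<0]>0$ (e.g.\ by testing the empirical measure against a continuous $g$ with $0\le g\le 1$ supported in $(-\infty,0)$ and using $\P[\int g\,d\mu_N>0]\ge\E\int g\,d\mu_N$), not the full limit $1$. So your proof is sound modulo a concentration lemma that you flag but do not prove; this is a shared (and minor) gap with the paper's own presentation, not a flaw peculiar to your approach.
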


\begin{remark}
   Note that the other regime, where $c  > 4$, is much more involved, since the absence of negative support of the limiting measure of $\rho_N^\Gamma$ does not exclude the absence of negative outliers. In the standard (non-symmetrized) Wishart case, Aubrun \cite{aubrun2012partial} computed large moments of the partial transposition in order to achieve this conclusion. We leave these considerations open. Indeed, in our case, considerations on large moments would not allow us to conclude. This is due to our use of the Cauchy interlacing theorem. The study of large moments of the compressed matrix $Q$, introduced later, does not allow us to retain enough information on the behaviour of outliers of $\rho_N^\Gamma$; at the same time, in this work, we cannot study them directly because of the large eigenvalue of $W^\Gamma$, as will be seen later.
\end{remark}

\bigskip

The proof of the main theorem is rather involved, and we shall proceed in several steps in the following sections, culminating with the final proof given in Section \ref{sec:proof-main-result}. We start, in the spirit of Section \ref{sec:fermionic}, with an analysis of the average state and of the outlier eigenvalue. However, in the bosonic case, this will not be sufficient to make a statement about entanglement. Hence the need of a more detailed study of the spectrum, provided by our main result, Theorem \ref{thm:main}. 

Following the approach of Section \ref{sec:fermionic},  we compute the average of the partially transposed matrix $W^\Gamma$, and show that already on average, there is an outlier, \emph{positive}, eigenvalue on a scale larger than the rest of the spectrum.
Recall that the \emph{maximally entangled state} is the unit norm vector
$$\ket \Omega = \frac{1}{\sqrt N} \sum_{i=1}^N \ket{ii} \in \mathbb C^N \otimes \mathbb C^N,$$
where $\{\ket i \}_{i \in [N]}$ is the canonical orthonormal basis of $\mathbb C^N$.

\begin{proposition}\label{prop:eigenval-eigenvec-average}
The average of the partial transpose $W^{\Gamma}$ is given by
$$\E W^\Gamma = M P_s^\Gamma = \frac{M}{2} I_{N^2} + \frac{MN}{2} \ketbra{\Omega}{\Omega}.$$
The eigenvalues of $\E W^\Gamma$ are as follows
$$\operatorname{spec}(\E W^\Gamma) = \begin{cases}
\frac{M(N+1)}{2} \qquad &\text{ with multiplicity $1$}\\
\frac{M}{2} \qquad &\text{ with multiplicity $N^2-1$.}
\end{cases}
$$
\end{proposition}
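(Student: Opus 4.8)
The plan is to mirror the fermionic computation from Proposition \ref{prop:fermionic-large-eigenvalue}, changing the anti-symmetrizer for the symmetrizer. First I would compute $\E W$ using the Wick--Isserlis theorem: since $W = P_{s,2}\Tr_E(GG^*)P_{s,2}$ with $G$ a standard complex Gaussian tensor in $\mathcal H^{\otimes 2}\otimes\mathcal H_E$, the expectation $\E \Tr_E(GG^*)$ equals $M$ times the identity on $\mathcal H^{\otimes 2}$ (there are $M$ environmental indices, each contributing a delta over the system indices), so $\E W = M P_{s,2} P_{s,2} = M P_{s,2}$, using that $P_{s,2}$ is a projector. Hence $\E W = \frac M2 (I_{N^2} + F)$ where $F$ is the flip operator $F = \sum_{i,j}\ketbra{ij}{ji}$.

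Next I would apply the partial transposition. The partial transpose is linear, so $\E W^\Gamma = \frac M2\bigl(I_{N^2}^\Gamma + F^\Gamma\bigr)$. The identity is invariant under partial transposition, $I_{N^2}^\Gamma = I_{N^2}$, and the key computation is $F^\Gamma = N\omega$, exactly as recorded in the fermionic case: $F^\Gamma = \sum_{i,j}(\ketbra{ij}{ji})^\Gamma = \sum_{i,j}\ketbra{ii}{jj} = N\ketbra{\Omega}{\Omega}$, since $\ket\Omega = \frac1{\sqrt N}\sum_i\ket{ii}$ gives $\omega = \frac1N\sum_{i,j}\ketbra{ii}{jj}$. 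Substituting yields $\E W^\Gamma = \frac M2 I_{N^2} + \frac{MN}{2}\omega$, which is the stated formula (noting that $P_s^\Gamma = \frac12(I_{N^2} + N\omega)$, so indeed $\E W^\Gamma = M P_s^\Gamma$).

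For the spectrum, I would simply observe that $\omega = \ketbra{\Omega}{\Omega}$ is a rank-one orthogonal projection (since $\|\Omega\| = 1$), so its eigenvalues are $1$ with multiplicity $1$ and $0$ with multiplicity $N^2-1$. Therefore $\E W^\Gamma = \frac M2 I_{N^2} + \frac{MN}{2}\omega$ has eigenvalue $\frac M2 + \frac{MN}{2} = \frac{M(N+1)}{2}$ on the line $\mathbb C\ket\Omega$ (multiplicity $1$) and eigenvalue $\frac M2$ on the orthogonal complement (multiplicity $N^2-1$). This matches the claimed spectrum.

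There is no real obstacle here; this is a direct computation. The only point requiring a little care is the Wick--Isserlis step for $\E W$: one must check that the symmetrizers on the system side do not interfere with the Gaussian contraction over environmental indices, which they don't since the contraction acts only on $\mathcal H_E$ and commutes with $P_{s,2}\otimes P_{s,2}$ acting on the two system copies; after the contraction one is left with $M\cdot P_{s,2}\cdot I_{N^2}\cdot P_{s,2} = M P_{s,2}$. Everything else is linear algebra. Compared to the fermionic case, the only sign change is that $P_{s,2} = (I+F)/2$ rather than $(I-F)/2$, which flips the outlier eigenvalue from large negative to large positive — this is precisely the source of the extra difficulty flagged in the introduction, but it does not affect the proof of this proposition.
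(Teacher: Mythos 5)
Your proof is correct and follows essentially the same route as the paper: compute $\E W = \frac M2(I_{N^2}+F)$ via Wick--Isserlis, apply the partial transpose to get $F^\Gamma = N\ketbra{\Omega}{\Omega}$, and read off the spectrum from the fact that $\ketbra{\Omega}{\Omega}$ is a rank-one projection. You supply a bit more justification for the Wick step than the paper does (which simply states the result), but there is no difference in approach.
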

\begin{proof}
The proof is a direct computation. Using Wick-Isserlis theorem we compute $\E W=\frac{M}{2} (I_{N^2}+F)$. Applying the partial transposition on the flip operator
$$F=\sum_{i,j=1}^N\ketbra{ij}{ji},$$
we deduce that
$$F^{\Gamma}=\sum_{i,j=1}^N(\ketbra{ij}{ji})^{\Gamma}=\sum_{i,j=1}^N\ketbra{ii}{jj}=N\ketbra{\Omega}{\Omega}.$$
Writing $P_s = (I+F)/2$ yields the formula for $\E W^\Gamma$.
The spectrum follows from the fact that $\ketbra{\Omega}{\Omega}$ is a rank-one projection.
\end{proof}

We note that $\lvert \Omega \rangle$ is an eigenvector of $P_s^{\Gamma}$ with eigenvalue $(N+1)/2$. Hence, the average $\E W^\Gamma$ has an eigenvalue $M(N+1)/2\sim cN^3/2$ with eigenvector $\ket \Omega$. We show next that the same holds, asymptotically, for $W^\Gamma$.
\begin{theorem}\label{thm:eigenvec-convergence}
The maximally entangled vector $\ket \Omega$ is an approximate eigenvector of the random matrix $N^{-3}W^\Gamma$ with corresponding approximate eigenvalue $c/2$: if 
$$\epsilon_N:=\|N^{-3} W^\Gamma \ket \Omega - c/2 \ket \Omega\|^2,$$
then $\lim_{N \to \infty} \E \epsilon_N = 0$. In particular, the random variable $\epsilon_N$ converges in probability towards zero as $N\to\infty$.
\end{theorem}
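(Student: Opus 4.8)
The plan is to follow the same route as in the fermionic theorem just proved: expand $\epsilon_N$ into three scalar quantities, take expectations via the Wick--Isserlis theorem, check that the two $\Theta(N^6)$ contributions cancel, and conclude with Markov's inequality.

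Since $W$ is Hermitian, so is $W^\Gamma$ (the partial transposition of a Hermitian bipartite matrix is Hermitian), and therefore
$$\epsilon_N = N^{-6}\langle\Omega|(W^\Gamma)^2|\Omega\rangle - c\,N^{-3}\langle\Omega|W^\Gamma|\Omega\rangle + \frac{c^2}{4}.$$
The middle term is handled directly by Proposition \ref{prop:eigenval-eigenvec-average}: $\E\langle\Omega|W^\Gamma|\Omega\rangle = M\langle\Omega|P_s^\Gamma|\Omega\rangle = M(N+1)/2$, so $c\,N^{-3}\E\langle\Omega|W^\Gamma|\Omega\rangle \to c^2/2$ in the regime $M\sim cN^2$. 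It remains to control the first term.

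For $\E\langle\Omega|(W^\Gamma)^2|\Omega\rangle$ I would expand $W = P_s\,\Tr_E(GG^*)\,P_s$ and apply Wick--Isserlis: this is a degree-four Gaussian expectation with exactly two pairings. The \emph{disconnected} pairing (each copy of $W$ contracted internally, so $W\mapsto \E W = M P_s$) contributes $M^2\langle\Omega|(P_s^\Gamma)^2|\Omega\rangle$; the \emph{connected} pairing (one Gaussian leg of each copy contracted across the other), which collapses one of the two environment sums, contributes $\tfrac{M}{N}\Tr[(\Tr_1 P_s)^2]$, in exact analogy with the fermionic computation. Both are then evaluated in closed form: using $P_s^\Gamma = \tfrac12(I_{N^2} + N\omega)$ and $\omega^2=\omega$ one gets $\langle\Omega|(P_s^\Gamma)^2|\Omega\rangle = (N+1)^2/4$, and Lemma \ref{lem:partial-trace-P-sym} (with $k=2$, $r=1$) gives $\Tr_1 P_s = \tfrac{N+1}{2}I_N$, hence $\Tr[(\Tr_1 P_s)^2] = N(N+1)^2/4$. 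Altogether $\E\langle\Omega|(W^\Gamma)^2|\Omega\rangle = \tfrac{(N+1)^2}{4}(M^2+M) = \tfrac{c^2}{4}N^6 + O(N^5)$, so $N^{-6}\E\langle\Omega|(W^\Gamma)^2|\Omega\rangle \to c^2/4$.

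Combining the three terms, $\E\epsilon_N \to \tfrac{c^2}{4} - \tfrac{c^2}{2} + \tfrac{c^2}{4} = 0$. Since $\epsilon_N\ge 0$, Markov's inequality gives, for every $a>0$, $\P(\epsilon_N\ge a)\le \E\epsilon_N/a\to 0$, i.e.\ $\epsilon_N\to 0$ in probability. The one genuinely delicate step is the Wick bookkeeping for $\langle\Omega|(W^\Gamma)^2|\Omega\rangle$: one must track carefully how the two partial transpositions and the four symmetrizers $P_s$ interact with $|\Omega\rangle$ under each of the two contractions, so as to identify the connected contribution with $\tfrac{M}{N}\Tr[(\Tr_1 P_s)^2]$. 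Once the two pairings are correctly isolated, the remaining evaluations are the elementary identities above, and the cancellation of the $c^2$-terms is essentially automatic, since the disconnected term equals $M^2(N+1)^2/4 = \big(\E\langle\Omega|W^\Gamma|\Omega\rangle\big)^2$ while the connected term is only $O(N^5)$ after multiplication by $N^{-6}$.
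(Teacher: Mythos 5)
Your proposal is correct and follows essentially the same route as the paper's proof: the same expansion of $\epsilon_N$ into three terms, the same two-pairing Wick decomposition $\E\langle\Omega|(W^\Gamma)^2|\Omega\rangle = M^2\langle\Omega|(P_s^\Gamma)^2|\Omega\rangle + \frac{M}{N}\Tr[(\Tr_1 P_s)^2]$, the same evaluation via Lemma~\ref{lem:partial-trace-P-sym}, and the same conclusion via Markov. You actually make a couple of steps more explicit than the paper (the Hermiticity of $W^\Gamma$, the identity $\langle\Omega|(P_s^\Gamma)^2|\Omega\rangle=(N+1)^2/4$), but the underlying argument is identical.
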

\begin{proof}
Manipulating the expression of $\epsilon_N$ we have 
$$\epsilon_N=N^{-6}\langle \Omega | (W^\Gamma)^2 | \Omega \rangle  - cN^{-3} \langle \Omega | W^\Gamma | \Omega \rangle+ c^2/4.$$
Computing the average and using Proposition \ref{prop:eigenval-eigenvec-average} we have 
$$\E \epsilon_N=N^{-6} \E\langle \Omega | (W^{\Gamma})^2| \Omega \rangle - c^2/4 + o(1).$$

We compute $\E\langle \Omega | (W^{\Gamma})^2| \Omega \rangle$ using the Wick-Isserlis theorem and find, using Lemma \ref{lem:partial-trace-P-sym},
\begin{align*}
\E\langle \Omega | (W^{\Gamma})^2| \Omega \rangle &= M^2\langle \Omega | (P_s^\Gamma)^2| \Omega \rangle +  \frac{M}{N} \Tr\left[ (\Tr_1 P_s)^2 \right] \\
&=\frac{M^2}{N}\Tr\left[ (\Tr_1 P_s)^2 \right]+\frac{M}{N} \Tr\left[ (\Tr_1 P_s)^2 \right]\\ &=\left[N^6\frac{c^2}{4}+N^4\frac{c}{4}\right]\left(1+\frac2N+\frac1{N^2}\right)= \frac{c^2}{4}N^6+O(N^5),
\end{align*}
establishing the first claim. 
The convergence in probability follows by Markov's inequality.
\end{proof}

\bigskip

Let us finish this section by addressing another important entanglement criterion: \emph{realignment} \cite{chen2002matrix, rudolph2003cross}. The realignment criterion states that any separable bipartite density matrix $\rho$ satisfies the following inequality: 
$$\|\rho^R\|_1 \leq 1,$$
where $\|\cdot \|_1$ denotes the Schatten 1-norm or the nuclear norm (i.e.~the sum of the singular values of a matrix), while $\rho^R$ denotes the realignment (or the reshuffling, see \cite[Chapter 10.2]{bengtsson2006geometry}) of a matrix $\rho$, defined algebraically by
$$\langle ij | \rho^R | kl \rangle = \langle ik | \rho | jl \rangle,$$
or graphically in Figure \ref{fig:realignment}. 

\begin{figure}
    \centering
    \includegraphics{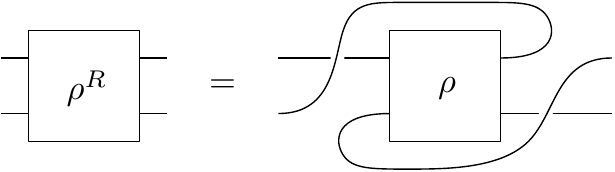}
    \caption{The realignment of a bipartite matrix $\rho$.}
    \label{fig:realignment}
\end{figure}

The key observation here is that the realignment and the partial transposition of a matrix are related as follows: 
$$\rho^R = \left(\rho^\Gamma\right)\cdot F \quad \implies \quad \|\rho^R\|_1 = \|\rho^\Gamma\|_1.$$

In the case of a (non-normalized) random matrix $W$ from the induced bosonic ensemble, entanglement follows from the inequality $\|W^\Gamma\|_1 > \Tr W$. Note that, asymptotically as $N \to \infty$ and $M \sim cN^2$, the trace behaves as $\Tr W \sim (c/2) N^4$, while Theorem \ref{thm:main} states that
$$\|W^\Gamma\|_1 \sim (c/2) N^3 + N^4 \int |x| \, \mathrm{d} \mathrm{SC}_{c/2, \sqrt c/2}(x).$$
We note that the large outlier eigenvalue does not contribute asymptotically to the value of the Schatten 1-norm. Hence, in order for the realignment criterion to detect entanglement in the random bosonic matrix $W$, the following inequality needs to hold: 
$$\int |x| \, \mathrm{d} \mathrm{SC}_{c/2, \sqrt c/2}(x) > \frac c 2 = \int x \, \mathrm{d} \mathrm{SC}_{c/2, \sqrt c/2}(x).$$
But the above happens precisely when the probability measure $\mathrm{SC}_{c/2, \sqrt c/2}$ has negative support, which is also the condition for the partial transpose criterion to detect the entanglement of $W$. Hence, we conclude that, in the case of the induced bosonic ensemble, the partial transpose and the realignment criteria have \emph{identical thresholds} ($c_0 = 4$). This situation is to be contrasted with the case of the usual induced ensemble, where the threshold for the partial transpose criterion ($c^\Gamma_0 = 4$ \cite{aubrun2012partial}) is strictly larger than the threshold for the realignment criterion ($c_0^R = (8/3\pi)^2 \approx 0.72$ \cite{aubrun2012realigning}), proving that, in some sense, the former criterion is stronger than the latter. 

\section{The circuit counting graph polynomial}\label{sec:graph-poly}

In this subsection we discuss some basic facts about the \emph{circuit counting polynomial} of directed graphs. Of special importance to us is Theorem \ref{thm:graph-poly-trace-Psym}, relating the generalized bipartite trace of the symmetric projector $P_s = P_{s, 2}$ to the circuit counting polynomial $J$ of an associated graph.

We start by recalling the definition of a digraph. 
\begin{definition}
 A digraph is a pair $G=(V,A)$ where $V$ is a set of vertices and $A$ is a multi-set of ordered pairs of (possibly non distinct) elements of $V$. The fact that $A$ is a multi-set allows for multiple edges.
\end{definition}
In this paper, we will denote when needed the elements of $A$ as $\{i\rightarrow j\}$ for $i,j \in V$ to better picture the ordering. Possibly non-distinct elements in the pairs means the digraphs can have loops. \\
Note that incoming edges of a vertex $i$ are the edges of the form $\{a\rightarrow i\}$, while outgoing edges are the edges of the form $\{i\rightarrow a\}$. A loop edge is both incoming and outgoing for the vertex it is adjacent to. A digraph is said to be $k$-in/$k$-out if each vertex has $k$ incoming edges and $k$ outgoing edges.\\

We introduce the circuit counting polynomial $J$ of a digraph $G$ as
\begin{equation}
    J(G;x)=\sum_{k\ge 1}j_kx^k,
\end{equation}
with $j_k=|\{ \textrm{covers of } G \textrm{ in } k \textrm{ cycles}\}|$. Note that the cycles must follow the edge orientations.
The circuit counting polynomial possesses several properties useful to us. As we will be interested only in the case of digraphs that are $2$-in/$2$-out, we state these properties only in the case of $2$-in/$2$-out digraphs. The circuit counting polynomial has the following trivial multiplicativity property. 

\begin{lemma}\label{lem:mult-J}
Let $G$ be a 2-in/2-out digraph having connected components $G_1, G_2, \ldots, G_k$ (which are connected 2-in/2-out digraphs). Then
$$J(G; x) = \prod_{i=1}^k J(G_i; x).$$
\end{lemma}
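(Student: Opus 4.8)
The plan is to prove the multiplicativity by showing that a cycle cover of a disconnected digraph decomposes canonically into cycle covers of its connected components, and that the number of cycles adds up. Concretely, recall that a \emph{cover of $G$ in $k$ cycles} is a partition of the edge multi-set $A$ into $k$ directed cycles, each of which respects edge orientations. The key structural observation is that a directed cycle is, in particular, a connected subgraph, so every cycle appearing in a cover of $G$ lies entirely within one connected component $G_i$. Hence a cover of $G$ restricts, on each component $G_i$, to a collection of directed cycles whose edges partition $A(G_i)$ — that is, to a cover of $G_i$. Conversely, given a cover of each $G_i$, their disjoint union is a cover of $G$, since the $A(G_i)$ partition $A(G)$. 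This sets up a bijection
$$\{\text{covers of } G\} \;\longleftrightarrow\; \prod_{i=1}^k \{\text{covers of } G_i\},$$
under which the total number of cycles of a cover of $G$ equals the sum of the numbers of cycles of the corresponding covers of the $G_i$.

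From this bijection the polynomial identity follows by a routine generating-function argument. Writing $J(G_i; x) = \sum_{\ell \geq 1} j^{(i)}_\ell x^\ell$, the coefficient of $x^k$ in $\prod_{i=1}^k J(G_i; x)$ is $\sum_{\ell_1 + \cdots + \ell_k = k} \prod_{i} j^{(i)}_{\ell_i}$, which by the bijection counts exactly the covers of $G$ into $k$ cycles (choose, for each component, a cover into $\ell_i$ cycles; the union is a cover of $G$ into $\sum_i \ell_i = k$ cycles). Thus the coefficients of $J(G; x)$ and of $\prod_i J(G_i; x)$ agree term by term. One should note that the hypothesis that $G$ is $2$-in/$2$-out guarantees that covers exist and that each $G_i$ is again $2$-in/$2$-out, so the statement is non-vacuous, but the multiplicativity argument itself does not actually use the in/out-degree condition beyond ensuring the objects are well-defined; I would remark on this but keep the statement as given for consistency with the paper's conventions.

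There is essentially no serious obstacle here — the result is, as the paper says, ``trivial'' — so the only care needed is in getting the combinatorial bookkeeping precise: making sure ``cover'' is understood as a partition of the \emph{edge} set (not vertex set) into cycles, so that the decomposition across components is genuinely a bijection with no loss or double-counting, and handling the edge case where some component admits a cover into cycles (which it always does, being $2$-in/$2$-out, e.g.\ via an Eulerian-type decomposition). If one wanted to be even more economical, one could simply invoke the fact that disjoint union of digraphs corresponds to the product of their generating functions for any edge-partition-type statistic that is additive over connected pieces, but spelling out the cycle-cover bijection as above is the cleanest self-contained route.
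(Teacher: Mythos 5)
Your proof is correct. The paper gives no proof for this lemma, dismissing it as a ``trivial multiplicativity property,'' and your cycle-cover bijection plus convolution argument is precisely the standard justification that the authors implicitly have in mind; the only cosmetic issue is that you reuse the symbol $k$ both for the number of components and for the exponent of $x$ in the coefficient extraction, which should be two different letters.
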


Next, we consider skein relations. The reason is that a cycle cover of $G$ corresponds to a choice of state for each vertex of $G$ (as defined in \cite[Definition 2.1]{ellis2004identities} and below; the case we are interested in is the Eulerian case). These skein relations read graphically
\begin{equation}\label{eq:skein-relation-J}
    J\left(\raisebox{-16mm}{\includegraphics[scale=0.4]{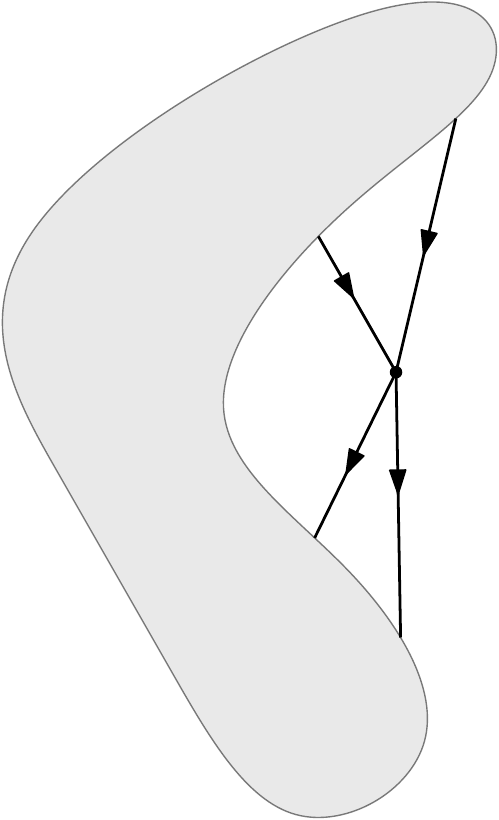}};x\right) \ = \ J\left(\raisebox{-16mm}{\includegraphics[scale=0.4]{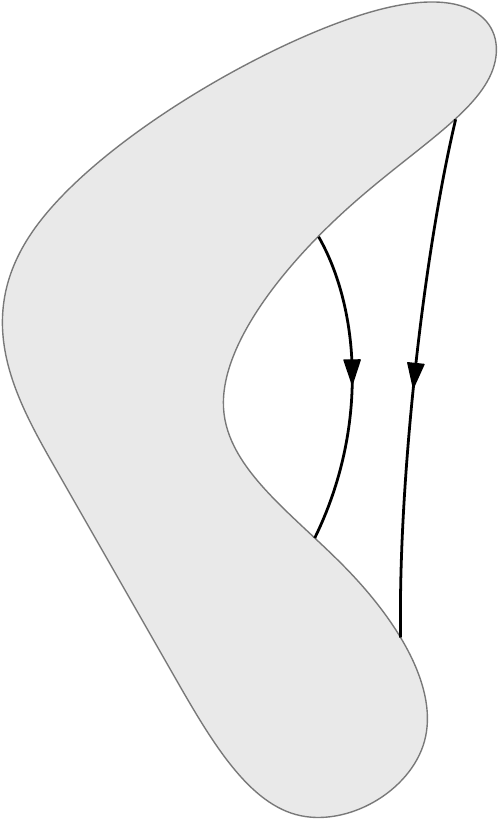}};x\right) \ + \ J\left( \raisebox{-16mm}{\includegraphics[scale=0.4]{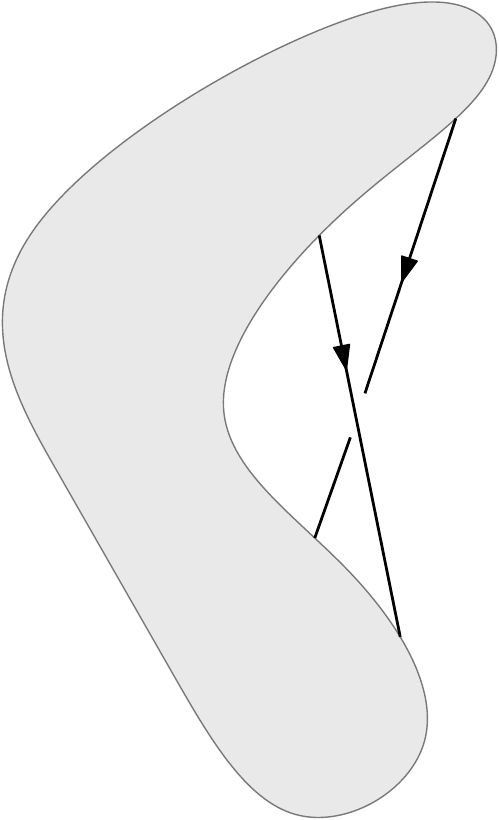}};x\right).
\end{equation}
Additionally, when evaluated on connected graphs, it is of maximal degree for graphs that have only cut vertices, where we recall that a cut vertex is a vertex whose removal disconnects the graph. This fact is a simple extension of \cite{las1983polynome}, that we show here. We give a proof below, using the graphical skein relations.

\begin{theorem}\label{thm:cut-vertex-relation}
Let $G$ be a $2$-in/$2$-out connected digraph and $v$ be a cut vertex. Denote by $G\backslash v$ the \emph{connected} digraph obtained from $G$ by erasing the vertex $v$ and reconnecting the edges in an orientation preserving way. Then 
\begin{equation}\label{eq:cut-vertex-relation}
    J(G;x)=(x+1)J(G\backslash v;x).
\end{equation}
\end{theorem}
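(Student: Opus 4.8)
The plan is to analyze what happens at the cut vertex $v$ using the skein relation \eqref{eq:skein-relation-J}. Since $G$ is $2$-in/$2$-out, the vertex $v$ has exactly two incoming edges, say $a_1 \to v$ and $a_2 \to v$, and two outgoing edges, say $v \to b_1$ and $v \to b_2$. In any cycle cover, the two incoming edges at $v$ must be paired with the two outgoing edges; there are exactly two ways to do this (the two \emph{transition systems} at $v$): either $(a_1 \to v \to b_1,\ a_2 \to v \to b_2)$ or $(a_1 \to v \to b_2,\ a_2 \to v \to b_1)$. Applying the skein relation at $v$ expresses $J(G;x)$ as a sum of two terms, each corresponding to resolving $v$ according to one of these pairings. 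Resolving $v$ in either way produces a $2$-in/$2$-out digraph on the vertex set $V \setminus \{v\}$; the key point is to identify these resolutions with $G \backslash v$ (in one case directly, in the other up to the reconnection being orientation preserving), and to see that the ``extra'' factor arises because one of the resolutions splits off a loop. I would set this up carefully with the picture, matching the three local pictures in \eqref{eq:skein-relation-J} to the concrete edge configuration at $v$.

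The structure of the argument is then: because $v$ is a cut vertex, $G \backslash v$ has a specific shape. Removing $v$ and its four incident edges leaves the graph split into (at least) two pieces that were joined only through $v$; reconnecting the edges in an orientation-preserving way (pairing incoming stubs with outgoing stubs across the cut) has to respect the cut structure, so exactly one of the two skein resolutions at $v$ yields precisely $G \backslash v$, while the other yields $G \backslash v$ with an additional disjoint loop attached — this is exactly where cuthood is used, since for a non-cut vertex both resolutions would generically keep the graph connected without producing a free loop. By Lemma \ref{lem:mult-J}, a disjoint loop contributes a multiplicative factor $J(\text{loop};x) = x$ (a single loop has exactly one cover, by one cycle), so the two terms sum to $J(G\backslash v;x) + x\, J(G\backslash v; x) = (x+1) J(G\backslash v;x)$.

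The main obstacle I expect is the bookkeeping around the reconnection: one must check that the orientation-preserving reconnection defining $G \backslash v$ is well-defined and matches exactly one of the two skein states at $v$, and that the other skein state genuinely produces a \emph{disconnected} loop (rather than, say, a longer cycle threading through the graph). This is precisely controlled by the hypothesis that $v$ is a cut vertex: the two incoming edges at $v$ come from, and the two outgoing edges at $v$ go to, the two sides of the cut in a way that forces one of the two local resolutions to close up a cycle entirely on the $v$-side — producing the loop — while the other glues the two sides together into the connected graph $G \backslash v$. I would make this precise by distinguishing the case where both incoming (or both outgoing) edges lie on the same side of the cut, and complete the proof by invoking Lemma \ref{lem:mult-J} together with $J(\text{single loop};x)=x$. (This is the directed analogue of the classical Las Vergnas identity, as indicated in the statement.)
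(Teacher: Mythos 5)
Your overall structure is right: apply the skein relation at $v$, identify one of the two resolutions with $G\backslash v$, argue that the other resolution contributes $x\,J(G\backslash v;x)$, and sum. But the specific justification you give for the factor of $x$ is incorrect, and this is the crux of the argument. You claim that the ``other'' resolution is the graph $G\backslash v$ with a disjoint free loop attached, and then invoke Lemma~\ref{lem:mult-J} together with $J(\text{loop};x)=x$. That graph identification is false in general. Since $v$ is a cut vertex of a connected $2$-in/$2$-out digraph, a degree count forces exactly one in-edge and one out-edge of $v$ to go to each side of the cut (so your proposed case split on ``both incoming edges on the same side'' never occurs). The skein resolution that reconnects each in-stub with the out-stub on the \emph{same} side produces $L'\sqcup R'$, where $L'$ and $R'$ are the two sides of the cut each closed up by one new edge; this is a disconnected digraph on $|V|-1$ vertices, not $G\backslash v$ plus a floating loop. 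For instance, for $G$ with a cut vertex $v$ joined to two vertices $u_1,u_2$ each carrying one self-loop, the disconnected resolution is two isolated vertices each with two self-loops, whereas $G\backslash v\sqcup\text{loop}$ has two vertices plus a vertex-less loop --- different graphs that merely happen to share the same $J$.

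What makes the argument work --- and what the paper actually uses --- is a bijection on cycle covers, not a graph isomorphism. The two resolutions share the same vertex set, hence the same set of transition systems at the remaining vertices. Fix such a choice. In the connected resolution $G\backslash v$, the two new edges are the only ones crossing the cut, and each is used exactly once, so there is a \emph{unique} cycle using both of them; it alternates once between the two sides. In the disconnected resolution, that single crossing cycle splits into exactly two cycles, one confined to each side, while every other cycle is unchanged. Thus every cycle cover of the disconnected resolution has precisely one more cycle than the corresponding cover of $G\backslash v$, which gives $J(L'\sqcup R';x)=x\,J(G\backslash v;x)$ and hence $J(G;x)=(x+1)J(G\backslash v;x)$. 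Your conclusion is correct and your use of the skein relation is the right move, but you should replace the ``extra disjoint loop'' claim with this cycle-count bijection; the cut-vertex hypothesis is what makes the crossing cycle unique (one edge each way across the cut), which is the real place it enters the proof.
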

\begin{proof}
We have using the skein relations described above at the cut vertex $v$
\begin{equation}\label{eq:move-J}
    J\left(\raisebox{-10mm}{\includegraphics[scale=0.4]{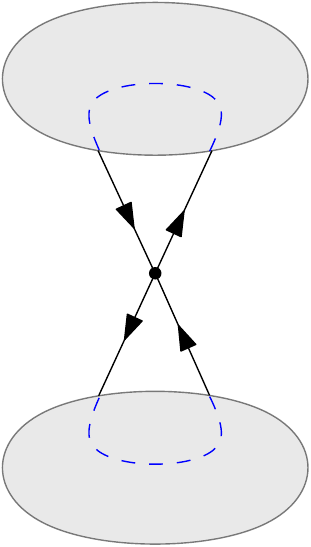}};x\right)\ = \ J\left(\raisebox{-10mm}{\includegraphics[scale=0.4]{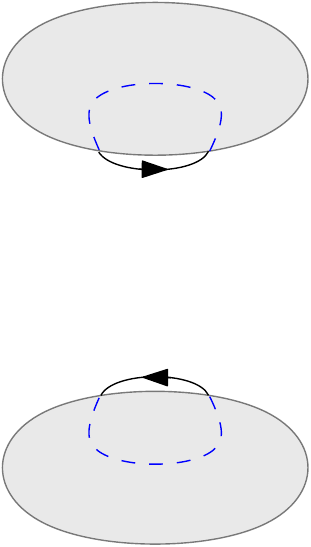}};x\right) \ + \ J\left(\raisebox{-10mm}{\includegraphics[scale=0.4]{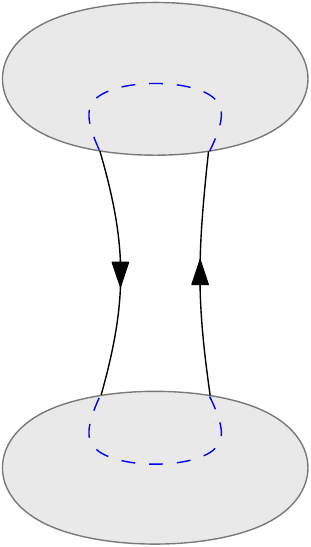}};x\right)
\end{equation}
where the gray areas represent the rest of the graph. Since $v$ is a cut vertex these gray areas are connected only through $v$. The blue dashed lines inside the gray parts show how a cycle that would pass through the edges entering and exiting the gray parts would behave. Notice in particular that due to the $2$-in/$2$-out constraint, a cycle going through one edge entering a gray part has to go out of this gray part through the edge exiting it. Now notice that one of the choice of state for $v$ (the leftmost term of the right hand side) leads to a disconnected graph that has one more cycle than the other choice of state that leads to a connected graph (rightmost term of the right hand side). Thus we have, graphically,
\begin{equation}
   J\left(\raisebox{-10mm}{\includegraphics[scale=0.4]{J-cut-vertex-move1.pdf}};x\right) = x J\left(\raisebox{-10mm}{\includegraphics[scale=0.4]{J-cut-vertex-move2.pdf}};x\right).
\end{equation}
Therefore,
\begin{equation}
    J\left(\raisebox{-10mm}{\includegraphics[scale=0.4]{J-cut-vertex.pdf}};x\right)\ = \ (x+1)J\left(\raisebox{-10mm}{\includegraphics[scale=0.4]{J-cut-vertex-move2.pdf}};x\right).
\end{equation}
\end{proof}
Let $G$ be  a $2$-in/$2$-out digraph with $p$ vertices such that if every vertex is a cut vertex. Then repeated use of equation \eqref{eq:cut-vertex-relation} leads to  
\begin{equation}\label{eq:only-cut-J}
    J(G;x)=(x+1)^pJ(L;x)=x(x+1)^p,
\end{equation}
where $L$ is the trivial digraph made of an edge looping on itself with no vertices. We later show that digraphs with only cut vertices are the ones maximizing the degree of the circuit counting polynomial.

In the case of $2$-in/$2$-digraphs, the circuit counting polynomial is related to the \emph{interlace polynomial} \cite{arratia2004interlace}, defined here according to \cite{brijder2011nullity}.  For a (non oriented) graph $H=(V,E)$, the interlace polynomial $q(H;x)$ is
\begin{equation}
    q(H;x)=\sum_{\substack{S\subseteq V}} (x-1)^{\textrm{dim }\textrm{ker } A_{H[S]}},
\end{equation}
where $H[S]$ is the sub-graph induced by the subset $S$ of vertices of $H$ and $A_{H[S]}$ denotes the adjacency matrix of $H[S]$. Note that $\textrm{dim }\textrm{ker } A_{H[S]}$ is often called the nullity of $H[S]$. Note also that the degree of the interlace polynomial of a graph is bounded by its number of vertices:
\begin{equation}\label{eq:UB-deg-q}
    \deg q(H; x) \leq |H|.
\end{equation}

We use repeatedly the following property. For connected 2-in/2-out digraphs, the circuit polynomial is related to the interlace polynomial $q(x)$ of a corresponding interlace graph via the following theorem.

\begin{theorem}[{\cite[Theorem 24]{arratia2004interlace}}]\label{thm:Arratia}
    Consider $G$ an Eulerian $2$-in/$2$-out digraph, $C$ any Euler circuit of $G$ and $H=H(C)$ its interlace graph. Then 
    \begin{equation}
        J(G;x)=x q(H;x+1)=x m(G;x+1)
    \end{equation}
    with $q$ the interlace polynomial, and $m$ the Martin polynomial.
\end{theorem}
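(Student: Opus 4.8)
The plan is to treat the two stated equalities separately. For the normalization of the Martin polynomial under which the statement is made, the identity $J(G;x)=x\,m(G;x+1)$ is immediate from the definitions: both polynomials are generating functions over partitions of the edge set of $G$ into directed circuits (equivalently, over transition systems $T$ at the vertices), $J$ recording such a $T$ with weight $x^{c(T)}$ where $c(T)$ is the number of circuits of $T$, and $m$ with weight $(x-1)^{c(T)-1}$, so that substituting $x\mapsto x+1$ in $m$ and multiplying by $x$ recovers $J$ (for several components one also invokes the multiplicativity of $J$ from Lemma~\ref{lem:mult-J}). Hence the real content is the identity $J(G;x)=x\,q(H(C);x+1)$, which I would prove by induction on the number $p$ of vertices of $G$, for $G$ connected; the disconnected case follows by the multiplicativity of $J$ together with the multiplicativity of the interlace polynomial over connected components of $H$.

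The base case $p=0$ is a single loop, where $J(G;x)=x$ while $H(C)$ is the empty graph, on which $q\equiv 1$, so the right side is $x$ as well. For the inductive step, fix an Euler circuit $C$, write $H=H(C)$, and invoke the classical dichotomy that a vertex $v$ of $G$ is a cut vertex if and only if it is an isolated vertex of $H$ (the two occurrences of $v$ along $C$ fail to interlace any other vertex precisely in the cut-vertex configuration). If $G$ has a cut vertex $v$, then Theorem~\ref{thm:cut-vertex-relation} gives $J(G;x)=(x+1)J(G\backslash v;x)$; since $v$ is isolated in $H$, the interlace graph of the induced Euler circuit of $G\backslash v$ is $H\backslash v$, and directly from the defining sum one has $q(H;x)=x\,q(H\backslash v;x)$, so the induction hypothesis applied to $G\backslash v$ (which has $p-1$ vertices) makes both sides equal $(x+1)\,x\,q(H\backslash v;x+1)$. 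As a consistency check, iterating this is exactly equation~\eqref{eq:only-cut-J}.

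The remaining case is $G$ connected, $p\ge 1$, with no cut vertex. Choose any vertex $v$; it is not isolated in $H$, hence has a neighbour $w$. The skein relation~\eqref{eq:skein-relation-J} at $v$ yields $J(G;x)=J(G_1;x)+J(G_2;x)$, where $G_1,G_2$ are the two $2$-in/$2$-out digraphs obtained by the two local resolutions of $v$; each has $p-1$ vertices, and both are connected because $v$ is not a cut vertex. The key lemma — and the technical heart of the proof — is that the interlace graphs of suitable Euler circuits of $G_1$ and $G_2$ are exactly $H\backslash v$ and $\widetilde H$, where $\widetilde H$ is obtained from $H$ by pivoting (edge local complementation) along the edge $vw$ and then deleting $w$. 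Granting this, the induction hypothesis gives $J(G_i;x)=x\,q(H(C_i);x+1)$ for the corresponding circuits $C_i$, and the Arratia--Bollob\'as--Sorkin pivot recursion $q(H;x)=q(H\backslash v;x)+q(\widetilde H;x)$ for the edge $vw$ closes the induction.

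I expect the key lemma to be the main obstacle. It calls for a careful analysis of how each of the two resolutions of $v$ acts on the double-occurrence word (chord diagram) encoding $C$: one resolution simply deletes the two occurrences of $v$, so its interlacement graph is visibly $H\backslash v$; the other resolution first splits $C$ into two closed walks, which must be re-glued into a single Euler circuit of the connected digraph $G_2$, and one then has to verify that the interlacement relation of the re-glued circuit is governed exactly by the pivot on $vw$ (including checking that it is $w$, not $v$, that gets deleted). This is the combinatorial core of \cite[Theorem 24]{arratia2004interlace}; the rest of the argument is bookkeeping with the two recursions and the two base cases supplied by Theorem~\ref{thm:cut-vertex-relation} and the single-loop case.
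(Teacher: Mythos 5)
The paper does not prove this statement; it is quoted from Arratia, Bollob\'as and Sorkin \cite{arratia2004interlace} (their Theorem 24), so there is no internal argument to compare against. Your reconstruction nevertheless follows the route taken in that reference: read $J$ and $m$ as generating functions over transition systems to obtain $J(G;x)=x\,m(G;x+1)$, then prove $J(G;x)=x\,q(H(C);x+1)$ by induction on the number of vertices, matching the skein relation~\eqref{eq:skein-relation-J} on the $J$ side against the deletion--pivot recursion for $q$. Your handling of the base case and of the cut-vertex branch --- via Theorem~\ref{thm:cut-vertex-relation}, the identification of cut vertices of $G$ with isolated vertices of $H(C)$, and the factorization $q(H;x)=x\,q(H-v;x)$ when $v$ is isolated --- is correct as far as it goes.

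The genuine gap is the ``key lemma'' that you yourself flag as the main obstacle: that resolving a non-cut vertex $v$ of $G$ in its two possible ways produces two connected $2$-in/$2$-out digraphs admitting Euler circuits whose interlace graphs are exactly $H(C)-v$ and $H(C)^{vw}-w$ for a suitable interlaced neighbour $w$. This is the entire combinatorial content of the theorem; in the cited reference it rests on Kotzig's $\kappa$-transformation lemma, which shows that recomposing an Euler circuit at an interlaced pair of vertices corresponds to pivoting in the associated circle graph. Asserting this claim and then ``closing the induction'' does not constitute a proof --- it relocates all the work into an unverified statement, and in particular it does not justify that it is $w$ (rather than $v$) that is deleted after the pivot, which is exactly the part you say you have not checked. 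A second, smaller gap: the pivot recursion $q(H)=q(H-v)+q(H^{vw}-w)$ is the \emph{defining} recursion for $q$ in the ABS paper, whereas the present paper defines $q$ by the nullity sum $\sum_{S\subseteq V}(x-1)^{\dim\ker A_{H[S]}}$; to use the recursion in your induction you must also invoke the (nontrivial, Aigner--van der Holst) fact that the two definitions coincide.
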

Let us recall a bit of terminology. An Euler circuit $C$ of a connected digraph $G$ is a circuit passing through every edge of $G$ exactly once. For a $2$-in/$2$-out digraph $G$ the interlace graph $H(C)$ of one of its Euler circuits $C$ is the (non oriented!) graph defined as having the same vertex set than $G$ and having an edge between vertices $a$ and $b$ if those are interlaced in $C$. That is, they appear in the following way $C = \ldots a \ldots b \ldots a \ldots b \ldots$, see \cite{arratia2004interlace} for these definitions. It is easy to deduce that connected $2$-in/$2$-out digraphs $G$ having an Euler circuit with no interlace pairs have an interlace graph $H(C)$ maximizing the degree of $q(H(C);x)$ since $H(C)$ is empty (thus $\textrm{dim }\textrm{ker } A_{H(C)[S]}=|S|$ for all $S\subseteq V$).\\

In \cite{arratia2004interlace} the following result is established.
 \begin{lemma}
 If a $2$-in/$2$-out digraph $G$ has an Euler circuit $C$ with no interlace pairs, then $G$ has only one Euler circuit. 
 \end{lemma}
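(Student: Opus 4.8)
The plan is to derive the result directly from Theorem \ref{thm:Arratia}, by reading off the coefficient $j_1$ of the circuit counting polynomial. The point is that the number of Euler circuits of $G$ is exactly $j_1$: a cover of $G$ by a \emph{single} cycle is precisely a closed walk traversing each edge exactly once in its orientation, i.e.\ an Euler circuit, and distinct Euler circuits give distinct such covers. (Since $G$ is $2$-in/$2$-out, a cover of $G$ into circuits is the same data as a choice, at each vertex, of one of the two ways to pair the two incoming half-edges with the two outgoing half-edges; a one-circuit cover is thus nothing but an Euler circuit, with no overcounting coming from cyclic rotation or from the choice of a base edge, and no reversal being available in a digraph.) Note also that a $2$-in/$2$-out digraph admitting an Euler circuit is automatically connected and Eulerian, so Theorem \ref{thm:Arratia} applies. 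Hence it suffices to prove $j_1 = 1$.

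To compute $j_1$ I would evaluate the interlace polynomial of the interlace graph $H = H(C)$. By hypothesis $C$ has no interlace pairs, so $H$ is edgeless on the vertex set $V(G)$; consequently, for every $S \subseteq V(G)$ the adjacency matrix $A_{H[S]}$ is the zero matrix, whence $\dim \ker A_{H[S]} = |S|$. Substituting into the definition of $q$ and using the binomial theorem,
$$q(H; x) = \sum_{S \subseteq V(G)} (x-1)^{|S|} = x^{|V(G)|}.$$
Theorem \ref{thm:Arratia} then gives $J(G; x) = x\, q(H; x+1) = x(x+1)^{|V(G)|}$, so the coefficient of $x^1$ is $j_1 = 1$. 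By the previous paragraph this says exactly that $G$ has a unique Euler circuit, which proves the lemma.

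The only delicate point — and the main (rather modest) obstacle — is the identification of $j_1$ with the number of Euler circuits, which rests entirely on the $2$-in/$2$-out hypothesis and the resulting bijection between circuit decompositions of the edge set and transition systems; once that is granted the argument is a one-line polynomial evaluation. As an internal consistency check, one may also observe that an Euler circuit with no interlace pairs forces every vertex of $G$ to be a cut vertex (the two occurrences of any vertex $v$ in $C$ bound a sub-walk that is a closed walk from $v$ to $v$, so deleting $v$ disconnects $G$); then equation \eqref{eq:only-cut-J} independently yields $J(G; x) = x(x+1)^{|V(G)|}$ and hence the same conclusion $j_1 = 1$.
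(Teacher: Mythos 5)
The paper does not prove this lemma; it simply cites \cite{arratia2004interlace}. Your argument is a valid, self-contained derivation and it takes a cleaner route than a direct combinatorial proof of uniqueness: you read the answer off Theorem \ref{thm:Arratia}. The two key ingredients both check out. First, the identification $j_1 = $ (number of Euler circuits) is exactly right: for a $2$-in/$2$-out digraph, circuit covers are in bijection with transition systems (a choice of one of the two in/out pairings at each vertex), a one-circuit cover is precisely an Euler circuit, and the circuit uniquely determines the transitions, so there is no over- or under-counting. Second, the edgeless interlace graph gives $q(H;x)=\sum_{S\subseteq V}(x-1)^{|S|}=x^{|V(G)|}$, hence $J(G;x)=x(x+1)^{|V(G)|}$ and $j_1=1$. (As a sanity check against Proposition \ref{prop:J-alpha=id}: the star interlace graph there yields $J(G_{\gamma,\gamma^{-1}};x)=(x+1)^n+x^2-1$, so $j_1=n$, matching $\binom{n}{1}$, consistent with your reading of $j_1$.) The direct combinatorial argument one would extract from \cite{arratia2004interlace} instead works with transpositions/$\kappa$-transformations of Euler tours: two Euler circuits differ by a transposition at an interlaced pair, so if no pair is interlaced no other Euler circuit exists. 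Your polynomial argument buys economy here since Theorem \ref{thm:Arratia} is already invoked in the paper, whereas the transposition argument is more elementary and independent of the Martin/interlace identity. Your cut-vertex cross-check is also correct and, via \eqref{eq:only-cut-J}, gives the same $J(G;x)=x(x+1)^{|V(G)|}$; it is in fact the more self-contained of your two arguments since it avoids the black box of Theorem \ref{thm:Arratia} entirely.
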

 A simple consequence is that $2$-in/$2$-out digraphs $G$ maximizing the degree of the interlace polynomial of their interlace graphs have a unique Euler circuit, see also Lemma \ref{lem:max-J-only-cut}. We can now give a very useful bound for the degree of the circuit counting polynomial. We denote by $K(G)$ the number of connected components of the graph $G$; here, we do not care about the orientation of the edges in $G$ when defining connected components.

\begin{lemma}\label{lem:UB-deg-J}
Given a 2-in/2-out digraph $G$ on $n$ vertices, we have 
$$\deg J(G;x) \leq n + K(G).$$
\end{lemma}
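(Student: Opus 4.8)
The plan is to combine the multiplicativity of $J$ (Lemma~\ref{lem:mult-J}) with a degree bound on each connected component, and for the component bound to invoke Arratia--Bollob\'as--Sorkin (Theorem~\ref{thm:Arratia}) together with the degree bound \eqref{eq:UB-deg-q} for the interlace polynomial. First I would reduce to the connected case: if $G$ has connected components $G_1,\dots,G_{K(G)}$ with $n_i$ vertices each ($\sum_i n_i = n$), then Lemma~\ref{lem:mult-J} gives $\deg J(G;x) = \sum_{i=1}^{K(G)} \deg J(G_i;x)$, so it suffices to show $\deg J(G_i;x) \le n_i + 1$ for each connected $2$-in/$2$-out digraph $G_i$; summing over $i$ then yields $\deg J(G;x) \le n + K(G)$.

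For a connected $2$-in/$2$-out digraph $G_i$, I would first note that being $2$-in/$2$-out forces $G_i$ to be Eulerian (every vertex has equal in- and out-degree and the graph is connected), so it admits an Euler circuit $C$; let $H = H(C)$ be the associated interlace graph, which has exactly $n_i$ vertices. By Theorem~\ref{thm:Arratia}, $J(G_i;x) = x\, q(H;x+1)$, hence $\deg J(G_i;x) = 1 + \deg q(H;x+1) = 1 + \deg q(H;y)$ (degree is unchanged by the shift $y = x+1$). Now apply the elementary bound \eqref{eq:UB-deg-q}, $\deg q(H;x) \le |H| = n_i$, to conclude $\deg J(G_i;x) \le n_i + 1$, as required.

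Assembling the pieces: $\deg J(G;x) = \sum_{i=1}^{K(G)} \deg J(G_i;x) \le \sum_{i=1}^{K(G)} (n_i + 1) = n + K(G)$. I do not anticipate a serious obstacle here, since every ingredient has been set up in the preceding paragraphs; the only point that needs a word of care is the observation that $2$-in/$2$-out plus connectedness implies the existence of an Euler circuit (so that Theorem~\ref{thm:Arratia} applies), and that the degree of the interlace polynomial is insensitive to the argument shift $x \mapsto x+1$. One could also give a self-contained proof of the connected bound directly from the skein relations \eqref{eq:skein-relation-J} and the cut-vertex relation \eqref{eq:cut-vertex-relation} — each skein move resolves one vertex and splits $J$ into two terms of no larger degree, with the maximal degree attained at the all-cut-vertex configuration where \eqref{eq:only-cut-J} gives degree exactly $n_i+1$ — but routing through Theorem~\ref{thm:Arratia} is the cleanest path.
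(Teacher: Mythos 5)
Your proof is correct and follows essentially the same route as the paper's: reduce to connected components via the multiplicativity of $J$, then apply Theorem~\ref{thm:Arratia} together with the degree bound \eqref{eq:UB-deg-q} on the interlace polynomial. You spell out a couple of small points (Eulerian existence, invariance of degree under the shift $x\mapsto x+1$) that the paper leaves implicit, but the argument is the same.
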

\begin{proof}
We shall use the relation between the circuit counting polynomial and the interlace polynomial from Theorem \ref{thm:Arratia} on the connected components $G_1, G_2, \ldots, G_{K(G)}$ of $G$. The degree of the interlace polynomial of a connected $2$-in/$2$-out digraph with $p$ vertices is bounded by $p$ (see Eq.~\eqref{eq:UB-deg-q}), hence $\deg J(G_i) \leq n_i+1$, where $n_i$ is the number of vertices of $G_i$. Using the multiplicativity property of $J$ from Lemma \ref{lem:mult-J}, we have
 $$\deg J(G; x) = \sum_{i=1}^{K(G)} \deg J(G_i; x) \leq \sum_{i=1}^{K(G)} n_i + 1 = n + K(G).$$
\end{proof}
We then use the above bound to show the next lemma.
\begin{lemma}\label{lem:max-J-only-cut}
   Among all connected digraphs with a given number of vertices, the ones maximizing the degree of the circuit-counting polynomial $J$ are precisely the ones having only cut vertices. 
\end{lemma}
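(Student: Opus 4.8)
The plan is to show a double inequality on the degree: first that any connected $2$-in/$2$-out digraph $G$ on $n$ vertices has $\deg J(G;x) \le n+1$, with equality forced when every vertex is a cut vertex; and second that equality $\deg J(G;x) = n+1$ can \emph{only} hold when every vertex is a cut vertex. The first half is essentially already in place: Eq.~\eqref{eq:only-cut-J} shows that repeated application of the cut-vertex relation \eqref{eq:cut-vertex-relation} gives $J(G;x) = x(x+1)^n$ when all $n$ vertices are cut vertices, so the degree is exactly $n+1$; and Lemma \ref{lem:UB-deg-J} with $K(G)=1$ gives the upper bound $\deg J(G;x) \le n+1$ for any connected $G$. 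So the content of the lemma is the converse: if $G$ is connected, $2$-in/$2$-out, and has at least one vertex $v$ that is \emph{not} a cut vertex, then $\deg J(G;x) \le n$.

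The natural route is through Theorem \ref{thm:Arratia}: pick an Euler circuit $C$ of $G$ (which exists since a connected $2$-in/$2$-out digraph is Eulerian) and its interlace graph $H = H(C)$, so that $J(G;x) = x\,q(H;x+1)$ and hence $\deg J(G;x) = 1 + \deg q(H;x+1) = 1 + \deg q(H;x)$. By the formula for the interlace polynomial, $\deg q(H;x) = \max_{S\subseteq V}\dim\ker A_{H[S]} \le |V| = n$, with equality exactly when $H$ has an empty induced subgraph on all of $V$, i.e.\ when $H$ itself is the empty graph (no edges at all), which is the same as saying $C$ has no interlace pairs. So I must show: if some vertex of $G$ is not a cut vertex, then \emph{every} Euler circuit of $G$ has an interlace pair. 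Equivalently (contrapositive), if $G$ has an Euler circuit with no interlaced pair, then every vertex of $G$ is a cut vertex. This is where I would combine the two cited structural facts: by the quoted Lemma, an Euler circuit with no interlace pairs is the unique Euler circuit of $G$, and I then argue that in such a $G$ each vertex must be a cut vertex. Concretely, given a non-interlaced Euler circuit $C = v\,P\,v\,Q$ (reading $C$ starting at an arbitrary vertex $v$, with $P,Q$ the two arcs between the two visits of $v$), the non-interlacing condition means that along $P$ and along $Q$ each other vertex is also visited as a contiguous block; one can then read off that deleting $v$ and reconnecting disconnects the part of the graph traced by $P$ from the part traced by $Q$ — so $v$ is a cut vertex. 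Making this ``blocks'' argument precise, using the recursive structure of non-crossing (parenthesis-like) return words, is the technical heart.

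Assembling: $\deg J(G;x) = n+1$ $\iff$ $\deg q(H;x) = n$ $\iff$ $H$ is empty $\iff$ $C$ has no interlace pairs $\implies$ $C$ is the unique Euler circuit and every vertex of $G$ is a cut vertex; conversely, if every vertex is a cut vertex then $J(G;x) = x(x+1)^n$ by \eqref{eq:only-cut-J}, so the degree is $n+1$ and this is the maximum over connected digraphs on $n$ vertices by Lemma \ref{lem:UB-deg-J}. This proves the lemma.

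The step I expect to be the main obstacle is the implication ``$C$ has no interlace pair $\implies$ every vertex is a cut vertex'': the degree/interlace bookkeeping is routine given Theorems \ref{thm:Arratia} and the bound \eqref{eq:UB-deg-q}, but converting the combinatorial non-interlacing condition on the Euler circuit into the graph-theoretic statement that \emph{each} vertex separates $G$ requires a careful argument about how a non-crossing system of ``return intervals'' along $C$ forces a nested block decomposition, and hence a cut structure, at every vertex — one has to be careful that it is every vertex, not just one, and that the $2$-in/$2$-out hypothesis is used correctly (it is what guarantees a cycle entering a ``side'' must leave through the matching edge, exactly as in the proof of Theorem \ref{thm:cut-vertex-relation}). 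If a fully self-contained argument proves awkward, an acceptable fallback is to note that this equivalence (non-interlaced Euler circuit $\Leftrightarrow$ all vertices are cut vertices) is implicit in \cite{arratia2004interlace} and cite it, since the quoted Lemma already goes in that direction.
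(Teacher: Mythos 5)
Your route is correct but genuinely different from the paper's, which handles the hard direction with a single skein move rather than through the interlace graph of an Euler circuit. The paper picks a vertex $v$ that is \emph{not} a cut vertex, applies \eqref{eq:skein-relation-J} at $v$ to get $J(G) = J(G') + J(G'')$, observes that both $G'$ and $G''$ are connected on $p-1$ vertices because each contains the connected graph $G - v$ together with two reconnected edges, and then invokes Lemma \ref{lem:UB-deg-J} to conclude $\deg J(G'), \deg J(G'') \le p < p+1$. You instead reduce, via Theorem \ref{thm:Arratia} and the expansion of $q$, to the statement ``$C$ non-interlaced $\Rightarrow$ every vertex is a cut vertex''. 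The argument for this is more elementary than your ``recursive blocks'' sketch suggests, and the unique-Euler-circuit lemma plays no role: fix $v$ and write $C = v\,P\,v\,Q$; non-interlacing with $v$ means every other vertex has both its visits in $P$ or both in $Q$, so $V(P)$ and $V(Q)$ are disjoint; since every edge of $G$ is an arc of $C$ lying entirely inside $V(P)\cup\{v\}$ or entirely inside $V(Q)\cup\{v\}$, deleting $v$ separates $V(P)$ from $V(Q)$. (You do have to exclude the degenerate case where $P$ or $Q$ is empty, i.e., a loop at $v$ --- note the paper's ``$G'$, $G''$ are both connected'' step has the same blind spot when $v$ carries a loop, since one of the skein outcomes then detaches that loop as a free circle.) Both routes are sound; the paper's is more economical, while yours engages the interlace polynomial more directly and makes explicit what ``all cut vertices'' means in Euler-circuit language.
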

\begin{proof}
Assume that a connected digraph $G$ has only $k<p$ out of $p$ vertices which are cut vertices. Then using skein relations \eqref{eq:skein-relation-J}, we can pick a vertex that is not a cut vertex of $G$ and reduce it. We obtain two connected $2$-in/$2$-out digraphs with $p-1$ vertices $G'$ and $G''$. Indeed, they are connected because the vertex we chose to reduce is not a cut-vertex, hence none of the moves disconnect the digraph. Then using the bound of Lemma \ref{lem:UB-deg-J} above we have that $\textrm{deg} \ J(G'), \textrm{deg} \ J (G'')\le p$. We also know from equation \eqref{eq:only-cut-J} that connected $2$-in/$2$-out digraphs with $p$ vertices have circuit counting polynomial of degree $p+1$, proving the claim.
\end{proof}

We introduce a practical notation for the rest of this paper. We denote $G_{\sigma_1,\sigma_2}$, for $\sigma_1, \sigma_2\in S_p$, the $2$-in/$2$-out digraph on $p$ vertices whose (oriented) edge (multi-)set is 
\begin{equation}\label{eq:def-G-a-b}
    E=\{i\rightarrow \sigma_1(i) \, : \,  i\in [p]\} \sqcup \{ i\rightarrow\sigma_2(i) \, : \, i \in [p]\}.
\end{equation}
Note in particular that there is an edge of the form ${i\rightarrow k}$ if either the matrix of the permutation $\sigma_1$ has element $(k,i)$ equal to $1$ or the matrix of the permutation $\sigma_2$ has element $(k,i)$ equal to $1$. If there are two edges of the form ${i\rightarrow k}$ then both matrices of the permutations $\sigma_1, \sigma_2$ have elements $(k,i)$ equal to $1$. Finally, there is no edge from $i$ to $k$ if and only if both $\sigma_1, \sigma_2$ have elements $(k,i)$ equal to $0$. These facts together with a moment of reflection reveal that the adjacency matrix $A_{G_{\sigma_1,\sigma_2}}$ of the $2$-in/$2$-out digraph is the sum of the two matrices of the permutations $\sigma_1, \sigma_2$
\begin{equation}
    A_{G_{\sigma_1,\sigma_2}}= \sigma_1+\sigma_2,
\end{equation}
where we used the same notation for the permutations $\sigma_{1,2}$ and their matrix representations. In general, we also have the reciprocal, that holds in a more general setting,
\begin{proposition}
   Given a $k$-in/$k$-out digraph $G$ with $p$ vertices, there exists $k$ permutations $\sigma_1,\sigma_2,\ldots, \sigma_k\in S_p$ such that 
   \begin{equation}
       A_{G}=\sum_{i=1}^{k}\sigma_i
   \end{equation}
   where $A_G$ is the adjacency matrix of the digraph $G$.
\end{proposition}
\begin{proof}
First notice that due to the regularity of the digraph $G$, $A_G$ is, up-to-normalization, a bistochatic matrix. Indeed, the $k$-in/$k$-out property implies that every row and every column of $A_G$ must sum to $k$. From this remark, the proposition is a trivial consequence of the Birkhoff-von Neumann algorithm applied to $A_G$.
\end{proof}
This proposition implies that by choosing to index $2$-in/$2$-out digraphs with two permutations we did not restrict the set of digraphs we are looking at. However note that the choice of permutations, given a digraph, is not unique and there may be different collections of permutations representing the same digraph. It is possible to describe classes of permutations that lead to equivalent digraphs, but this will be of no use to us. \\
\begin{figure}
    \centering
    \includegraphics[scale=0.8]{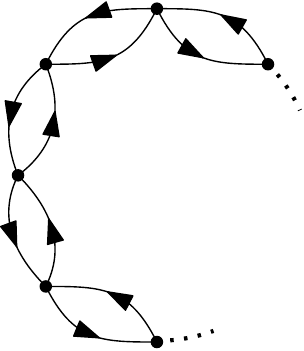}
    \caption{Structure of a digraph $G_{\gamma,\gamma^{-1}}$.}
    \label{fig:digraph-gamma-gamma-inv}
\end{figure}
Assume $\gamma=(1,2,3,\ldots, n)\in S_n$ is the full cycle permutation, then
\begin{proposition}\label{prop:J-alpha=id}
   Consider the $2$-in/$2$-out digraph $G_{\gamma, \gamma^{-1}}$, where $\gamma\in S_n$ is the full cycle. The number of covers of $G_{\gamma, \gamma^{-1}}$ having $k$ cycles is given by
   \begin{equation}
    j_k=\binom{n}{k}+\delta_{2,k}.
   \end{equation}
   In particular, 
   \begin{equation}
       \mathrm{deg }\, J(G_{\gamma, \gamma^{-1}})=n.
   \end{equation}
\end{proposition}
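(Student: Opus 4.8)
The plan is to realize $G_{\gamma,\gamma^{-1}}$ explicitly as a ``doubled cycle'' and then enumerate its transition systems (vertex states) directly, from which $J$ can be read off. Writing indices modulo $n$, we have $\gamma(i)=i+1$ and $\gamma^{-1}(i)=i-1$, so the edge multiset of $G_{\gamma,\gamma^{-1}}$ is $\{i\to i+1\}_{i\in[n]}\sqcup\{i\to i-1\}_{i\in[n]}$. Thus between each pair of cyclically consecutive vertices $\{i,i+1\}$ there sits a ``bond'' $b_i$ consisting of one edge $e_i^+:=\{i\to i+1\}$ (coming from $\gamma$) and one edge $e_i^-:=\{i+1\to i\}$ (coming from $\gamma^{-1}$); see Figure \ref{fig:digraph-gamma-gamma-inv}. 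The digraph is connected, $2$-in/$2$-out, hence Eulerian.

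I would then use the state description underlying the skein relation \eqref{eq:skein-relation-J}: $j_k$ equals the number of ways to assign one of two \emph{states} to every vertex so that the resulting partition of the edge set into directed circuits has exactly $k$ parts. At vertex $i$ the two incoming edges are $e_{i-1}^+$ and $e_i^-$ and the two outgoing edges are $e_i^+$ and $e_{i-1}^-$; the two states are the \emph{pass-through} state $A$, pairing $e_{i-1}^+\to e_i^+$ and $e_i^-\to e_{i-1}^-$ (the clockwise and counterclockwise ``tracks'' slide through $i$), and the \emph{reflecting} state $B$, pairing $e_{i-1}^+\to e_{i-1}^-$ and $e_i^-\to e_i^+$ (state $B$ ``caps'' the two bonds $b_{i-1}$ and $b_i$ at their $i$-end).

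Next I would analyze the circuit structure as a function of the set $\mathcal B\subseteq[n]$ of vertices in state $B$. If $\mathcal B=\emptyset$ the circuits are exactly the clockwise cycle $(e_1^+,\dots,e_n^+)$ and the counterclockwise cycle $(e_n^-,\dots,e_1^-)$: this gives $k=2$, and accounts for one assignment. If $\mathcal B\ne\emptyset$, the $B$-vertices cut the cyclic sequence of bonds $b_1,\dots,b_n$ into $|\mathcal B|$ arcs (``blocks''); following the pairings, in each block the two caps at its endpoints together with the pass-through connections inside glue the $+$ and $-$ edges of that block into a single directed closed trail, so the number of circuits is exactly $|\mathcal B|$. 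Hence the number of state assignments producing exactly $k$ circuits is $\binom{n}{k}$ (those with $|\mathcal B|=k\ge1$) plus an extra $1$ when $k=2$ (coming from $\mathcal B=\emptyset$), i.e.\ $j_k=\binom{n}{k}+\delta_{2,k}$. In particular $j_n=1$ and $j_k=0$ for $k>n$, so $\deg J(G_{\gamma,\gamma^{-1}})=n$ (for $n\ge2$).

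The only delicate point is the block-decomposition claim: one must track, at each intermediate ($A$) vertex and at the two $B$-endpoints of a block, which of the four incident half-edges is incoming and which is outgoing, so as to check that the traced-out closed walk is indeed a single legitimately oriented circuit using every edge of the block once. This bookkeeping, though routine, is the heart of the argument. As an alternative route, one could instead exhibit the Euler circuit $1\xrightarrow{e_1^+}2\to\cdots\to n\xrightarrow{e_n^+}1\xrightarrow{e_n^-}n\to\cdots\to 2\xrightarrow{e_1^-}1$, verify that its interlace graph is the star $K_{1,n-1}$ centred at vertex $1$, and then apply Theorem \ref{thm:Arratia} together with the elementary identity $q(K_{1,n-1};y)=2y+y^2+\cdots+y^{n-1}$, which yields $J(G_{\gamma,\gamma^{-1}};x)=x\,q(K_{1,n-1};x+1)=(1+x)^n-1+x^2$.
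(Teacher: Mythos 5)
Your proof is correct and follows essentially the same strategy as the paper's: enumerate cycle covers by the set $\mathcal{B}$ of reflecting vertices, observe that $\mathcal{B}=\emptyset$ yields two circuits while any nonempty $\mathcal{B}$ yields exactly $|\mathcal{B}|$ circuits, and confirm the degree via the star interlace graph of the canonical Euler circuit. Your closing alternative route --- recovering the full polynomial $J(G_{\gamma,\gamma^{-1}};x)=(1+x)^n-1+x^2$ from $q(K_{1,n-1};y)=2y+y^2+\cdots+y^{n-1}$ via Theorem~\ref{thm:Arratia} --- is a pleasant cross-check that goes slightly beyond the paper, which uses the star structure only to extract the degree.
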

\begin{proof}
The proof follows from the fact that
$G_{\gamma,\gamma^{-1}}$ is of the form depicted in Fig \ref{fig:digraph-gamma-gamma-inv}. Then it is sufficient to notice that at each vertex of $G_{\gamma,\gamma^{-1}}$ we have the choice to either come back to where we came from (vertex state $1$) or to keep going in the same direction (vertex state $2$). The case for which we decide to keep going at all vertices (\textit{i.e.} we pick state $2$ at all vertices) leads to two cycles (responsible for the $\delta_{2,k}$ in the above formula). While if we decide to have one vertex with state $1$ and the others with state $2$ then we get one cycle. From that, a moment of reflection reveals that each time we assign an additional vertex the state $1$ we create an additional cycle. Thus there are $\binom{n}{k}$ choice of states leading to a cycle cover with $k$ cycles of $G_{\gamma,\gamma^{-1}}$. \\

It follows that the degree of  $J(G_{\gamma, \gamma^{-1}})$ is $n$. This last result can also easily be obtained by constructing an interlace graph $H_{\gamma,\gamma^{-1}}$ from an Eulerian cycle $C_{\gamma,\gamma^{-1}}$ of $G_{\gamma, \gamma^{-1}}$.
It is easy to exhibit such an Eulerian cycle: $C_{\gamma,\gamma^{-1}}=1,2,3,\ldots,n,1,n,\ldots,3,2$. A pair $(1,q)$ is clearly interlaced for any $q=2,\ldots, n$ while any other pair $(a,b)$ for $a \textrm{ and }b\neq 1$ is not interlaced. Thus $H_{\gamma,\gamma^{-1}}$ is the star graph with the vertex $1$ at its center and thus $\underset{{X\textrm{ minor of } H}}{\textrm{max }}\textrm{dim }\textrm{ker } X=n$ the maximizing minor being the graph obtained from $H$ by removing vertex $1$.
\end{proof}

Finally, we now relate the circuit counting graph polynomial $J$ to a generalized trace of the projector on the symmetric subspace $P_s$ (see also Proposition \ref{prop:expanded-Wick-thm} for a more general result obtained using the formalism of tensor networks). 

\begin{theorem}\label{thm:graph-poly-trace-Psym}
For all permutations $\alpha, \beta \in S_p$, we have
$$\Tr_{\alpha,\beta}(P_{s}, \ldots, P_{s}) = 2^{-p} J(G_{\alpha,\beta},N).$$
\end{theorem}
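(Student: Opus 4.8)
The plan is to unfold the definition of the generalized bipartite trace and recognize it as a sum over assignments of one of two "states" to each of the $p$ tensor slots, which is exactly the combinatorial data that defines a cycle cover of $G_{\alpha,\beta}$. First I would recall that $\Tr_{\alpha,\beta}(P_s,\ldots,P_s)$ is, by definition, the contraction of $p$ copies of $P_s$ (each a matrix on $\mathcal H\otimes\mathcal H = (\mathbb C^N)^{\otimes 2}$) where the output wires of the $i$-th copy are permuted by $\alpha$ and $\beta$ (one permutation for each of the two tensor legs) before being fed into the next copies. Since each $P_s = \tfrac12(I+F)$, I would expand the product $\prod_{i=1}^p P_s$ into $2^p$ terms, one for each function $s : [p] \to \{I, F\}$; this produces the prefactor $2^{-p}$ immediately. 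Each term is then a product of $p$ operators, each either the identity or the flip $F$ on the two legs, contracted according to the index pattern encoded by $\alpha$ and $\beta$.

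The key step is to identify each of the $2^p$ resulting scalar contractions with $N^{(\text{number of loops})}$, and to show that the number of loops equals the number of cycles in the corresponding cover of $G_{\alpha,\beta}$. Here I would use the digraph picture: the $2$-in/$2$-out digraph $G_{\alpha,\beta}$ has, at each vertex $i$, two incoming and two outgoing edges, with the outgoing edges going to $\alpha(i)$ and $\beta(i)$ and the two adjacency matrices being $\alpha$ and $\beta$ (this is exactly the setup of \eqref{eq:def-G-a-b}). Choosing $s(i) = I$ at vertex $i$ means "the two wires pass straight through," i.e. the local pairing keeps the $\alpha$-strand with the $\alpha$-strand and the $\beta$-strand with the $\beta$-strand; choosing $s(i) = F$ swaps them. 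But a choice, at every vertex, of how to pair up the two incoming half-edges with the two outgoing half-edges is precisely a transition system on $G_{\alpha,\beta}$, and each transition system decomposes the edge set into a disjoint union of closed walks — a cycle cover. Each such closed cycle, when the tensor network is evaluated, contributes a single free index summed over $[N]$, hence a factor of $N$; so the term labelled by $s$ evaluates to $N^{c(s)}$ where $c(s)$ is the number of cycles in the associated cover. Summing, $\Tr_{\alpha,\beta}(P_s,\ldots,P_s) = 2^{-p}\sum_s N^{c(s)} = 2^{-p}\sum_{k\ge 1} j_k N^k = 2^{-p} J(G_{\alpha,\beta}; N)$, which is the claim.

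I expect the main obstacle to be the careful bookkeeping in the middle step: verifying that the two local choices $\{I, F\}$ at a vertex correspond to exactly the two ways a cycle cover can route through that vertex of a $2$-in/$2$-out digraph (and not, say, to the "two edges of the same strand form their own loop" configuration, which would be a third routing in a general graph but is excluded here because the two outgoing edges at vertex $i$ are distinguished as "the $\alpha$-edge" and "the $\beta$-edge"), and that the index contraction genuinely produces one summed index per cycle. The cleanest way to make this rigorous is probably to phrase everything in the tensor-network / diagrammatic formalism the paper sets up later (the excerpt points forward to Proposition \ref{prop:expanded-Wick-thm}), drawing $P_s$ as a box that is a sum of a "straight" diagram and a "swap" diagram, and then observing that gluing these boxes along the $\alpha$ and $\beta$ wiring yields a disjoint union of loops whose count is read off directly as the number of cycles of the cover. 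Everything else — the factor $2^{-p}$, the passage from $\sum_s N^{c(s)}$ to $\sum_k j_k N^k$ — is immediate from the definitions of $P_s$ and of $J$.
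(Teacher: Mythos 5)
Your proposal is correct and follows essentially the same route as the paper's proof: expand each $P_s = \tfrac12(I+F)$ to get the $2^{-p}$ prefactor and a sum over $f:[p]\to\{I,F\}$, identify each resulting contraction with a cycle cover of $G_{\alpha,\beta}$, and read off $N^{(\#\text{cycles})}$ from the loop count in the tensor network. The one place you spend extra care --- checking that the two local choices $\{I,F\}$ at a vertex of a $2$-in/$2$-out digraph exhaust exactly the transitions relevant to a directed cycle cover, with no third ``self-loop'' routing --- is a subtlety the paper's very brief proof leaves implicit, so that added bookkeeping is a genuine improvement in rigor, not a deviation in method.
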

\begin{proof}
We start from $P_s = (I + F)/2$ and expand the left hand side of the equation in the statement as a sum over all possible choices of terms: 
$$\Tr_{\alpha,\beta}(P_{s}, \ldots, P_{s}) = 2^{-p} \sum_{f : [p] \to \{I,F\}} \Tr_{\alpha,\beta}\left(\bigotimes_{i=1}^p f(i) \right).$$

Above, the generalized trace of the right hand side, in diagrammatic notation, is a collection of loops evaluating to $N$ raised to the power of the number of loops. Clearly, each such loop corresponds to a circuit in the digraph $G$ given by the permutations $\alpha$ and $\beta$, once the moves $f(i)$ have been applied at each vertex $i \in [p]$, see \eqref{eq:move-J}.
\end{proof}

\section{Moments of \texorpdfstring{$W^\Gamma$}{W Gamma}}\label{sec:moments}

In this section, we shall express the moments of the random matrix $W^\Gamma$ (see Figure \ref{fig:W-Gamma}) as a sum of circuit counting polynomials, allowing us to determine their asymptotic behaviour. On the way, we shall establish several useful properties of the aforementioned polynomials, which shall also be useful in the later sections. The main result, Theorem \ref{thm:moments-WGamma-3-asympt} showcases the power of the relation between the generalized bipartite traces of $P_s$ and the circuit counting polynomial from Theorem \ref{thm:graph-poly-trace-Psym}. 

\begin{figure}
    \centering
    \includegraphics{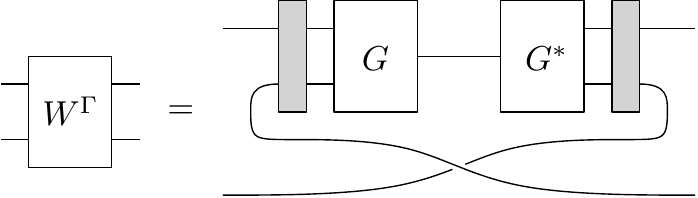}
    \caption{The partial transposition $W^\Gamma$ of a symmetrized Wishart matrix.}
    \label{fig:W-Gamma}
\end{figure}
We first establish the exact form of the moments of $W^\Gamma$ in terms of $J$ polynomials.

\begin{proposition}\label{prop:moments-W-Gamma}
   The moments of the random matrix $W^\Gamma \in \mathcal M_{N^2}(\mathbb C)$ from Eq.~\eqref{eq:def-W-Gamma} are given by:
\begin{equation}\label{eq:moments-W-Gamma}
\forall p \geq 1, \qquad     \E\Tr\left((W^{\Gamma})^p\right) =  2^{-p} \sum_{\alpha\in S_p}M^{\#\alpha}J(G_{\gamma\alpha, \gamma^{-1}\alpha};N).
\end{equation}
\end{proposition}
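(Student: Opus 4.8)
The plan is to apply the Wick--Isserlis theorem to the Gaussian tensor $G$, organize the resulting sum over pairings as a sum over permutations $\alpha \in S_p$, and then recognize the diagrammatic contribution of each $\alpha$ as a generalized bipartite trace of the symmetric projector $P_s$, at which point Theorem~\ref{thm:graph-poly-trace-Psym} converts it into a circuit counting polynomial. First I would write out $\E\Tr\left((W^\Gamma)^p\right)$ by expanding each factor $W^\Gamma = \left([\id\otimes\mathrm{transp}]\circ P_s \Tr_E(GG^*) P_s\right)$; the only randomness sits in the $p$ copies of $G$ and $p$ copies of $G^*$, so Wick's theorem produces a sum over pairings $\alpha$ between the $G$'s and the $G^*$'s, each pairing contributing a product of Kronecker deltas on the indices of $\mathcal H^{\otimes 2}$ and on the environment indices of $\mathcal H_E$.

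Next I would carry out the index bookkeeping. Each pairing is naturally indexed by a permutation $\alpha\in S_p$; the environment contractions produce the factor $M^{\#\alpha}$, where $\#\alpha$ is the number of cycles of $\alpha$, exactly as in the standard (non-symmetrized) Wishart moment formula (this matches the first line of the Remark following Proposition~\ref{prop:moments-W}). The remaining contractions live entirely on the system space $(\mathbb C^N)^{\otimes 2}$ for each of the $p$ positions; the cyclic trace structure on the first tensor leg contributes a cycle $\gamma = (1\,2\,\cdots\,p)$, the partial transposition on the second tensor leg reverses the orientation and contributes $\gamma^{-1}$, and the pairing $\alpha$ enters on both legs. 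After reorganizing, the system-space contribution of a given $\alpha$ is precisely the generalized bipartite trace $\Tr_{\gamma\alpha,\,\gamma^{-1}\alpha}(P_s,\ldots,P_s)$ of $p$ copies of $P_s$, with the two permutation arguments recording how the two tensor legs are wired through the $p$ symmetrizers. Then Theorem~\ref{thm:graph-poly-trace-Psym} gives $\Tr_{\gamma\alpha,\gamma^{-1}\alpha}(P_s,\ldots,P_s) = 2^{-p} J(G_{\gamma\alpha,\gamma^{-1}\alpha};N)$, and summing over $\alpha$ yields \eqref{eq:moments-W-Gamma}.

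The main obstacle is the careful diagrammatic identification of the system-space contractions with the generalized trace $\Tr_{\gamma\alpha,\gamma^{-1}\alpha}$, in particular verifying that the partial transposition swaps the roles of the incoming and outgoing wires on the second leg so that the relevant permutation is $\gamma^{-1}\alpha$ rather than $\gamma\alpha$, and checking that the two symmetrizers flanking each $GG^*$ block merge correctly into a single $P_s$ per vertex (using $P_s^2 = P_s$) without producing spurious index contractions across the environment. I expect this to be a routine-but-delicate tensor-network computation, most transparently done by drawing the diagram of Figure~\ref{fig:W-Gamma} traced $p$ times and tracking each strand; the rest---the Wick expansion and the extraction of $M^{\#\alpha}$---is entirely standard.
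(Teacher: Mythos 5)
Your proposal is correct and matches the paper's proof essentially step for step: both apply the Wick formula in tensor-network form to reduce to a sum over pairings $\alpha \in S_p$, extract the environment factor $M^{\#\alpha}$, identify the remaining system-space contraction with the generalized bipartite trace $\Tr_{\gamma\alpha,\gamma^{-1}\alpha}(P_s,\ldots,P_s)$ (with $\gamma$ from the trace and $\gamma^{-1}$ from the partial transposition), and invoke Theorem~\ref{thm:graph-poly-trace-Psym}. The one point you flag as delicate --- that adjacent symmetrizers merge via $P_s^2 = P_s$ once Wick contracts $G_i$ with $G^*_{\alpha(i)}$, so each pairing cycle carries a single $P_s$ per vertex --- is exactly the mechanism the paper uses and spells out more fully in Section~\ref{sec:diagrammatics}.
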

\begin{proof}
The proof is a rather simple application of the Wick formula, in its tensor network incarnation \cite{collins2011gaussianization}. We need to evaluate the expectation of the trace of the product of $p$ copies of the matrix from Figure \ref{fig:W-Gamma}. By the Wick theorem, the result is a sum over permutations $\alpha \in S_p$ of diagrams obtained by connecting the $i$-th $G$ box to the $\alpha(i)$-th $G^*$ box

Consider now the 2-in/2-out digraph $G_{\gamma\alpha, \gamma^{-1}\alpha}$ having edges (see \eqref{eq:def-G-a-b} for the general case)
\begin{equation}\label{eq:def-G-ga-gma}
    \{ i \mapsto \gamma(\alpha(i)) \}_{i \in p} \sqcup \{ i \mapsto \gamma^{-1}(\alpha(i)) \}_{i \in p}.
\end{equation}
The formula in the statement follows now from Theorem \ref{thm:graph-poly-trace-Psym}.
\end{proof}

Let us compute next the exact form of the first two moments. For the first moment, we have 
$$\E\Tr W^{\Gamma}= \frac{M}{2}J(\, \includegraphics[valign=c]{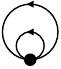}\, ;N) = \frac{MN(N+1)}{2} \sim \frac c 2 N^4.$$
Note that this computation is consistent with the result from Proposition \ref{prop:eigenval-eigenvec-average}. 

For the second moment ($p=2$), we have to sum over the cases $\alpha=\id$ and $\alpha=(12)$:
\begin{align*}\E \Tr\left((W^{\Gamma})^2\right) &=  \frac{M^2}{4}J(\, \includegraphics[valign=c]{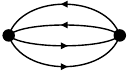}\, ;N)+ \frac{M}{4}J(\, \includegraphics[valign=c]{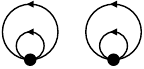}\, ;N)\\
&=\frac{M^2}{4}2N(N+1)+\frac{M}{4}N^2(N+1)^2 \sim \left(\frac{c^2}{2} + \frac c 4\right)N^6,
\end{align*}
where, for the case $\alpha = \id$, we have used Proposition \ref{prop:J-alpha=id} with $n=2$. 

To analyze larger moments, we need the following technical result. 

\begin{proposition}\label{prop:properties-J-G-alpha}
   For a given permutation $\alpha\in S_p$, consider the 2-in/2-out digraph $G_{\gamma\alpha, \gamma^{-1}\alpha}$ defined in Eq.~\eqref{eq:def-G-ga-gma}. Then, $G_{\gamma\alpha, \gamma^{-1}\alpha}$ has either one or two connected components. It has two connected components if and only if the permutation $\alpha$ is \emph{parity changing}: for all $i \in [p]$, $\alpha(i) \neq i (\mathrm{mod}\, 2)$. 
\end{proposition}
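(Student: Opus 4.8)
The plan is to analyze the underlying (unoriented) connectivity of $G_{\gamma\alpha,\gamma^{-1}\alpha}$ directly in terms of a walk on the vertex set $[p]$. The key observation is that the edges of this digraph, forgetting orientation, connect $i$ to $\gamma(\alpha(i))$ and $i$ to $\gamma^{-1}(\alpha(i))$ for each $i$. It is cleaner to relabel: since $\alpha$ is a bijection of $[p]$, connectivity of $G_{\gamma\alpha,\gamma^{-1}\alpha}$ as an unoriented graph is the same as connectivity of the graph $H$ on $[p]$ whose edges join $j$ to $\gamma(j)$ and $j$ to $\gamma^{-1}(j)$ after we pass through $\alpha$ — more precisely, the map $i \mapsto \alpha(i)$ is a graph isomorphism from $G_{\gamma\alpha,\gamma^{-1}\alpha}$ (unoriented) onto the graph with edge set $\{\alpha(i) \sim \gamma\alpha(i)\} \sqcup \{\alpha(i) \sim \gamma^{-1}\alpha(i)\}$, which, reindexing by $j=\alpha(i)$, is exactly the graph with edges $\{j \sim \gamma(j)\}_{j}\sqcup\{j\sim\gamma^{-1}(j)\}_j$. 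But that last graph is just the cycle $1 - 2 - 3 - \cdots - p - 1$ (each edge appearing with multiplicity two), which is connected! So one must be more careful: the relevant identification is not quite this naive one, because the two edge families are tied together at the \emph{same} vertex $i$, not at $\alpha(i)$ for one family and $i$ for the other. I would instead track the walk structure honestly: starting from a vertex $i$, the neighbours are $\gamma(\alpha(i))$ and $\gamma^{-1}(\alpha(i))$; but also $i$ is a neighbour of any $i'$ with $\gamma(\alpha(i')) = i$ or $\gamma^{-1}(\alpha(i'))=i$, i.e. $i' = \alpha^{-1}\gamma^{-1}(i)$ or $i'=\alpha^{-1}\gamma(i)$.

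First I would reduce to understanding the orbit of a single vertex under the "move to a neighbour" relation. Two vertices $i,i'$ are adjacent in the unoriented graph iff $i' \in \{\gamma\alpha(i),\, \gamma^{-1}\alpha(i),\, \alpha^{-1}\gamma(i),\, \alpha^{-1}\gamma^{-1}(i)\}$. A cleaner bookkeeping device: consider the set $[p]$ and the two permutations $\phi := \gamma\alpha$ and $\psi := \gamma^{-1}\alpha$. An edge of the digraph is $i \to \phi(i)$ or $i\to\psi(i)$. The unoriented connected component of $i$ is the orbit of $i$ under the group generated by $\phi,\psi$, hence under $\langle \phi\psi^{-1}, \phi \rangle = \langle \phi^{-1}\psi, \dots\rangle$; and $\phi\psi^{-1} = \gamma\alpha\alpha^{-1}\gamma = \gamma^2$, while $\psi^{-1}\phi = \alpha^{-1}\gamma\gamma\alpha = \alpha^{-1}\gamma^2\alpha$. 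So the group acting is $\langle \gamma^2,\ \gamma\alpha\rangle$ (equivalently $\langle \gamma^2, \alpha\gamma\rangle$, etc.). Now $\gamma^2$ is the full cycle on $[p]$ if $p$ is odd, and has exactly two orbits — the odd residues and the even residues mod $2$ — if $p$ is even. So when $p$ is odd, $\langle\gamma^2,\ldots\rangle$ already acts transitively and $G_{\gamma\alpha,\gamma^{-1}\alpha}$ is connected (and note: no permutation of an odd-size set can be parity changing, so the stated iff is vacuously consistent). When $p$ is even, $\gamma^2$ preserves the partition $[p] = \mathrm{odd}\sqcup\mathrm{even}$, and the whole group $\langle\gamma^2,\gamma\alpha\rangle$ preserves this partition iff $\gamma\alpha$ does, i.e. iff $\gamma\alpha$ maps odds to evens and evens to odds... wait — iff $\gamma\alpha$ preserves the partition OR swaps the two classes. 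One checks: $\gamma$ itself swaps the parity classes (it sends $k \mapsto k+1$), so $\gamma\alpha$ preserves the partition iff $\alpha$ swaps the classes, and $\gamma\alpha$ swaps the classes iff $\alpha$ preserves them. In either case $\langle\gamma^2,\gamma\alpha\rangle$ preserves the partition $\{\mathrm{odd},\mathrm{even}\}$, giving (at least) two components, precisely when $\alpha$ either preserves or reverses parity of every element. Hmm — but I need to match the statement, which asserts two components iff $\alpha$ is parity \emph{changing}; so the remaining point is that the "parity preserving" case does \emph{not} in fact disconnect the graph despite preserving the partition, which forces me to recheck which generator set is correct. I expect the resolution is that the component of $i$ is the orbit under $\langle \gamma^2,\ \gamma\alpha \rangle$ but one must also realize that an edge $i\to\gamma\alpha(i)$ together with an edge $i\to\gamma^{-1}\alpha(i)$ means $\gamma\alpha(i)$ and $\gamma^{-1}\alpha(i)$ are in the same component, i.e. $\gamma^2$ fixing the partition is about \emph{images} — so the correct statement is that the component containing the image-set determines things, and one obtains two components iff $\gamma\alpha$ and $\gamma^{-1}\alpha$ have images in complementary parity classes for each $i$, which unwinds to $\alpha(i)+1$ and $\alpha(i)-1$ being in opposite classes — always true — versus the refined condition involving $i$ itself. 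Let me not overcommit here: the honest version is that I will compute the component of $i$ by following the walk $i \to \gamma\alpha(i) \to (\gamma\alpha)(\gamma\alpha(i))$ or $\to \gamma^{-1}(\gamma\alpha(i)) = \alpha(i)$, etc., and extract the parity invariant "$i - (\text{something involving } \alpha)$ mod $2$."

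So concretely the key steps are: (1) identify the unoriented connected component of a vertex $i$ with the orbit of $i$ under the permutation group generated by the two "edge-following" permutations $\gamma\alpha$ and $\gamma^{-1}\alpha$ together with their inverses; (2) compute a convenient generating set for this group, isolating $\gamma^2$ (or $\alpha^{-1}\gamma^2\alpha$) which has one orbit for $p$ odd and two for $p$ even; (3) for $p$ even, determine exactly when the extra generator keeps the two $\gamma^2$-orbits separate, phrasing the answer as a mod-$2$ invariant preserved by every edge; (4) check by a direct walk argument that the invariant "$i + \alpha(i) \bmod 2$" (or whichever sign turns out right) is constant along edges precisely when $\alpha$ is parity changing, and that when it is not parity changing one can explicitly exhibit a path between an odd and an even vertex; (5) observe $j_k \neq 0$ requires at least one cycle so "one or two components" — never more — follows because $\gamma^2$ already forces at most two orbits on $[p]$. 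The main obstacle I anticipate is step (3)–(4): pinning down the exact mod-$2$ invariant and the direction of the equivalence, i.e. making sure "parity changing" (not "parity preserving") is what yields disconnection. I would resolve this by testing $p=2$: $\alpha=\mathrm{id}$ is parity preserving and the digraph $G_{\gamma,\gamma^{-1}}$ is connected (indeed Proposition \ref{prop:J-alpha=id} treats it as connected), while $\alpha=(12)$ is parity changing and gives $G_{\gamma(12),\gamma^{-1}(12)}$ which is the disjoint union of two loops-with-one-vertex (matching the $\frac{M}{4}N^2(N+1)^2$ term, whose $J$-factor is $(x(x+1))^2$, the square of a connected polynomial — hence two components). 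This small case confirms the direction, and the general argument is then the parity-invariant walk computation sketched above.
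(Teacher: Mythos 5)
Your overall framework is exactly the paper's: the unoriented connected component of a vertex is its orbit under the group $\langle \gamma\alpha,\gamma^{-1}\alpha\rangle$, this group contains $(\gamma\alpha)(\gamma^{-1}\alpha)^{-1}=\gamma^2$, and $\gamma^2$ is a full cycle for $p$ odd (so one component) while it has exactly the two orbits $\{\text{odd}\},\{\text{even}\}$ for $p$ even (so at most two components). Up to that point you and the paper are doing the same thing.

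There is, however, a genuine gap in the remaining step, and you flag it yourself but never actually close it. You write that the group $\langle \gamma^2,\gamma\alpha\rangle$ ``preserves the partition $\{\text{odd},\text{even}\}$, giving (at least) two components, precisely when $\alpha$ either preserves or reverses parity.'' This conflates two different conditions. A group that \emph{preserves the partition} (i.e.\ permutes the two blocks among themselves) may still be transitive: if some generator swaps the blocks while $\gamma^2$ acts transitively on each, the action on $[p]$ has a \emph{single} orbit. The condition that actually characterizes two orbits is that every generator \emph{fixes each block}, i.e.\ $\gamma\alpha$ sends odds to odds and evens to evens. That holds iff $\gamma\alpha(i)\equiv i\pmod 2$ for all $i$, i.e.\ $\alpha(i)+1\equiv i$, i.e.\ $\alpha(i)\not\equiv i\pmod 2$, which is precisely ``$\alpha$ parity changing.'' (Equivalently: since $\gamma$ itself changes parity, $\gamma\alpha$ is parity preserving iff $\alpha$ is parity changing — this is exactly the terse last sentence of the paper's proof.) Your proposed invariant ``$i+\alpha(i)\bmod 2$'' is also not the one that works; the relevant invariant is simply $i\bmod 2$, and one checks directly that each edge $i\to\gamma^{\pm1}\alpha(i)=\alpha(i)\pm1$ preserves it iff $\alpha(i)$ and $i$ have opposite parities. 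Checking $p=2$ is a reasonable sanity test for the direction of the iff, but it is not a substitute for the general argument; with the block-fixing (rather than block-preserving) characterization in place, the general case follows immediately and your sketch becomes a complete proof essentially identical to the paper's.
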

\begin{proof}
In the non-oriented graph $G_{\gamma \alpha, \gamma^{-1}\alpha}$, we have the following property: for any permutation $\sigma \in \langle \gamma\alpha, \gamma^{-1}\alpha\rangle$, and every $i \in [p]$, there is a path between the vertices $i$ and $\sigma(i)$. Note also that $\gamma^2\in \langle \gamma\alpha, \gamma^{-1}\alpha\rangle$. If $p$ is odd, $\gamma^2$ is a full cycle, hence $G_{\gamma \alpha, \gamma^{-1}\alpha}$ is connected. If $p$ is even, then $\gamma^2$ acts transitively on $\{2,4,\ldots, p\}$ and $\{1,3,\ldots, p-1\}$ separately. Thus $K(G_{\gamma \alpha, \gamma^{-1}\alpha})=1$ or $2$, depending on $\alpha$. Since $\gamma$ is a permutation that changes the parity, $K(G_{\gamma\alpha,\gamma^{-1}\alpha})=2$ if and only if $\alpha$ also changes the parity. This concludes the proof.
\end{proof}

We describe next the asymptotic behaviour of the larger moments of $W^\Gamma$. 
\begin{theorem}\label{thm:moments-WGamma-3-asympt}
For $p \geq 3$, in the limit $N \to \infty$, the moments of $W^{\Gamma}$ asymptotically behave as 
\begin{equation}\label{eq:asymptotics-partial}
    \E\Tr\left((W^{\Gamma})^p\right) = \frac{c^p}{2^p}N^{3p}+O(N^{3p-1}).
\end{equation}
\end{theorem}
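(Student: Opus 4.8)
\textbf{Proof strategy for Theorem \ref{thm:moments-WGamma-3-asympt}.}

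The plan is to start from the exact moment formula of Proposition \ref{prop:moments-W-Gamma},
$$\E\Tr\left((W^{\Gamma})^p\right) =  2^{-p} \sum_{\alpha\in S_p}M^{\#\alpha}J(G_{\gamma\alpha, \gamma^{-1}\alpha};N),$$
and to identify the dominant term in $N$ as $N\to\infty$ with $M\sim cN^2$. Since $M^{\#\alpha}$ contributes $N^{2\#\alpha}$ up to constants, and $J(G_{\gamma\alpha,\gamma^{-1}\alpha};N)$ is a polynomial in $N$ of degree at most $p + K(G_{\gamma\alpha,\gamma^{-1}\alpha})$ by Lemma \ref{lem:UB-deg-J}, each term is $O(N^{2\#\alpha + p + K})$, where $K = K(G_{\gamma\alpha,\gamma^{-1}\alpha}) \in \{1,2\}$ by Proposition \ref{prop:properties-J-G-alpha}. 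The target exponent is $3p$, so I expect the leading contribution to come from $\alpha = \id$, which has $\#\alpha = p$, and for which Proposition \ref{prop:J-alpha=id} gives exactly $J(G_{\gamma,\gamma^{-1}};N)$ of degree $p$; this yields $M^p N^p \sim c^p N^{3p}$, and dividing by $2^p$ gives the stated leading term $\frac{c^p}{2^p}N^{3p}$.

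The bulk of the work is then to show that every other $\alpha \neq \id$ contributes at most $O(N^{3p-1})$, i.e. that $2\#\alpha + p + K(G_{\gamma\alpha,\gamma^{-1}\alpha}) \leq 3p - 1$ whenever $\alpha \neq \id$. First, for any $\alpha$ the crude bound $\#\alpha \leq p$ combined with $K \leq 2$ gives $2\#\alpha + p + K \leq 3p + 2$, which is too weak, so one needs to be more careful about the interplay between $\#\alpha$ (the number of cycles of $\alpha$) and $K$. The key point is a trade-off: when $K = 2$, Proposition \ref{prop:properties-J-G-alpha} forces $\alpha$ to be parity changing, in particular $\alpha$ has no fixed points, so $\#\alpha \leq p/2$; then $2\#\alpha + p + K \leq p + p + 2 = 2p+2 \leq 3p-1$ for $p \geq 3$. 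When $K = 1$, we have $2\#\alpha + p + 1$, which is $\leq 3p-1$ exactly when $\#\alpha \leq p - 1$, i.e. when $\alpha \neq \id$. Thus in all cases except $\alpha = \id$ we stay at order $O(N^{3p-1})$ or below, while $\alpha=\id$ sits at $K=1$, $\#\alpha = p$ and produces the genuine leading term. I would also double-check the sub-leading terms of the $\alpha=\id$ contribution itself — from $J(G_{\gamma,\gamma^{-1}};N) = \sum_k (\binom{p}{k}+\delta_{2,k})N^k$ and $M^p = (cN^2)^p(1+o(1))$ — to confirm that after extracting $c^p N^{3p}/2^p$ the remainder is indeed $O(N^{3p-1})$; here one should be slightly careful that $M = cN^2$ exactly (or $M\sim cN^2$) so that the correction from $M^p$ versus $c^pN^{2p}$ is itself lower order.

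The main obstacle I anticipate is making the counting argument for $K=1$ airtight: one must be sure that the degree bound $\deg J(G_{\gamma\alpha,\gamma^{-1}\alpha}) \leq p + K$ is the right tool and that no cancellation or subtlety in the relation between $\#\alpha$ and the graph structure is being overlooked — in particular verifying that when $\alpha\neq\id$ but $\alpha$ still has some fixed points (so $K=1$ and $\#\alpha$ could be as large as $p-1$), the bound $2(p-1)+p+1 = 3p-1$ is attained only as an upper bound and does not spoil the $O(N^{3p-1})$ claim (it is consistent with it). A secondary point requiring care is the case analysis for small $p$: the inequality $2p+2 \leq 3p-1$ needs $p\geq 3$, which is exactly the hypothesis, and this is presumably why the theorem is stated for $p\geq 3$ separately from the explicitly computed cases $p=1,2$. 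Once these bookkeeping points are settled, the proof is essentially a degree count term-by-term over $S_p$, with Propositions \ref{prop:J-alpha=id} and \ref{prop:properties-J-G-alpha} and Lemma \ref{lem:UB-deg-J} doing all the heavy lifting.
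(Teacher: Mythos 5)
Your proof is correct and follows essentially the same route as the paper: both start from Proposition \ref{prop:moments-W-Gamma}, extract the $\alpha=\id$ leading term via Proposition \ref{prop:J-alpha=id}, and kill all other $\alpha$ by combining the degree bound of Lemma \ref{lem:UB-deg-J} with the parity-change characterization of $K$ from Proposition \ref{prop:properties-J-G-alpha}. The only difference is cosmetic — you split cases by $K\in\{1,2\}$ (using $K=2\Rightarrow$ no fixed points $\Rightarrow\#\alpha\le p/2$), whereas the paper splits by $\#\alpha\in\{p,\,p-1,\,\le p-2\}$ (using that a transposition with $p\ge3$ has a fixed point, forcing $K=1$) — but the underlying trade-off between $\#\alpha$ and $K$ is exactly the same, and both correctly isolate $p\ge3$ as the needed hypothesis.
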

\begin{proof}

We start from the sum over permutations from \eqref{eq:moments-W-Gamma}, and we distinguish three cases, as follows:
\begin{itemize}
    \item $\#\alpha = p$, i.e.~$\alpha=\id$. We know from Proposition \ref{prop:J-alpha=id} that $J(G_{\gamma,\gamma^{-1}};N) \sim N^p$, and thus the term $\alpha = \id$ behaves as $(c/2)^p N^{3p}$.
    
    \item $\#\alpha=p-1$, i.e.~$\alpha$ is a transposition. According to Lemma \ref{lem:UB-deg-J}, such a term is bounded by a polynomial whose leading term in $N$ has exponent 
    $$2\#\alpha + p+ K(G_{\gamma\alpha,\gamma^{-1}\alpha}) = 3p-2 + K(G_{\gamma\alpha,\gamma^{-1}\alpha}).$$
    Since $\alpha$ is a transposition and $p \geq 3$, $\alpha$ must have a fixed point, so it cannot be parity changing. Hence, using Proposition \ref{prop:properties-J-G-alpha}, $K(G_{\gamma\alpha,\gamma^{-1}\alpha}) =1$, proving that such terms are subdominant with respect to the previous case $\alpha = \id$.
    \item $\#\alpha\le p-2$. Reasoning as in the previous case, these terms are clearly subdominant, since $K(G_{\gamma\alpha,\gamma^{-1}\alpha}) \leq 2$.
\end{itemize}
In conclusion, all the terms with $\alpha \neq \id$ are subdominant with respect to the term $\alpha = \id$, finishing the proof.
\end{proof}

To finish this section, note that the asymptotic behaviour from \eqref{eq:asymptotics-partial}, corresponding to the moments $p \geq 3$, does not match the ones for $p=1,2$. This is a signature of the presence of an outlier, an eigenvalue on a larger scale, which is the phenomenon described in our main result, Theorem \ref{thm:main}.

\section{\texorpdfstring{$t$}{t}-channel random matrix and graphical representation}\label{sec:t-channel}

In this section we introduce the \emph{$t$-channel}\footnote{in reference to the $s$, $t$, $u$ channels of particle physics, as the diagrammatic representation is similar. In this spirit $W$ corresponds to the $s$-channel.} random matrix that we denote $W_{t}$. We do so as it is better suited for our aims in Sections \ref{sec:diagrammatics}, \ref{sec:bounds} and \ref{sec:tensor-eval}. We define $W_t$ component-wise by 
\begin{equation}
    (W_{t})_{ij,kl}:=(W)_{jl,ki}
\end{equation}
and we provide a diagrammatic representation of the above definition below
\begin{equation}\label{eq:diag-Wst}
    (W_{t})_{ij,kl}= \raisebox{-11mm}{\includegraphics[scale=0.6]{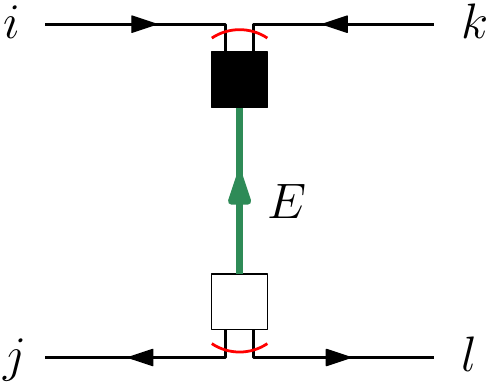}}.
\end{equation}
In the above diagrammatic representation the red arc represents the symmetrizer $P_s$. The black square represents the complex conjugate $\bar G$ of the Ginibre matrix $G$ of section \ref{sec:bosonic} while the white square represents $G$ itself. $\bar G$ is seen as a linear form on the Hilbert space, hence input vectors, which is why we display the corresponding tensor legs with ingoing edges. For similar reason we display tensor legs of $G$ with outgoing edges. The letter $E$ denotes the environmental Hilbert space that is traced out. Note also that we have the following equalities:
\begin{equation}\label{eq:diag-Wst*}
    (W_{t}^*)_{ij,kl}= \raisebox{-11mm}{\includegraphics[scale=0.6]{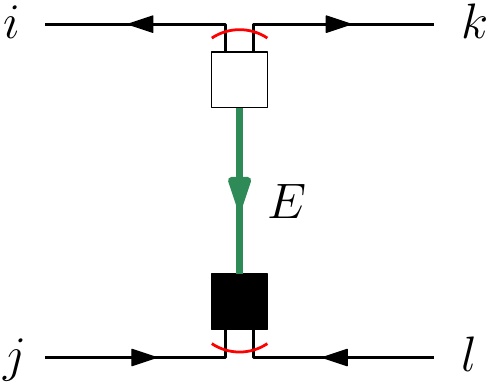}}
\end{equation}
We prefer to use black and white squares here to simplify the representations since we will use it intensively in the coming sections.
\\

We introduce the projector on the complement of $\mathbb C \ket \Omega$
\begin{equation}\label{eq:proj_complemet_omega_def}
    P_{\overline{\Omega}}=I-\lvert\Omega\rangle\langle \Omega \rvert.
\end{equation}
By the projector property we have $\Pcomp^2=\Pcomp$. We want to study the sequence of moments $\E(\Tr(Q^p))$ of the random matrix 
\begin{equation}
    Q:=\Pcomp W^{\Gamma} \Pcomp.
\end{equation}
In order to use graphical methods for tensor networks evaluation (see \cite{biamonte2017tensor} for an overview of tensor networks techniques and ideas), we will use the following relation 
\begin{equation}
    \Tr(Q^p)=\Tr((F \Pcomp W_{t} \Pcomp)^{p\textrm{ mod }2}(\Pcomp W_{t}^*\Pcomp W_{t} \Pcomp)^{\lfloor p/2 \rfloor}).
\end{equation}
This is easily shown by using the fact that $F^2 = I$, $F W_{t} F=W_{t}^*$ and $[\Pcomp,F]=0$. These moments have a graphical representation as ladder diagrams introduced in the next Section \ref{sec:diagrammatics}, and their Wick expansion produces terms that are indexed by tensor networks that are quotients of the ladder diagram by the action of Wick pairings. When there is no projectors $\Pcomp$, the tensor network evaluates to the circuit counting polynomial of the tensor network graph. However, due to the presence of the projector, there are two types of tensors appearing in the tensor network and this will give a different answer. This is the route we follow in Section \ref{sec:tensor-eval}. Note however that if we expand all projectors then we obtain, for each term of the expansion, a $2$-in/$2$-out digraph which represents a tensor network evaluating to the circuit counting polynomial of the graph. This is the method we use in this section and in the next Sections  \ref{sec:diagrammatics} and \ref{sec:bounds}.\\

We denote $P\models \{1,\ldots, p\}$ an \emph{interval partition} of $\{1,\ldots, p\}$. That is, $P$ is a set of subsets $P_i\subseteq \{1,\ldots, p\}$ of $P$ such that $\bigsqcup_i P_i=P$ and $P_i$ are sub-intervals of $\{1,\ldots, p\}$ seen cyclically. For instance,
\begin{itemize}
    \item $p=4$, $P=\{\{1,2\},\{3,4\}\},\ P_1=\{1,2\},\ P_2=\{3,4\}$ 
    \item $p=4$, $P=\{\{2,3\},\{4,1\}\},\ P_1=\{2,3\}, \ P_2=\{4,1\}$
    \item $p=4$, $P=\{\{2\},\{3\},\{4,1\}\},\ P_1=\{2\},\ P_2=\{3\}, \ P_3=\{4,1\}$. 
    \item $p=3$, $P=\{\{1,2,3\}\}, \ P_1=\{1,2,3\}$,
\end{itemize}
a non-example is given by the partition 
\begin{itemize}
    \item $p=5$, $P=\{\{1,2,4\},\{3,5\}\}$.
\end{itemize}
Assume we have a word $f\in \{0,1\}^p$. Assuming there exists $i$ such that $f(i)=1$, we associate to $f$ a subdivision of $P_f\models\{1,\ldots, p\}$ by putting bars between elements $i-1, i \in \{1,\ldots, p\}$ if and only if $f(i)=1$. $P_j$ is the interval defined as the set of elements between the $(j-1)^{\mathrm{th}}$ and $j^{\mathrm{th}}$ bar. 
\begin{figure}
    \centering
    \includegraphics[scale=0.7]{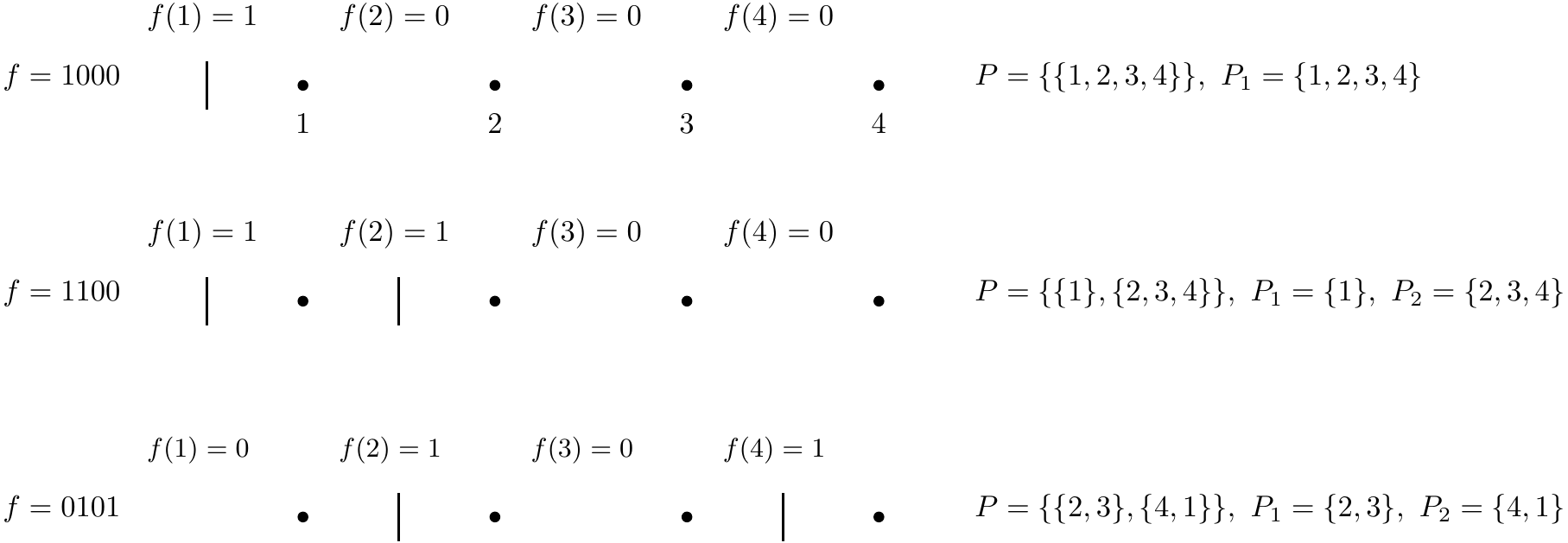}
    \caption{Examples of a interval subdivisions of $\{1,\ldots, 4\}$ associated to words $f\in \{0,1\}^4$ using the bars construction.}
    \label{fig:bars-to-intervals}
\end{figure}
See Fig. \ref{fig:bars-to-intervals} for a pictorial description.  \\

Expanding the projectors $\Pcomp$ appearing in the definition of $Q$, we have 
\begin{multline}\label{eq:Pcomp-expansion}
    \Tr(Q^p)= \Tr((F W_{t})^{p \textrm{ mod }2}(W_{t}^*W_{t})^{\lfloor p/2\rfloor})\\
    +\sum_{\substack{f\in \{0,1\}^p \\ \exists i, f(i)=1}}\left(\frac{-1}{N}\right)^{|P_f|} \prod_{P_{f,i}\in P}N\langle \Omega \vert (W_{t})^{|P_{f,i}| \textrm{ mod }2}(W_{t}^*W_{t})^{\left\lfloor |P_{f,i}|/2\right\rfloor} \lvert \Omega \rangle
\end{multline}
This expansion forms the basis for the expansion of the moments of $Q$ as sums of circuit counting polynomials of $2$-in/$2$-out digraphs. The term of the form 
\begin{equation}\label{eq:exp-moments-Q}
    N\langle \Omega \vert (W_{t})^{|P_{f,i}| \textrm{ mod }2}(W_{t}^*W_{t})^{\left\lfloor |P_{f,i}|/2\right\rfloor} \lvert \Omega \rangle
\end{equation}
can be represented diagrammatically, and we will use this diagrammatic representation to understand the Wick expansion of the moments of $Q$.\\
\begin{remark}\label{rem:correspondance-W-Partial}
Note also that the first term of the right hand side of the equation \eqref{eq:exp-moments-Q} above is $\Tr((W^{\Gamma})^p)$. We use this remark later to deduce the diagrammatics of $\E\left( \Tr((W^{\Gamma})^p) \right)$.
\end{remark}

\section{Diagrammatics for the moments of \texorpdfstring{$Q$}{Q} and \texorpdfstring{$W_{t}$}{Wt}}\label{sec:diagrammatics} 

We start by considering the diagram representation of expression involving $W_{t}$ in the expansion \eqref{eq:Pcomp-expansion}. They are easily obtained by stacking building blocks of equations \eqref{eq:diag-Wst},\eqref{eq:diag-Wst*}. We have 
\begin{multline}\label{eq:trace-to-ladder-even}
   \forall p \in 2\mathbb{N}, \ \Tr((F W_{t})^{p \textrm{ mod }2})(W_{t}^*W_{t})^{\lfloor p/2\rfloor} = \Tr((W_{t}^*W_{t})^{p/2})\\
   = \raisebox{-11mm}{\includegraphics[scale=0.5]{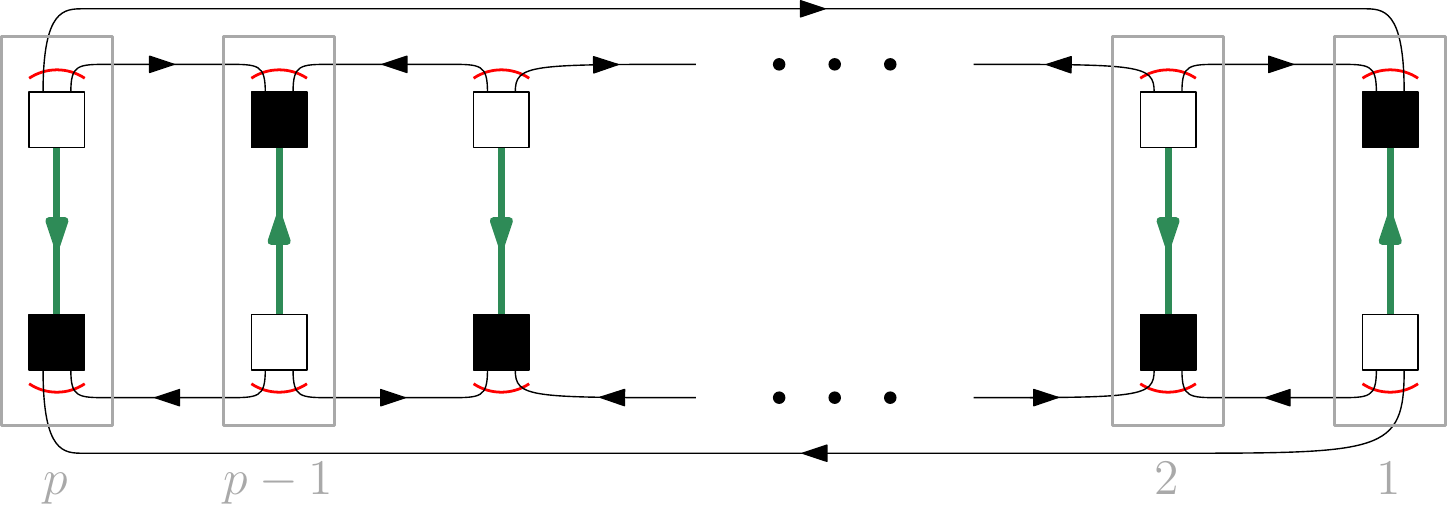}}\
\end{multline}
If $p\in 2\mathbb{N}+1$ is odd, we have the slightly twisted representation
\begin{equation}\label{eq:trace-to-ladder-odd}
    \Tr((F W_{t})(W_{t}^*W_{t})^{\lfloor p/2\rfloor}) \ = \ \raisebox{-12mm}{\includegraphics[scale=0.5]{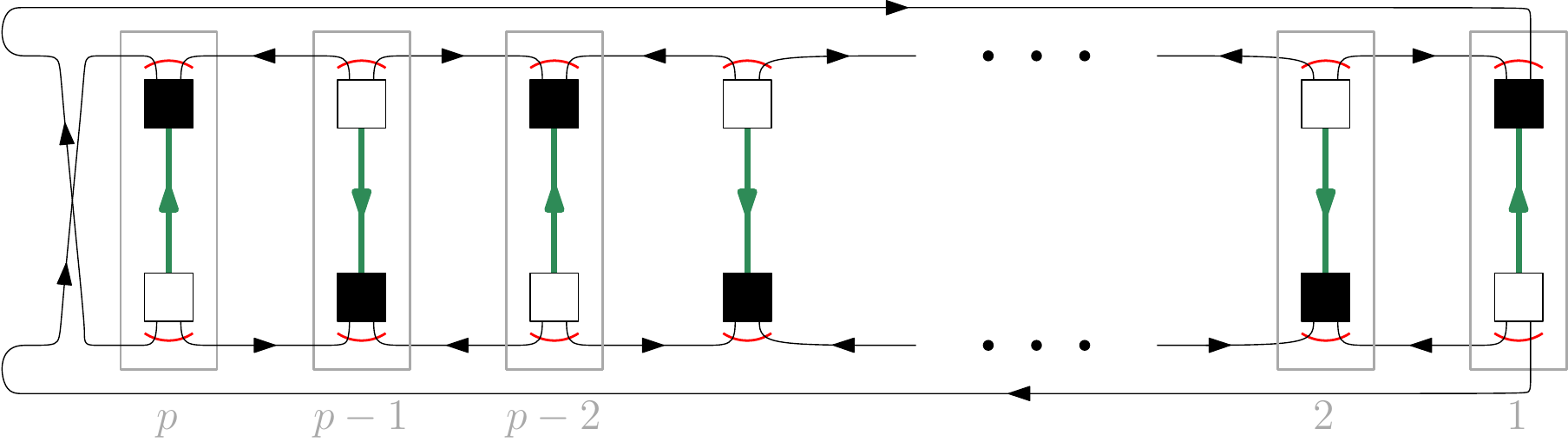}}
\end{equation}\\
The gray boxes highlight the copies of $W_{t}$ (resp. $W_{t}^*$) which appear in the product in the trace. We number these copies as shown on the graphical representation. The black and white squares inherit the numbering of the gray box they lie in. This diagrammatic allows us to describe the Wick pairings as permutations $\alpha \in S_p$. Indeed, since $G$ is normally distributed, with vanishing pseudo-variance, Wick pairings only pair instances of $G$ with instances of $\bar G$. We represent such a pairing by adding oriented dotted edges from black squares to white ones in diagrams of the type appearing in equations \eqref{eq:trace-to-ladder-even} and \eqref{eq:trace-to-ladder-odd}. Each Wick pairing can be indexed by a permutation in $\alpha \in S_p$. Indeed the black box number $i$ is paired with the white box $j$ if and only if $\alpha(i)=j$. This translates diagrammatically as a dotted edge between black square $i$ and white square $j$, oriented from black to white. See examples on Fig. \ref{fig:pairing-example-trace-to-ladder}.
\begin{figure}
    \centering
    \includegraphics[scale=0.6]{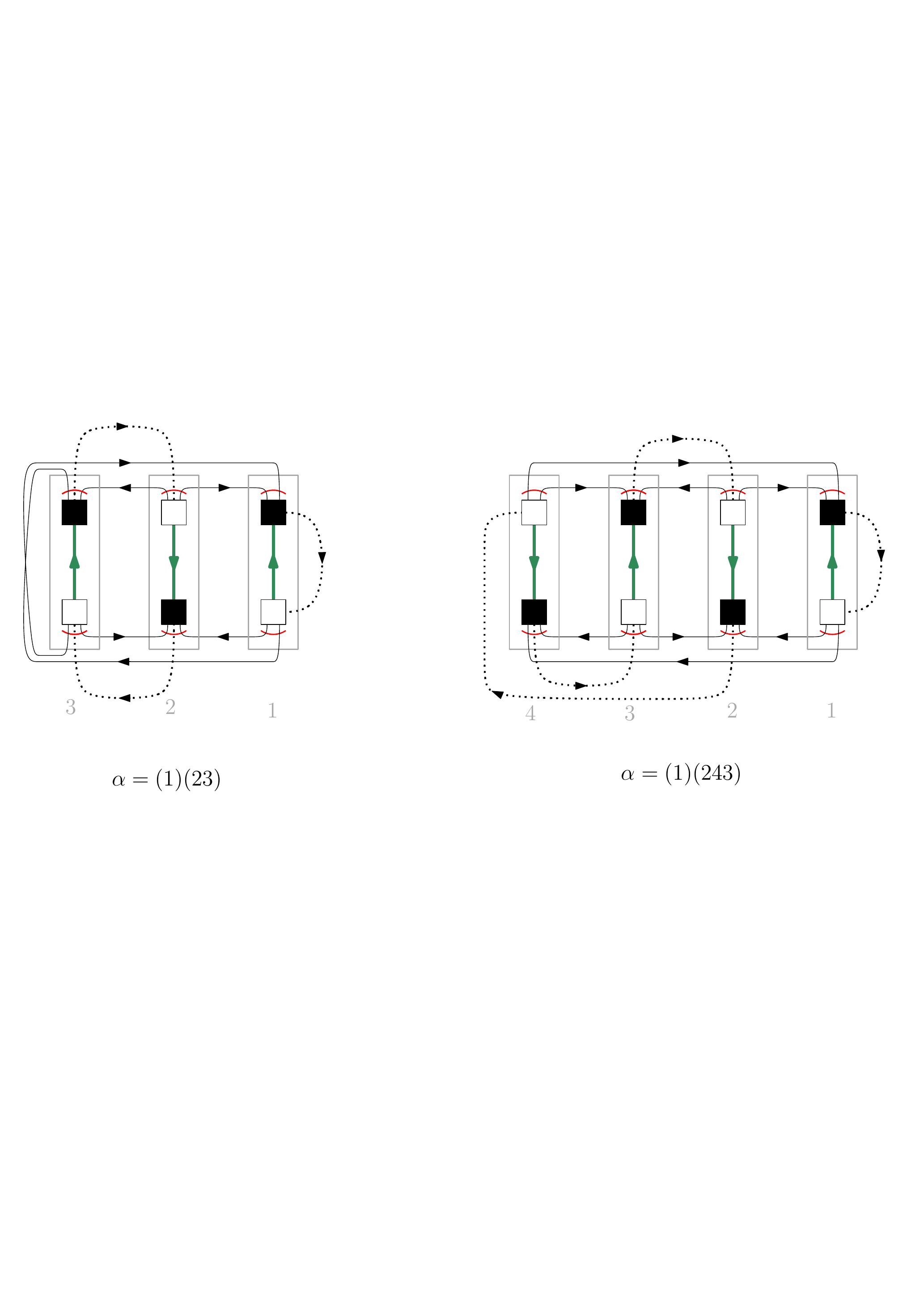}
    \caption{Two examples of Wick pairings indexed by permutations $\alpha$. The dotted lines allows to track how the random states are paired together.}
    \label{fig:pairing-example-trace-to-ladder}
\end{figure}\\

Now we realize that each such pairing tells us about the composition of symmetrizers (denoted as red circle arcs), whose inputs and outputs are kept track of using the orientations of the black edges in the diagrams. In fact, these Wick pairings tell us how to match the indices of these symmetrizers together. In particular, note that if $\alpha(i)=j$, then the symmetrizer associated to the black box $i$ is composed (in the usual operator sense) with the symmetrizer associated with the white box $j$. Since the symmetrizer is a projector, we are left with a unique symmetrizer for each dotted edge whose inputs are the black edges ending on the black box $i$ and whose outputs are the black edges exiting the white box $j$. Since a symmetrizer symmetrizes over both the inputs and the outputs it is not necessary to keep track of the difference between a first input (resp. output) and a second input (resp. output). Thus, the corresponding contraction of symmetrizers is indexed by a directed graph (digraph) whose vertices are $2$-in/$2$-out and represent two\footnote{the factor of $2$ is conventional. Its purpose is to recover the definition of the $J$ polynomial later.} times $P_{s}$. Such a digraph can be seen as a tensor network for the tensor $P_s$ The value of the tensor network is just the evaluation of the corresponding contraction. We explain in details this evaluation combinatorially. Since each vertex represents twice a symmetrizer $P_{s}=\frac12(I+F)$, we can evaluate the contraction by deciding for a \emph{state}\footnote{we use here the terminology that is familiar to the graph polynomial literature. Note though that vertex states should not be confused with quantum states. We hope that the context will be clear enough to avoid confusion.}, that is the assignation of either the identity $I:\C^N\otimes \C^N \rightarrow \C^N\otimes \C^N$ or the flip $F:\C^N\otimes \C^N \rightarrow \C^N\otimes \C^N$, at each vertex. The possible states $\{S_1, S_2\}$ are 
\begin{equation}
    \raisebox{-6mm}{\includegraphics[scale=0.7]{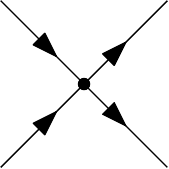}}\longrightarrow \raisebox{-12mm}{\includegraphics[scale=0.7]{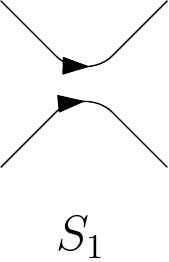}}, \quad \raisebox{-12mm}{\includegraphics[scale=0.7]{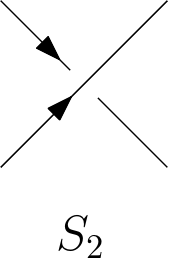}}.
\end{equation}
For each state assignation to all the vertices of the graph, we obtain a collection of cycles, called a cycle cover of the graph. Each of these cycles are weighted by $N$. Hence the weight of a particular state assignation is $N$ raised to the number of resulting cycles.  Summing over all possible state assignation will result in the circuit counting polynomial of the graph. Hence the weight of the tensor network we constructed is just the circuit counting polynomial of the underlying graph times $\left( \frac12 \right)^p$, with $p$ the number of vertices. This last factor just takes into account the normalization factor $\frac12$ of the symmetrizer.  \\

The digraph resulting from the Wick pairing $\alpha$ can be indexed by the data of two permutations that also allows us to construct directly the adjacency matrix of the digraph. \

We describe this permutation representation of $2$-in/$2$-out digraphs obtained from a Wick pairing. These digraphs are indexed by $\sigma_1=\gamma\alpha, \ \sigma_2=\gamma^{-1}\alpha$ with $\gamma=(1,2,\ldots, p)$. Indeed, a vertex labeled $i$ corresponds, by definition, to a pair $(i,\alpha(i))$. The edges entering this vertex are the edges entering the black square of the pair $i$ in the ladder graph while the edges exiting this vertex are the edges exiting the white square of the gray pair $\alpha(i)$. Note that the edges exiting from vertex $i$ of the digraph, are adjacent to the black boxes of the gray pairs $\gamma(\alpha(i))$ and $\gamma^{-1}(\alpha(i))$, that is in the digraph they are adjacent to the vertices $\gamma(\alpha(i))$,  $\gamma^{-1}(\alpha(i)$ oriented as $i\rightarrow \gamma(\alpha(i))$ and $i\rightarrow \gamma^{-1}(\alpha(i))$. Hence, $\sigma_1=\gamma\alpha, \ \sigma_2=\gamma^{-1}\alpha$. See the local construction of the vertices on Fig. \ref{fig:local-construct}.\\

\begin{figure}
    \centering
    \includegraphics[scale=0.9]{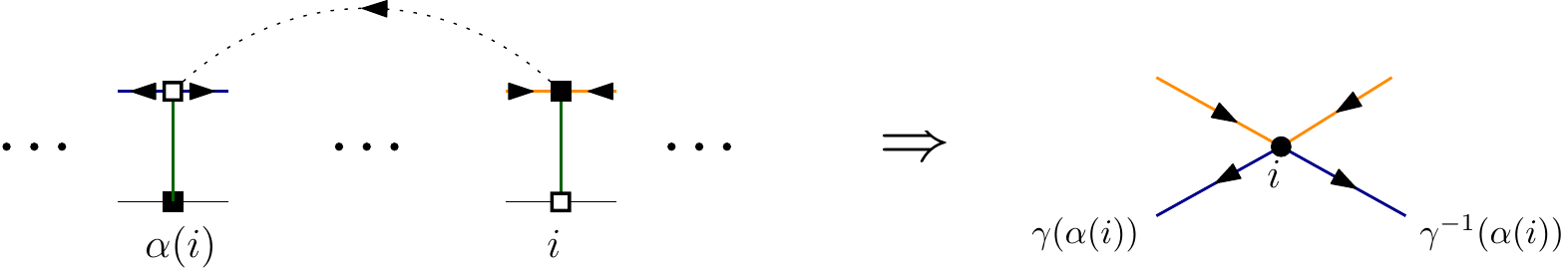}
    \caption{On this figure we show how the vertices of the digraph are constructed from the pairing $\alpha$ by identifying the black box $i$ with the white box $\alpha(i)$. We forget the environment (green) lines in the digraph. We highlighted the edges adjacent to the black and white boxes to be identified in orange and blue with their orientations so that it is easier to spot them in the digraph.}
    \label{fig:local-construct}
\end{figure}

We follow a similar train of thoughts for the terms in the sum 
\begin{equation}
    \sum_{\substack{f\in \{0,1\}^p \\ \exists i, f(i)=1}}\left(\frac{-1}{N}\right)^{|P_f|} \prod_{P_{f,i}\in P}N\langle \Omega \vert (W_{t})^{|P_{f,i}| \textrm{ mod }2}(W_{t}^*W_{t})^{\left\lfloor |P_{f,i}|/2\right\rfloor} \lvert \Omega \rangle
\end{equation}
from equation \eqref{eq:Pcomp-expansion}. Indeed, the product $(W_{t})^{|P_{f,i}| \textrm{ mod }2}(W_{t}^*W_{t})^{\left\lfloor |P_{f,i}|/2\right\rfloor}$ is also represented diagrammatically by stacking building blocks of equations \eqref{eq:diag-Wst} and \eqref{eq:diag-Wst*}. The only difference being that the vector $\lvert \Omega \rangle$ has the following representation 
\begin{equation}
    \lvert \Omega \rangle =\frac1{\sqrt{N}} \ \raisebox{-3mm}{\includegraphics[scale=0.6]{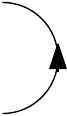}}
\end{equation}
Hence, we have the diagrammatic below
\begin{equation}
    N \langle \Omega \rvert(W_{t})^{|P_{f,i}| \textrm{ mod }2}(W_{t}^*W_{t})^{\left\lfloor |P_{f,i}|/2\right\rfloor}\lvert\Omega\rangle = \ \raisebox{-12mm}{\includegraphics[scale=0.48]{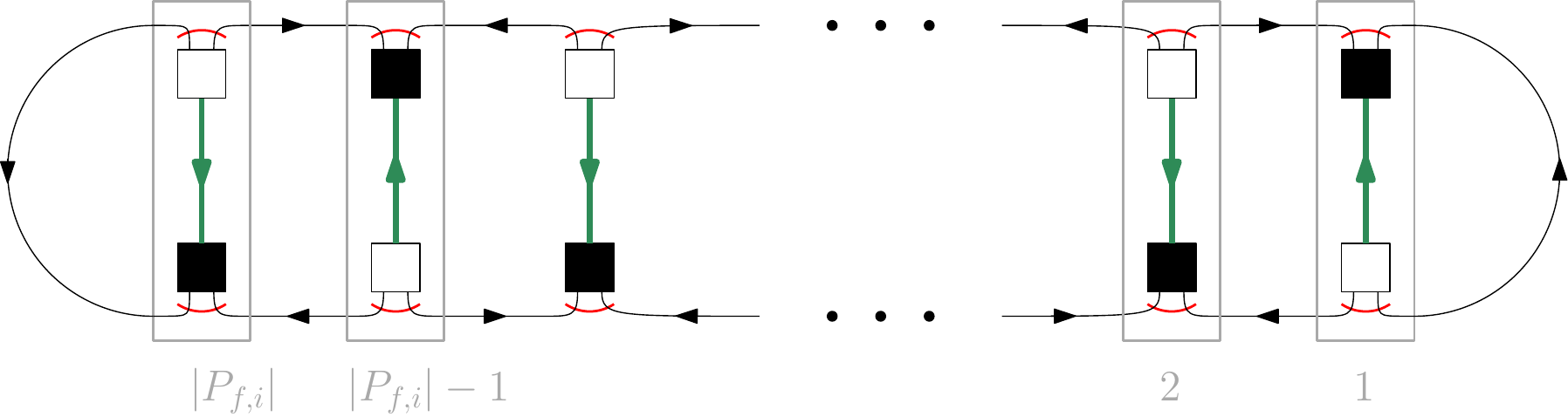}}
\end{equation}
We introduce permutations that we use to give a description of the digraphs appearing in the computation of the above sum. We start with the following permutations on $l$ elements, whose definition differs in the odd and even cases,
\begin{align}
    \gamma_{a,l}&=\begin{cases}
    (1)(23)(45)\ldots(2k,2k+1), \ \textrm{if }l=2k+1 \textrm{ is odd} \\
    (1)(23)(45)\ldots(2k-2,2k-1)(2k), \ \textrm{if }l=2k \textrm{ is even}. 
    \end{cases}\\
    \gamma_{b,l}&=\begin{cases}
    (12)(34)\ldots(2k-1,2k)(2k+1), \ \textrm{if }l=2k+1 \textrm{ is odd} \\
    (12)(34)\ldots(2k-1,2k), \ \textrm{if }l=2k \textrm{ is even}
    \end{cases}
\end{align}
and to each $P_i\in P \models \{1,\ldots, p\}$ we associate the bijection $C_{P_i}:P_i\rightarrow \{1,\ldots, \lvert P_i\rvert\}$ defined by
\begin{equation}
    C_{P_i}(P_i(q))=q
\end{equation}
where $P_i(q)$ is the $q^{\textrm{th}}$ element of $P_i$. Then we define for each $P_i$ the permutation $\gamma_{a,P_i}=C_{P_i}^{-1}\gamma_{a,\lvert P_i\rvert}C_{P_i}$ and $\gamma_{b,P_i}=C_{P_i}^{-1}\gamma_{b,\lvert P_i\rvert}C_{P_i}$. Finally, we construct yet another two permutations, $\Gamma_{a,P}, \, \Gamma_{b,P} := \prod_{i}\gamma_{a,P_i}, \, \prod_{i}\gamma_{b,P_i}$. We have several figures to illustrate the relations with the diagrams. See Fig. \ref{fig:overlaps-gamma-ab} for $\gamma_{a,\lvert P_i\rvert}$ and $\gamma_{b,\lvert P_i\lvert}$. See also Fig. \ref{fig:example_GammaP} for an example of permutations $\Gamma_{a,P}, \Gamma_{b,p}$.\\

\begin{figure}
    \centering
    \includegraphics[scale=0.6]{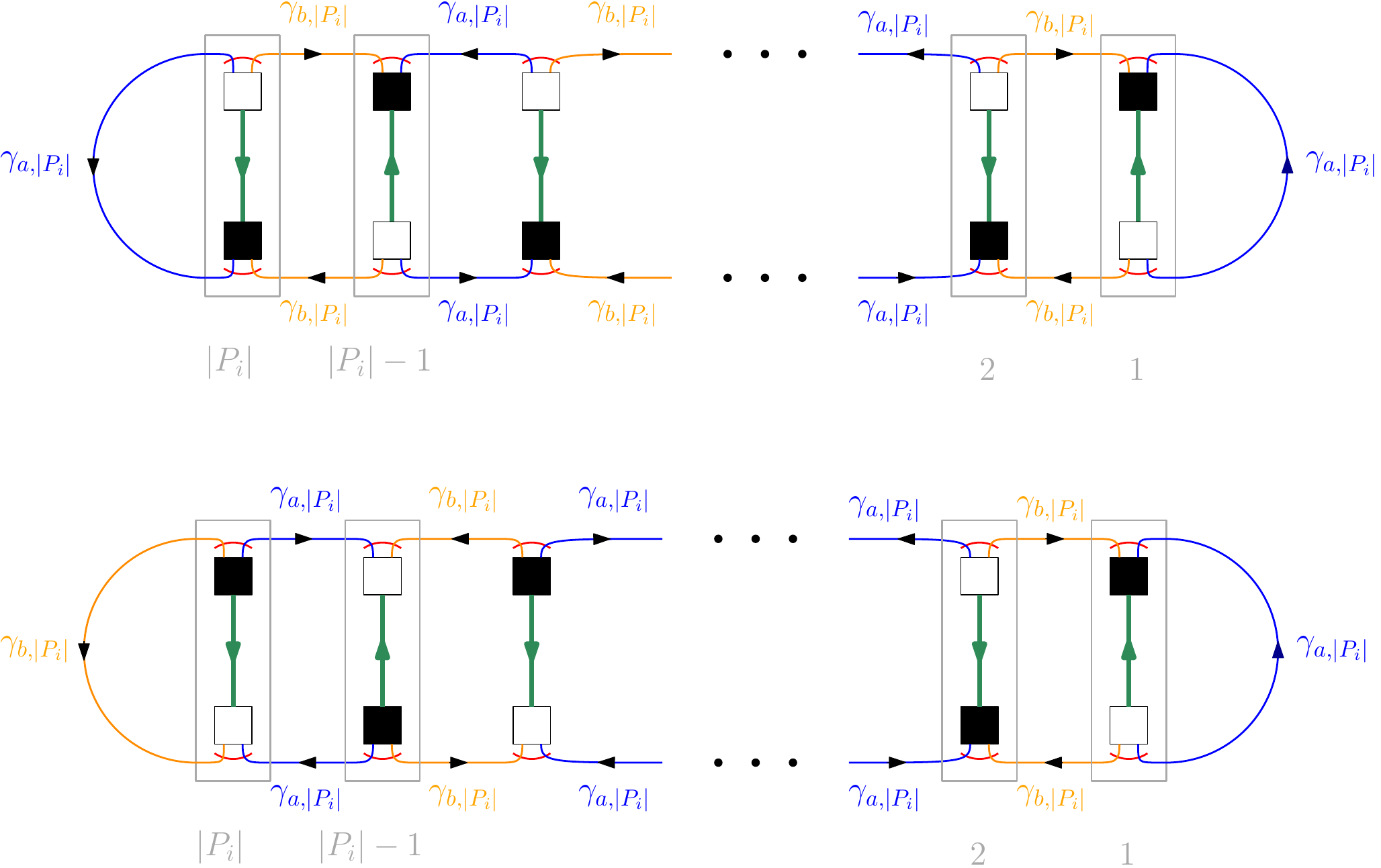}
    \caption{Illustration of the relation between the diagrams and the permutations $\gamma_{a,|P_i|}$ and $\gamma_{b,|P_i|}$. On the topmost picture: the even $|P_i|$ case. On the bottom-most picture: the odd $|P_i|$ case.}
    \label{fig:overlaps-gamma-ab}
\end{figure}

Then we have  as a consequence of Wick theorem that: 
\begin{equation}\label{eq:Wick-prod-overlaps}
    \E\left( \prod_{P_{f,i}\in P}N\langle \Omega \vert (W_{t})^{|P_{f,i}| \textrm{ mod }2}(W_{t}^*W_{t})^{\left\lfloor |P_{f,i}|/2\right\rfloor} \lvert \Omega \rangle\right) = \sum_{\alpha \in S_p}N^{2\# \alpha}\frac{c^{\#\alpha}}{2^p} J(G_{\Gamma_{a,P}\alpha,\Gamma_{b,P}\alpha},N)
\end{equation}
where we recall that $\sum_i \lvert P_{f,i}\rvert =p$.\\

In the same way than in the previous case, the Wick theorem identifies black box $i$ with white box $j$ for any pairing such that $\alpha(i)=j$. For the same reason than before, it leads to a $2$-in/$2$-out digraph. However, now the edges exiting a vertex $i$ of the digraph are the edges exiting the white box $\alpha(i)$ while these edges are adjacent to the black boxes $\Gamma_{a,P}(\alpha(i)), \Gamma_{b,P}(\alpha(i))$, which correspond to the vertices $\Gamma_{a,P}(\alpha(i))$ and $\Gamma_{b,P}(\alpha(i))$ in the digraph. This is sufficient to conclude that we have the relation \eqref{eq:Wick-prod-overlaps}. \\

We summarize the content of the above discussion by the proposition below:
\begin{proposition}\label{prop:expanded-Wick-thm}
   As a consequence of Wick theorem we have,
   \begin{equation}
       \E\left(\Tr((F W_{t}^{p \textrm{ mod }2})(W_{t}^*W_{t})^{\lfloor p/2\rfloor})\right) = \sum_{\alpha\in S_p}N^{2\# \alpha}\frac{c^{\#\alpha}}{2^p}J(G_{\gamma\alpha, \gamma^{-1}\alpha};N)
   \end{equation}
   and
   \begin{multline}
       \sum_{\substack{f\in \{0,1\}^p \\ \exists i, f(i)=1}}\left(\frac{-1}{N}\right)^{|P_f|} \prod_{P_{f,i}\in P}N\langle \Omega \vert (W_{t})^{|P_{f,i}| \textrm{ mod }2}(W_{t}^*W_{t})^{\left\lfloor |P_{f,i}|/2\right\rfloor} \lvert \Omega \rangle =\\
       \sum_{\substack{f\in \{0,1\}^p \\ \exists i, f(i)=1}}\sum_{\alpha \in S_p} \left(-1\right)^{|P_f|} N^{2\#\alpha - \lvert P_f\rvert } \frac{c^{\#\alpha}}{2^p}J(G_{\Gamma_{a,P_f}\alpha,\Gamma_{b,P_f}\alpha };N). 
   \end{multline}
\end{proposition}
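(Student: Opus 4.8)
The plan is to prove Proposition~\ref{prop:expanded-Wick-thm} by unpacking the tensor-network/diagrammatic bookkeeping that was set up in this section, reducing everything to two ingredients: the Wick--Isserlis theorem for the complex Gaussian tensor $G$, and Theorem~\ref{thm:graph-poly-trace-Psym} relating generalized bipartite traces of $P_s$ to the circuit counting polynomial $J$. Both formulas in the statement have the same shape, so I would prove the first and then indicate the (purely notational) modifications needed for the second.

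For the first identity, I would start from the ladder-diagram representations \eqref{eq:trace-to-ladder-even} and \eqref{eq:trace-to-ladder-odd} of $\Tr((FW_t^{p\bmod 2})(W_t^*W_t)^{\lfloor p/2\rfloor})$, which consist of $p$ copies of the building blocks \eqref{eq:diag-Wst}, \eqref{eq:diag-Wst*}, each containing one $G$-box, one $\bar G$-box, one symmetrizer $P_s$, and environment lines. Applying the Wick formula in its tensor-network form \cite{collins2011gaussianization}, the expectation becomes a sum over pairings of the $p$ copies of $G$ with the $p$ copies of $\bar G$, i.e.\ over permutations $\alpha\in S_p$. Each pairing $\alpha$ produces: (i) a factor $M^{\#\alpha}$ from the closed environment loops — and since $M\sim cN^2$ this is the source of the $N^{2\#\alpha}c^{\#\alpha}$ weight after we track powers of $N$ separately — and (ii) a contraction of the $p$ symmetrizers whose combinatorial structure, as explained in the paragraph around Fig.~\ref{fig:local-construct}, is exactly the generalized bipartite trace $\Tr_{\gamma\alpha,\gamma^{-1}\alpha}(P_s,\ldots,P_s)$: the outgoing legs of vertex $i$ land on the black boxes $\gamma\alpha(i)$ and $\gamma^{-1}\alpha(i)$, which is precisely the edge set \eqref{eq:def-G-ga-gma} of the digraph $G_{\gamma\alpha,\gamma^{-1}\alpha}$. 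Invoking Theorem~\ref{thm:graph-poly-trace-Psym}, this contraction equals $2^{-p}J(G_{\gamma\alpha,\gamma^{-1}\alpha};N)$. Collecting the environment factor $M^{\#\alpha}=(cN^2)^{\#\alpha}$ and the symmetrizer factor $2^{-p}J(G_{\gamma\alpha,\gamma^{-1}\alpha};N)$ and summing over $\alpha$ yields the first claimed formula. (Strictly, $M=cN^2$ exactly, or one absorbs a $1+o(1)$ in the asymptotic regime; I would state it with $M=cN^2$ to match how the statement is written.)

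For the second identity, the only change is that each factor in the product $\prod_{P_{f,i}\in P} N\langle\Omega|(W_t)^{|P_{f,i}|\bmod 2}(W_t^*W_t)^{\lfloor|P_{f,i}|/2\rfloor}|\Omega\rangle$ is an \emph{open} ladder segment capped by the vectors $|\Omega\rangle$, whose diagrammatic cap (the factor $N^{-1/2}$ times a cup, giving the overall $N$ in front) reconnects the two free legs of the segment in the pattern described by the permutations $\gamma_{a,|P_i|}$ and $\gamma_{b,|P_i|}$ shown in Fig.~\ref{fig:overlaps-gamma-ab}. Running the same Wick expansion over all pairings $\alpha\in S_p$ (pairings may connect $G$'s from different blocks $P_{f,i}$, since the expectation is over the whole product), the environment loops again contribute $M^{\#\alpha}=c^{\#\alpha}N^{2\#\alpha}$, the prefactor $(-1/N)^{|P_f|}$ times the $\prod N$ combines to $(-1)^{|P_f|}N^{2\#\alpha-|P_f|}\cdot c^{\#\alpha}$ after pulling out the $N^2$'s, and the symmetrizer contraction is now the generalized trace along the digraph whose outgoing edges at vertex $i$ land on $\Gamma_{a,P_f}(\alpha(i))$ and $\Gamma_{b,P_f}(\alpha(i))$ — i.e.\ $G_{\Gamma_{a,P_f}\alpha,\Gamma_{b,P_f}\alpha}$ — evaluating by Theorem~\ref{thm:graph-poly-trace-Psym} to $2^{-p}J(G_{\Gamma_{a,P_f}\alpha,\Gamma_{b,P_f}\alpha};N)$. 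Summing over $f$ and $\alpha$ gives the second formula.

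The main obstacle is not the Gaussian calculus — that is standard — but the careful verification that the symmetrizer-contraction graph arising from a pairing $\alpha$ is \emph{exactly} $G_{\gamma\alpha,\gamma^{-1}\alpha}$ (resp.\ $G_{\Gamma_{a,P_f}\alpha,\Gamma_{b,P_f}\alpha}$), with correct orientations, and that the number of closed environment loops equals $\#\alpha$ in both cases. This is precisely what the diagrammatic discussion preceding the proposition (Figs.~\ref{fig:local-construct}, \ref{fig:overlaps-gamma-ab}, \ref{fig:example_GammaP}) is designed to establish, so in the write-up I would present the proof as a summary that points to that discussion rather than re-deriving the adjacency structure from scratch, emphasizing the two structural facts (environment loops $\leftrightarrow$ $\#\alpha$; symmetrizer contraction $\leftrightarrow$ $J$ of the $\gamma$- or $\Gamma$-twisted digraph) and citing Theorem~\ref{thm:graph-poly-trace-Psym} for the final step.
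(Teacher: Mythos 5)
Your proposal is correct and matches the paper's approach: the paper also presents Proposition~\ref{prop:expanded-Wick-thm} as a summary of the diagrammatic Wick expansion developed in Section~\ref{sec:diagrammatics} (the environment loops give $M^{\#\alpha}=c^{\#\alpha}N^{2\#\alpha}$, and the symmetrizer contraction yields $2^{-p}J(G_{\gamma\alpha,\gamma^{-1}\alpha};N)$ or $2^{-p}J(G_{\Gamma_{a,P_f}\alpha,\Gamma_{b,P_f}\alpha};N)$ as in Theorem~\ref{thm:graph-poly-trace-Psym}), exactly as you outline. Your choice to cite Theorem~\ref{thm:graph-poly-trace-Psym} directly, where the paper essentially re-derives the state-sum evaluation inline, is a minor streamlining but not a different route.
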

 Reformulating the above proposition and using the Remark \ref{rem:correspondance-W-Partial}, we can recover the moments of the random matrix $W^\Gamma$ from Proposition \ref{prop:moments-W-Gamma} 
\begin{equation}
    \E\left(\Tr\left((W^{\Gamma})^p\right)\right) = \sum_{\alpha\in S_p}N^{2\# \alpha}\frac{c^{\#\alpha}}{2^p}J(G_{\gamma\alpha, \gamma^{-1}\alpha};N).
\end{equation}
We use the short-hand notation, for $\alpha\in S_p$ 
\begin{equation}
    w(\alpha,\gamma):=N^{2\# \alpha}\frac{c^{\#\alpha}}{2^p}J(G_{\gamma\alpha,\gamma^{-1}\alpha}).
\end{equation}

\begin{figure}
    \centering
    \includegraphics[scale=0.55]{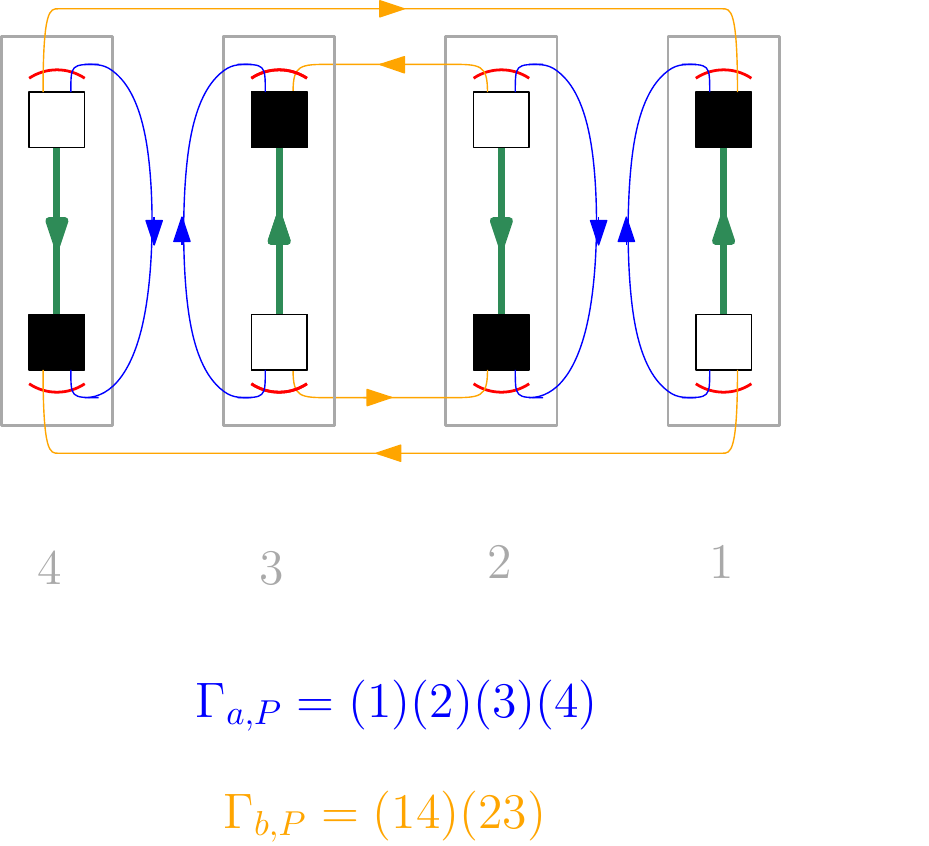}
    \caption{Example of permutations $\Gamma_{a,P}$, $\Gamma_{b,P}$ for $P=\{\{2,3\},\{1,4\}\}$.}
    \label{fig:example_GammaP}
\end{figure}

\section{Circuit polynomial techniques for \texorpdfstring{$\E\Tr(Q^p)$}{E Tr Qp}}\label{sec:bounds}

In this section we prove a sequence of technical bounds and combinatorial properties that we use to determine the set of permutations $\alpha$ contributing to the asymptotics of the random matrices $W^\Gamma$ and $Q$. \\

We have the following bound
\begin{proposition}\label{prop:bound_prop}
   Let $\alpha\in S_p$, then there exists $R_{\alpha,f}>0$, that can depend on $\alpha$ and on $f$ but not on $N$
   \begin{equation}
       N^{\#2\alpha-\lvert P_f \rvert }J(G_{\Gamma_{a,P_f}\alpha,\Gamma_{b,P_f}\alpha }; N)\le R_{\alpha, f}N^{2\#\alpha - \lvert P_f\rvert +p + K(G_{\Gamma_{a,P_f}\alpha,\Gamma_{b,P_f}\alpha })},
   \end{equation}
   with $K(G_{\Gamma_{a,P_f}\alpha,\Gamma_{b,P_f}\alpha })$ the number of connected components of the corresponding digraph. Moreover, we have: $$K(G_{\Gamma_{a,P_f}\alpha,\Gamma_{b,P_f}\alpha })-\lvert P_f\rvert \le 0.$$
\end{proposition}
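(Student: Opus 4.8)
The statement has two parts: a bound on the polynomial $N^{2\#\alpha-|P_f|}J(G_{\Gamma_{a,P_f}\alpha,\Gamma_{b,P_f}\alpha};N)$, and the inequality $K(G_{\Gamma_{a,P_f}\alpha,\Gamma_{b,P_f}\alpha})-|P_f|\le 0$. The first part is essentially a direct consequence of Lemma~\ref{lem:UB-deg-J}: that lemma says that for a $2$-in/$2$-out digraph $G$ on $p$ vertices, $\deg J(G;x)\le p+K(G)$, so $J(G_{\Gamma_{a,P_f}\alpha,\Gamma_{b,P_f}\alpha};N)\le R\,N^{p+K(G_{\Gamma_{a,P_f}\alpha,\Gamma_{b,P_f}\alpha})}$ for some constant $R$ (the sum of the absolute values of the coefficients of $J$, which depends only on the digraph, hence only on $\alpha$ and $f$, not on $N$). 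Multiplying by $N^{2\#\alpha-|P_f|}$ gives the claimed bound. So the plan is: first recall that $G_{\Gamma_{a,P_f}\alpha,\Gamma_{b,P_f}\alpha}$ is indeed a $2$-in/$2$-out digraph on $p$ vertices (this follows from the construction in Section~\ref{sec:diagrammatics}, since $\Gamma_{a,P_f}$, $\Gamma_{b,P_f}$, $\alpha$ are all permutations of $[p]$), then invoke Lemma~\ref{lem:UB-deg-J}, then absorb everything $N$-independent into $R_{\alpha,f}$.

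**The combinatorial inequality.** The real content is $K(G_{\Gamma_{a,P_f}\alpha,\Gamma_{b,P_f}\alpha})\le|P_f|$. The idea is to show that the connected components of this digraph are indexed by (a quotient of) the blocks of the interval partition $P_f$ — or at least that there are no more components than blocks. I would argue as follows: within each block $P_{f,i}$, the permutations $\gamma_{a,P_{f,i}}$ and $\gamma_{b,P_{f,i}}$ are built from adjacent transpositions $(1)(23)(45)\cdots$ and $(12)(34)\cdots$ on the elements of $P_{f,i}$, and the subgroup $\langle\gamma_{a,P_{f,i}},\gamma_{b,P_{f,i}}\rangle$ acts transitively on $P_{f,i}$ (the two families of transpositions generate the full symmetric group on a linearly ordered set, or at least act transitively). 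Therefore, in the graph $G_{\Gamma_{a,P_f},\Gamma_{b,P_f}}$ (i.e.\ with $\alpha=\mathrm{id}$), each block $P_{f,i}$ of the interval partition is contained in a single connected component; since the blocks partition $[p]$, we get at most $|P_f|$ components. Composing with $\alpha$ on the right only relabels which vertex the edges emanate from and can only merge components, never create new ones — more carefully, one checks that adding the edges $\{i\to\Gamma_{a,P_f}(\alpha(i))\}$ and $\{i\to\Gamma_{b,P_f}(\alpha(i))\}$ keeps each $\alpha^{-1}(P_{f,i})$ (or $P_{f,i}$, depending on the precise bookkeeping in Section~\ref{sec:diagrammatics}) connected, because transitivity of the generated group is preserved under this twist. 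So $K\le|P_f|$.

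**Main obstacle.** I expect the delicate point to be the precise effect of post-composing with $\alpha$ on the component count — one must be careful that the vertex set is still $[p]$ and that the "block structure" transported through $\alpha$ still yields connected pieces, rather than accidentally splitting a block. The cleanest way to handle this is probably to observe that $\langle\Gamma_{a,P_f}\alpha,\Gamma_{b,P_f}\alpha\rangle$ contains $(\Gamma_{a,P_f}\alpha)(\Gamma_{b,P_f}\alpha)^{-1}=\Gamma_{a,P_f}\Gamma_{b,P_f}^{-1}$ (after a conjugation, $\alpha$ cancels appropriately when we take the right quotient), which is $\alpha$-independent and whose orbits refine into the blocks $P_{f,i}$; since the connected components of $G_{\sigma_1,\sigma_2}$ (as an undirected graph) are exactly the orbits of $\langle\sigma_1,\sigma_2\rangle$, and this group contains a subgroup whose orbits are unions of at most... — one has to track orbits carefully here. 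Alternatively, a hands-on argument: walk along edges within a block using alternately $a$- and $b$-type transpositions; this is exactly the picture of the ladder diagram where consecutive $W_t$ blocks within an interval $P_{f,i}$ are chained together, so connectivity within a block is manifest from the diagrammatics of Section~\ref{sec:diagrammatics}, and the $|\Omega\rangle$ cap at the ends of each interval does not connect different intervals. I would lean on this diagrammatic picture to make the orbit/connectivity claim transparent, reducing the inequality to "each interval block of the ladder is internally connected after Wick pairing," which is visually immediate.
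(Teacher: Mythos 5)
Your proposal is essentially correct and follows the paper's own route: Lemma~\ref{lem:UB-deg-J} gives the degree bound (with $R_{\alpha,f}$ being the sum of the non-negative coefficients of $J$), and the key to $K \le |P_f|$ is indeed the $\alpha$-independent element $(\Gamma_{a,P_f}\alpha)(\Gamma_{b,P_f}\alpha)^{-1} = \Gamma_{a,P_f}\Gamma_{b,P_f}^{-1} = \Gamma_{a,P_f}\Gamma_{b,P_f}$ (the last equality because $\Gamma_{b,P_f}$ is an involution), which lies in $\langle\Gamma_{a,P_f}\alpha,\Gamma_{b,P_f}\alpha\rangle$ and acts transitively on each block $P_{f,i}$, so that orbits of the full group --- which are the connected components of the digraph --- are unions of blocks. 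That is exactly the paper's argument, which you reach in your last paragraph. Your middle paragraph's heuristic (``composing with $\alpha$ on the right \dots can only merge components, never create new ones'') is not justified as stated and should be discarded in favour of the $\alpha$-cancellation argument; with that, the proof is complete and matches the paper.
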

\begin{proof}
We now come to the second bound in the case of $2$-in/$2$-out digraphs $G_{\Gamma_{a,P_f}\alpha,\Gamma_{b,P_f}\alpha }$. The main difference lies in their number of connected component. Indeed, $\Gamma_{a,P_f}\Gamma_{b,P_f}\in \langle \Gamma_{a,P_f}\alpha, \Gamma_{b,P_f}\alpha\rangle$ acts transitively on the sets $P_{f,i}$ separately, but is not transitive on $\{1,\ldots,p\}$. Thus the number of connected components is bounded from above by $\lvert P_f \rvert$. Hence we have 
\begin{equation}
    K(G_{\Gamma_{a,P_f}\alpha,\Gamma_{b,P_f}\alpha })-\lvert P_f\rvert \le 0.
\end{equation}
Again, using the relation between the circuit polynomial and the interlace polynomial on each connected component of $G_{\Gamma_{a,P_f}\alpha,\Gamma_{b,P_f}\alpha }$, we obtain 
\begin{equation}
    \textrm{deg }J(G_{\Gamma_{a,P_f}\alpha,\Gamma_{b,P_f}\alpha }; N)\le p+K(G_{\Gamma_{a,P_f}\alpha,\Gamma_{b,P_f}\alpha }).
\end{equation}
\end{proof}

We also show the two next technical Lemma \ref{lem:maximal-scaling-cycle-type} and \ref{lem:crossing-length2}. They prove useful in the last section, to determine the limiting empirical distribution of eigenvalues of $Q$,
\begin{lemma}\label{lem:maximal-scaling-cycle-type}
Let $\alpha \in S_p$ such that $2\#\alpha +p + K(G_{\gamma\alpha,\gamma^{-1}\alpha})= 2p+e$ with $e$ a positive integer. Then, if $K(G_{\gamma\alpha,\gamma^{-1}\alpha})=1$
\begin{equation}
    e=(\#_1\alpha+1)+\sum_{i\ge 3}(2-i)\#_i\alpha\le \#_1\alpha+1,
\end{equation}
where $\#_i\alpha$ is the number of cycles of length $i$ of $\alpha$. If  $K(G_{\gamma\alpha,\gamma^{-1}\alpha})=2$,
\begin{equation}
    e=2+\sum_{k\ge 2}2(1-k)C_{2k}\le 2.
\end{equation}
Similarly, let $e'$ be a positive integer such that $2\#\alpha +p + K(G_{\Gamma_{a,P}\alpha,\Gamma_{b,P}\alpha})=2p+e'$. Then,
\begin{equation}
    e'=\#_1\alpha+\sum_{i\ge3}(2-i)\#_i\alpha+\left(K(G_{\Gamma_{a,P}\alpha,\Gamma_{b,P}\alpha})-\lvert P \rvert \right).
\end{equation}
In particular, if $\alpha$ is not allowed to have fixed points, then $e'\le 0$. If we have equality $e'=0$, then $\#_i\alpha=0$ for all $ i \ge 3$.
\end{lemma}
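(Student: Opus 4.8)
The statement is, in the end, a bookkeeping computation with the cycle type of $\alpha$, fed by the structural inputs of Propositions~\ref{prop:properties-J-G-alpha} and~\ref{prop:bound_prop}. The plan is to record a master identity, split into the cases $K=1$ and $K=2$, and then repeat the computation in the $\Gamma_{a,P},\Gamma_{b,P}$ setting.

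\emph{Master identity.} For $\alpha\in S_p$ with $\#_i\alpha$ cycles of length $i$ one has the two elementary relations $\#\alpha=\sum_{i\ge1}\#_i\alpha$ and $p=\sum_{i\ge1}i\,\#_i\alpha$. Substituting these into $2\#\alpha+p+K(G_{\gamma\alpha,\gamma^{-1}\alpha})-2p$ and collecting terms gives at once
$$e=K(G_{\gamma\alpha,\gamma^{-1}\alpha})+\sum_{i\ge1}(2-i)\,\#_i\alpha.$$
All remaining assertions follow by isolating the $i=1$ and $i=2$ summands (whose coefficients are $1$ and $0$ respectively) and using that $2-i\le-1$ for $i\ge3$.

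\emph{The cases $K=1$ and $K=2$.} When $K(G_{\gamma\alpha,\gamma^{-1}\alpha})=1$ the $i=2$ term drops and $e=(\#_1\alpha+1)+\sum_{i\ge3}(2-i)\#_i\alpha$, with $e\le\#_1\alpha+1$ since each remaining summand is $\le0$. When $K(G_{\gamma\alpha,\gamma^{-1}\alpha})=2$, Proposition~\ref{prop:properties-J-G-alpha} tells us $\alpha$ is parity changing; hence $\alpha$ has no fixed points, and, since consecutive elements of any cycle of $\alpha$ alternate parity, every cycle of $\alpha$ has \emph{even} length, so $\#_i\alpha=0$ for $i$ odd. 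Writing $C_{2k}:=\#_{2k}\alpha$, the master identity becomes $e=2+\sum_{k\ge1}2(1-k)C_{2k}$; the $k=1$ term vanishes and the rest are $\le0$, giving $e\le2$.

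\emph{The $\Gamma_{a,P},\Gamma_{b,P}$ case and the final claim.} The same computation, now applied to the relevant exponent for the $\Gamma$-terms from Proposition~\ref{prop:bound_prop}, namely $2\#\alpha-\lvert P\rvert+p+K(G_{\Gamma_{a,P}\alpha,\Gamma_{b,P}\alpha})$, written as $2p+e'$, yields
$$e'=\bigl(K(G_{\Gamma_{a,P}\alpha,\Gamma_{b,P}\alpha})-\lvert P\rvert\bigr)+\sum_{i\ge1}(2-i)\#_i\alpha,$$
and isolating the $i=1,2$ terms gives exactly the claimed formula $e'=\#_1\alpha+\sum_{i\ge3}(2-i)\#_i\alpha+\bigl(K(G_{\Gamma_{a,P}\alpha,\Gamma_{b,P}\alpha})-\lvert P\rvert\bigr)$. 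For the final assertion, if $\alpha$ has no fixed points then $\#_1\alpha=0$; Proposition~\ref{prop:bound_prop} gives $K(G_{\Gamma_{a,P}\alpha,\Gamma_{b,P}\alpha})-\lvert P\rvert\le0$; and $\sum_{i\ge3}(2-i)\#_i\alpha\le0$. Adding these three nonpositive quantities gives $e'\le0$, and equality forces each of them to vanish; in particular $\#_i\alpha=0$ for all $i\ge3$.

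\emph{Expected difficulty.} There is no real obstacle here: the only points needing a line of justification are that a parity-changing permutation has only even-length cycles (used in the $K=2$ case), and that the algebraic rearrangement is lined up exactly with the exponents appearing in Propositions~\ref{prop:properties-J-G-alpha} and~\ref{prop:bound_prop} --- in particular the $-\lvert P\rvert$ shift in the $\Gamma$ case --- which I would double-check against the sign conventions there when writing the final argument.
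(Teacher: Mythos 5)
Your proof is correct and follows essentially the same route as the paper's: substitute $\#\alpha=\sum_i\#_i\alpha$ and $p=\sum_i i\,\#_i\alpha$ into the defining equation, then invoke Proposition~\ref{prop:properties-J-G-alpha} (even cycle lengths when $K=2$) and Proposition~\ref{prop:bound_prop} ($K-|P|\le 0$) exactly as the paper does. You were right to flag the $-|P|$ shift: the exponent in Proposition~\ref{prop:bound_prop} is $2\#\alpha-|P_f|+p+K$, and only with that shift in the definition of $e'$ does the stated formula follow, so your reading matches the paper's implicit intent even though the lemma's displayed definition of $e'$ omits the $-|P|$.
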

\begin{proof}
We first focus on $e$. We denote $\#_i\alpha$ the number of cycles of length $i$ of $\alpha$. We have the following relations
\begin{equation}
  p=\sum_{i\ge 1} i \#_i\alpha \quad \textrm{and} \quad \#\alpha =\sum_{i\ge 1}\#_i\alpha.
\end{equation}
From those we deduce
\begin{equation}
    e= \sum_{i\ge 1}(2-i)\#_i\alpha+K(G_{\gamma\alpha,\gamma^{-1}\alpha}).
\end{equation}
We have two cases, first assume that $K(G_{\gamma\alpha,\gamma^{-1}\alpha})=1$. Then, a consequence of proposition \ref{prop:properties-J-G-alpha} is that $\alpha$ is allowed to have fixed points, thus
\begin{equation}
    e=(\#_1\alpha+1)+\sum_{i\ge 3}(2-i)\#_i\alpha \le \#_1\alpha+1.
\end{equation}
Now, assume that $K(G_{\gamma\alpha,\gamma^{-1}\alpha})=2$. A consequence of proposition \ref{prop:properties-J-G-alpha} is that $\alpha$ is not allowed to have fixed point (as those are not parity changing). More generally, $\alpha$ is not allowed to have cycles of odd length. Therefore,
\begin{equation}
    e=2+\sum_{k\ge 2}2(1-k)\#_{2k}\alpha\le 2.
\end{equation}
The above inequality is saturated when $\alpha$ has only cycles of length $2$ as any longer cycles will contribute negatively to the sum of the right hand side. This proves the two first statements of the lemma.\\
We now come to $e'$. We have
\begin{equation}
    e'=\#_1\alpha+\sum_{i\ge 3}(2-i)\#_i\alpha+\left(K(G_{\Gamma_{a,P}\alpha, \Gamma_{b,P}\alpha})-\lvert P \rvert \right),
\end{equation}
where the second and last term of the right hand side is bounded from above by $0$. Hence, $e'\le \#_1\alpha$. In particular, if $\alpha$ does not have fixed points, $e'\le 0$ with equality if $\#_i\alpha=0 \ \forall i \ge 3$ and $\alpha(P_i)=P_i, \ \forall P_i \in P$. 
\end{proof}

\noindent As we understand from the above proposition, fixed points and cycles of length $2$ of $\alpha$ play an important role. Indeed, the scaling of a given Wick pairing $\alpha$ can exceed $N^{2p+2}$ if and only if there are enough fixed points, and not too many cycles of length greater or equal to $3$. The lemma below is an important step to the proof that we discuss fully in the next sections. 

\begin{lemma}\label{lem:crossing-length2}
Assume $\alpha\in S_p$ has only cycles of length $2$ and is parity changing (\textit{i.e.} $\alpha(i) \textrm{ mod }2 \neq i \textrm{ mod }2$). Assume also that there exists two cycles of $\alpha$ of the form $(a,i+1)$ and $(i,b)$ for $a<i<i+1<b$. Then, $\textrm{deg }J(G_{\gamma\alpha,\gamma^{-1}\alpha};N)\le p$. 
\end{lemma}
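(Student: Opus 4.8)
The goal is to show that when $\alpha$ is a parity-changing fixed-point-free involution containing a pair of cycles $(a,i+1)$ and $(i,b)$ with $a<i<i+1<b$, the circuit-counting polynomial of $G_{\gamma\alpha,\gamma^{-1}\alpha}$ has degree at most $p$ rather than the maximal value $p+K$. Since $\alpha$ is parity changing, Proposition~\ref{prop:properties-J-G-alpha} gives $K(G_{\gamma\alpha,\gamma^{-1}\alpha})=2$, so the generic upper bound from Lemma~\ref{lem:UB-deg-J} is $p+2$; I must gain two units. By Theorem~\ref{thm:Arratia}, applied to each of the two connected components, $\deg J = \deg q(H_1) + \deg q(H_2) + 2$ where $H_1,H_2$ are the interlace graphs of Euler circuits of the two components, and $\deg q(H_j)\le |H_j|$ with equality iff that interlace graph has the maximal-nullity property (in particular, is achieved when the component has a unique Euler circuit, i.e. an Euler circuit with no interlaced pair). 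So it suffices to show that \emph{at least one} of the two components fails to attain the maximal interlace degree, i.e. every Euler circuit of that component has an interlaced pair; equivalently, by Lemma~\ref{lem:max-J-only-cut}, that component is \emph{not} a graph with only cut vertices.

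The plan is therefore to locate, inside $G_{\gamma\alpha,\gamma^{-1}\alpha}$, a vertex that is not a cut vertex, using the combinatorial hypothesis on $a,i,i+1,b$. First I would recall the explicit edge structure: vertex $j$ has outgoing edges to $\gamma(\alpha(j))=\alpha(j)+1$ and to $\gamma^{-1}(\alpha(j))=\alpha(j)-1$ (indices mod $p$), and symmetrically incoming edges from the two predecessors of $\alpha^{-1}(j)$ under $\gamma^{\pm1}$. Since $(a,i+1)\in\alpha$ means $\alpha(i+1)=a$ and $\alpha(a)=i+1$, and $(i,b)\in\alpha$ means $\alpha(i)=b$, $\alpha(b)=i$, I can write down the four edges leaving $\{i,i+1\}$ and the four edges entering $\{i,i+1\}$, and observe that the consecutive labels $i,i+1$ together with the consecutive pair $i+1 = i+1$ appearing as targets produce a small "cycle-like" configuration near these vertices. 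Concretely: vertex $i+1$ sends edges to $a+1$ and $a-1$; vertex $a$ sends edges to $i+2$ and $i$; vertex $i$ sends edges to $b+1$ and $b-1$; vertex $b$ sends edges to $i+1$ and $i-1$. The point is that the block on $\{i,i+1\}$ has two internal-ish connections through $\{a,b\}$ but the nesting $a<i<i+1<b$ forces the Euler circuit to traverse these in a way that creates an interlacement, or equivalently forces one of $i,i+1,a,b$ (or a vertex in between) to be non-cut. I would argue this by showing that deleting the appropriate vertex leaves the rest of the graph connected: the two "halves" of the cyclic order that a cut vertex would separate are in fact still joined because the $(a,i+1)$ and $(i,b)$ cycles "cross" the point between $i$ and $i+1$ — precisely the crossing pattern $a<i<i+1<b$ means the chord of $\alpha$ from $a$ to $i+1$ and the chord from $i$ to $b$ interlace, and this interlacement is inherited by any Euler circuit of the component containing both chords.

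Here is the cleaner route I would actually pursue. Exhibit an explicit Euler circuit of the relevant connected component (the one containing the parity class of $i$, say) by the same kind of construction as in Proposition~\ref{prop:J-alpha=id}: walk along alternating $\gamma\alpha$- and $\gamma^{-1}\alpha$-steps. In such a circuit, track the positions at which the vertices $a$, $i$, $i+1$, $b$ are visited. Using $\alpha(a)=i+1$, $\alpha(b)=i$ and the order $a<i<i+1<b$, show the circuit reads $\ldots a \ldots b \ldots a \ldots b\ldots$ or $\ldots i \ldots \{i+1\text{ or }a\}\ldots i\ldots$, i.e. produce an explicit interlaced pair among $\{a,b\}$ or among $\{i,i+1\}$. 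Then $q(H(C))$ has $\deg < |H(C)|$ for that component, so by Theorem~\ref{thm:Arratia} and Lemma~\ref{lem:mult-J} applied componentwise, $\deg J(G_{\gamma\alpha,\gamma^{-1}\alpha};N) \le (n_1-1) + n_2 + 2 = p+1$; and then a parity/counting refinement (every odd-length cycle is forbidden since $\alpha$ is parity changing and an involution, and the crossing costs an extra unit) upgrades $p+1$ to $p$. The main obstacle I anticipate is precisely this last half-unit: getting from $\deg J\le p+1$ down to $\deg J\le p$ requires either finding \emph{two} independent interlaced pairs (one in each component, or two in the same component), or combining the interlacement with the cycle-type bound of Lemma~\ref{lem:maximal-scaling-cycle-type} (with $K=2$ and $\alpha$ all $2$-cycles, $e\le 2$ with equality only for the "non-crossing" involution); showing the crossing hypothesis strictly lowers $e$ to $\le 0$ is the crux, and I would handle it by checking that a single interlaced pair kills exactly one unit in \emph{each} of the degree bound and the complementary Euler-circuit-uniqueness bound, netting the required two.
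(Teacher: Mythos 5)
The high-level framing is right: you correctly identify that $K(G_{\gamma\alpha,\gamma^{-1}\alpha})=2$, that the generic bound is $p+2$, and that the route to shaving off degree is to exhibit a vertex that is not a cut vertex (equivalently, an interlaced pair in an Euler circuit of one component). The local structure around $a,i,i+1,b$ that you sketch is indeed what shows $a$ and $i$ fail to be cut vertices, and applying the skein relation at such a vertex reduces the degree of that component's $J$ from $p/2+1$ to at most $p/2$ via Lemma~\ref{lem:UB-deg-J}. Up to this point you are on the same track as the paper.

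The genuine gap is exactly the one you flag at the end but do not close: a single non-cut vertex in a single component only yields $\deg J \le p+1$. You propose several vague exits (``two independent interlaced pairs'', ``combine with the cycle-type bound'', ``a single interlaced pair kills one unit in each of two bounds''), but none of these is carried out, and the last one does not correspond to a valid double-count. The paper's missing ingredient is short and structural: the two connected components of $G_{\gamma\alpha,\gamma^{-1}\alpha}$ are \emph{isomorphic digraphs}, the isomorphism being $\phi(k)=l$ for each $2$-cycle $(k,l)$ of $\alpha$ (well-defined across parity classes precisely because $\alpha$ is a parity-changing involution). Once you establish the one-unit gain on the component containing $a$, the isomorphism transfers the same gain to the other component for free, and the multiplicativity of $J$ (Lemma~\ref{lem:mult-J}) gives $\deg J \le p/2 + p/2 = p$. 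Without this isomorphism observation, your argument stalls at $p+1$, as you yourself note, so the proof as proposed is incomplete.
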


We use the parity changing assumption together with the constraint that $\alpha$ has only length $2$ cycles only to simplify the proof and avoid having to consider several cases. Indeed, a stronger form of this result can easily be proven from similar considerations. However, the above form of the lemma will be sufficient for our purposes.

\begin{proof}
The parity changing assumption ensures that $G_{\gamma\alpha,\gamma^{-1}\alpha}$ has $2$ connected components. We note that these two connected components must be isomorphic graphs. Indeed, we can exhibit an isomorphism. For each cycle of the form $(k,l)$ of $\alpha$, the isomorphism $\phi$ is set to $\phi(k)=l$. Hence, we can focus on one connected component only and bound the degree of the circuit counting polynomial of this connected component. We use Fig.~\ref{fig:crossing-deg-bound} to show the structure of $G_{\gamma\alpha,\gamma^{-1}\alpha}$ around vertices $a,b,i,i+1$. From this figure we note that neither vertex $a$ nor vertex $i$ are cut vertices. Moreover, since the figure we show is minimally connected (gray parts inside each connected components could be further connected if $\alpha$ had more crossings), this means vertices $a$ and $i$ cannot become cut-vertices by changing $\alpha$ while keeping the cycles $(a,i+1)$ and $(i,b)$ fixed.\\

Denote $G^{(a)}_{\gamma\alpha,\gamma^{-1}\alpha}$ the connected component containing vertex $a$. We focus on this connected component for now. By Lemma \ref{lem:max-J-only-cut}, we have $\textrm{deg }J(G_{\gamma\alpha,\gamma^{-1}\alpha}^{(a)};N)\le p/2$. Indeed, since for instance $a$ is not a cut-vertex we know that once we reduced vertex $a$ by summing over its two possible states using the skein relations of Eq.~\eqref{eq:skein-relation-J} on $a$ we are left with two graphs $G'$ and $G''$ such that 
\begin{equation}
    J(G^{(a)}_{\gamma\alpha,\gamma\alpha^{-1}};N)=J(G';N)+J(G''; N).
\end{equation}
Both $G'$ and $G''$ have a maximum of $p/2-1$ vertices. Therefore, from Lemma \ref{lem:UB-deg-J} 
\begin{equation}
    \textrm{deg }J(G^{(a)}_{\gamma\alpha,\gamma\alpha^{-1}};N)=\textrm{max}(\textrm{deg }J(G';N), \textrm{deg }J(G''; N)) \le p/2.
\end{equation}
By the isomorphism argument, we have the same bound on the degree of $J(G^{(b)}_{\gamma\alpha, \gamma^{-1},\alpha};N)$, hence
\begin{equation}
    \textrm{deg } J(G_{\gamma\alpha, \gamma^{-1}\alpha})\le p.
\end{equation}

\end{proof}

In the next section, we show that the contribution of fixed points is smaller than the previous bounds would let us believe. In order to study fixed points and cycles of length $2$ we introduce more general tensor networks.\\
\begin{figure}
    \centering
    \includegraphics[scale=0.6]{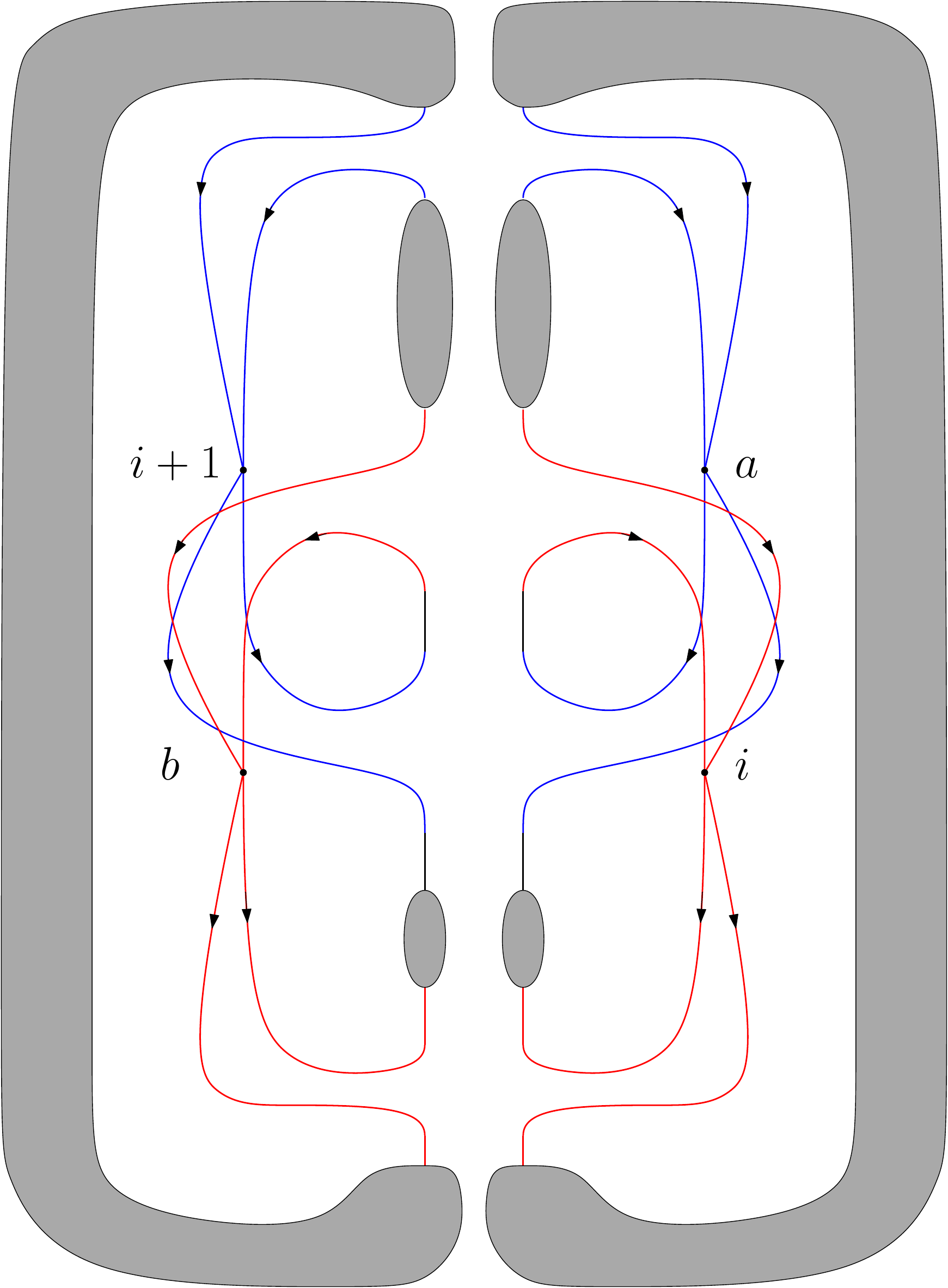}
    \caption{Local structure of the graph $G_{\gamma\alpha,\gamma^{-1}\alpha}$ for $\alpha$ satisfying the constraints of Lemma \ref{lem:crossing-length2}. The axe of symmetry corresponds to the ladder axe of symmetry. The figure shows the smallest connexity case. Indeed, the gray part of each connected component could be further connected if $\alpha$ have more crossings. For readability we color the edges differently depending on which vertices they are adjacent to.  The color change in the middle if the edges are shared between vertices $a$ and $i$ or between vertices $b$ and $i+1$.}
    \label{fig:crossing-deg-bound}
\end{figure}

\section{Tensor network evaluation of \texorpdfstring{$\E\Tr(Q^p)$}{E Tr Qp}}\label{sec:tensor-eval}

In this section, we express the moments of the random matrix $Q$ as a sum over evaluations of tensor networks made of the tensors $P_{s}$ and $P_{\overline \Omega}$. In particular, we do not expand the projectors $P_{\overline \Omega}$ appearing in the expression of $Q$. In that case, for each Wick pairing $\alpha$ we obtain a contraction of two types of tensors, $P_{s}:\bC^{N}\otimes \bC^N\rightarrow \bC^{N}\otimes \bC^N$ as before and $P_{\overline{\Omega}}:\bC^{N}\otimes \bC^N\rightarrow \bC^{N}\otimes \bC^N$ which is the projector on the complement of $\textrm{span}\{\lvert \Omega \rangle\}$ introduced earlier in Eq.~\eqref{eq:proj_complemet_omega_def}.

We work here with a graphical representation that is very similar to the ones introduced in previous paragraphs. Here, we represent $\Tr(Q^p)$ directly, without expanding. We do so by stacking the building blocks introduced in \eqref{eq:diag-Wst}, \eqref{eq:diag-Wst*} together with an additional building block representing $P_{\overline \Omega}$
\begin{equation}
    P_{\overline \Omega} = I- \lvert \Omega \rangle \langle \Omega \rvert = \raisebox{-6mm}{\includegraphics[scale=0.8]{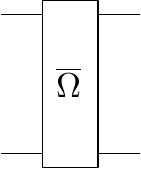}}
\end{equation}
These building blocks are stacked in between  \eqref{eq:diag-Wst}, \eqref{eq:diag-Wst*} blocks to represent the operator $P_{\overline \Omega}$. The diagram thus obtained is a tensor network representing $\Tr(Q^n)$
\begin{equation}
    \Tr(Q^n)=\raisebox{-12mm}{\includegraphics[scale=0.50]{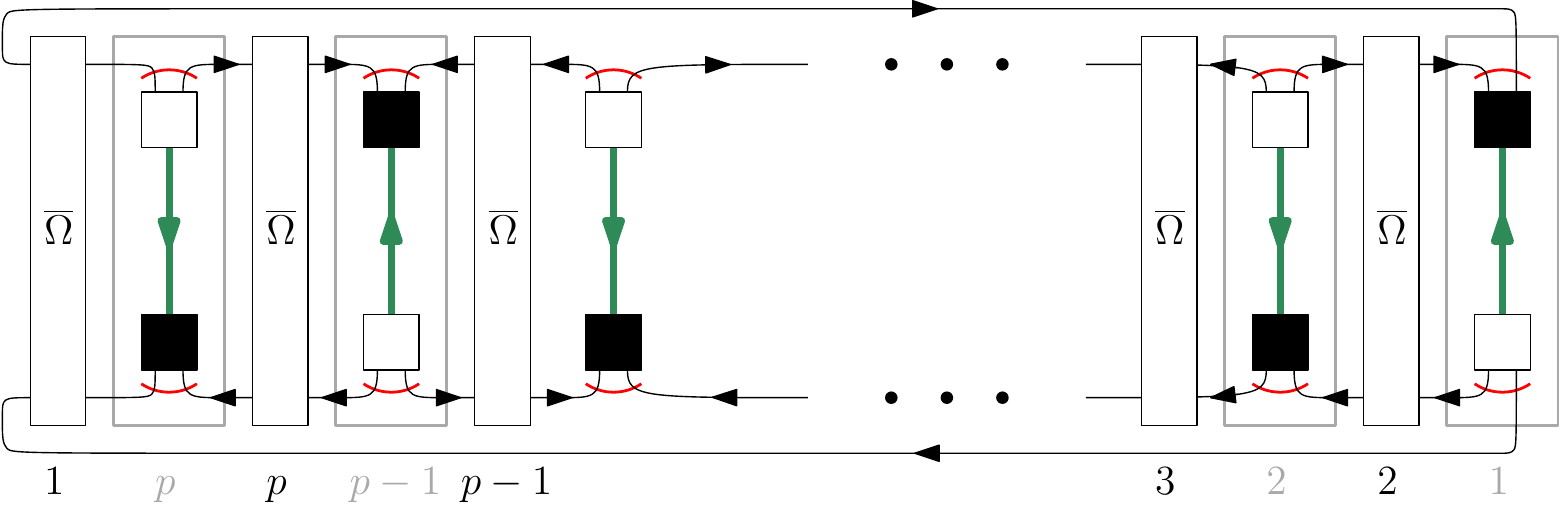}}
\end{equation}
To compute $\E(\Tr(Q^p))$ we proceed as before, with each Wick pairing identifying black box $i$ with white box $\alpha(i)$ and leading to a vertex labeled $i$. This vertex represents a $P_{s}$ operator. However, the resulting oriented graph also contains $\overline \Omega$-boxes that represents $P_{\overline \Omega}$.  This leads to a slightly more complicated connection pattern. We show on Fig. \ref{fig:local-structure-TN} the local structure around a vertex $i$. \\

\begin{figure}
    \centering
    \includegraphics[scale=0.7]{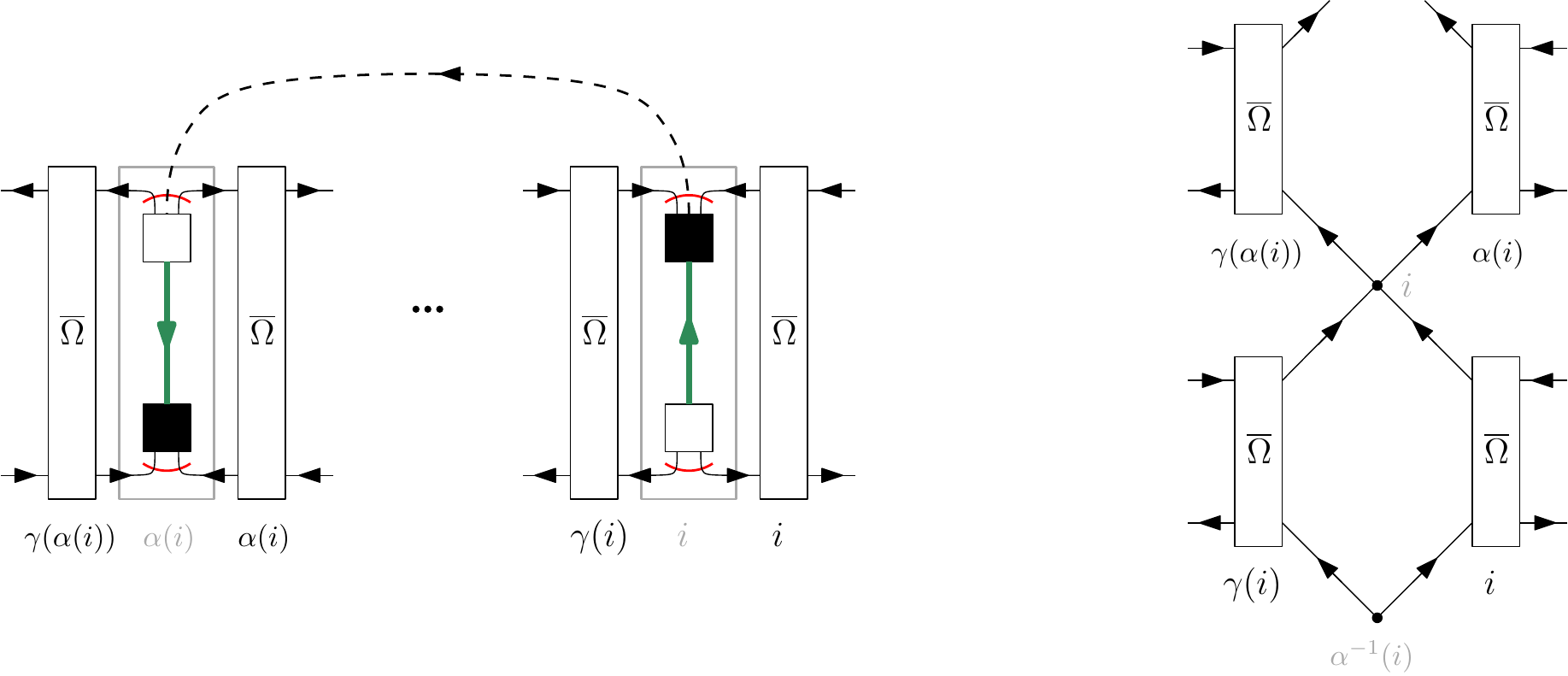}
    \caption{Local structure of a tensor network given a Wick pairing $\alpha$. Left: the Wick pairing on the diagram representing $\E(\Tr(Q^p))$. Right: the corresponding local structure of the tensor network of $P_{s}$ and $P_{\overline \Omega}$ operators.}
    \label{fig:local-structure-TN}
\end{figure}

\noindent The only change is that now vertices are only connected to $\overline \Omega$-boxes. Locally it is simple to check that vertex $i$ is connected to $\overline \Omega$-boxes $i$, $\gamma(i)$ \textit{via} ingoing edges, while it is connected to $\overline \Omega$-boxes $\alpha(i)$, $\gamma(\alpha(i))$ \textit{via} outgoing edges. This is due to the fact that the white box $\alpha(i)$ (of which adjacent edges are outgoing) is connected to $\overline \Omega$-boxes $\alpha(i)$, $\gamma(\alpha(i))$ while the black box $\alpha(i)$ (of which adjacent edges are ingoing) is connected to $\overline \Omega$-boxes $i$ and $\gamma(i)$. In the rest of this paper we denote by $T_{\alpha}$ the tensor network (and its evaluation) corresponding to a pairing $\alpha$ constructed in the way described above.\\

Using this tensor network construction, we have
\begin{equation}
    \E \Tr(Q^p)=\sum_{\alpha \in S_p} M^{\#\alpha} T_{\alpha}.
\end{equation}
Note that we have the relation
\begin{equation}\label{eq:TN-ev-vs-Graph-ev}
    T_{\alpha}=\frac1{2^p}\left(J(G_{\gamma\alpha,\gamma^{-1}\alpha}; N) + \sum_{\substack{f\in \{0,1\}^p \\ \exists i, f(i)=1}} \left(-1\right)^{|P_f|} N^{- \lvert P_f\rvert } J(G_{\Gamma_{a,P_f}\alpha,\Gamma_{b,P_f}\alpha };N)\right).
\end{equation}
In the next paragraph, we prove reduction relations for the evaluation of such tensor networks at fixed points, and simple transpositions of $\alpha$. This will allow us to consider permutations that do not contain fixed points and only have cycles of length $2$.\\

\noindent{\bf Reduction property for tensor network evaluation}\\

\noindent Le $c_q$ be a cycle of a permutation $\sigma = c_1 c_2 \ldots c_{\#\sigma}\in S_p$. We denote by $\sigma \div c_q:=c_1c_2\ldots \hat{c_q}\ldots c_{\# \sigma}$ the permutation acting on the set $\{1,\ldots,p\}\setminus \{i \, : \, i\in c_q\}$. We have the following reduction property for tensor network $T_{\alpha}$ such that $\alpha$ have the corresponding feature
\begin{proposition}\label{prop:TN-red-moves}
   Let $\alpha \in S_p$. Then,
   \begin{itemize}
       \item if $\alpha$ as a fixed point, that we denote $i$, then 
       \begin{equation}\label{eq:TN-to-reduced-TN}
           T_{\alpha}=\frac12 T_{\alpha\div (i)},
       \end{equation}
       where $\tilde{\alpha}$ is the permutation satisfying $\alpha=\tilde{\alpha}(i)$ in cycle notation. In particular, if $\alpha$ has several fixed points, consider $\alpha_R$ the reduced permutation acting on $\{1,\ldots, p\}\backslash \operatorname{Stab}(\alpha)$, then one has 
       \begin{equation}
           T_{\alpha}=\left( \frac12 \right)^{|\operatorname{Stab}(\alpha)|}T_{\alpha_R}
       \end{equation}
       \item if $\alpha$ contains a cycle of the form $(i,i+1)$, then 
       \begin{equation}
        T_{\alpha}=\Lambda(N)T_{\alpha \div (i,i+1)},
       \end{equation}
       where $\Lambda(N)=\frac14(N^2+2(N-\frac1{N}))$.
   \end{itemize}
   Note that the special cases where $\alpha=(1)$ and $\alpha=(12)$ are dealt with by setting $T_{\alpha\div(1)}=\Tr(P_{\overline \Omega})=N^2-1$ for the first case and $T_{\alpha\div (12)}=\Tr(P_{\overline \Omega})=N^2-1$ for the second case. For convenience, we denote $T_{\emptyset}:=\Tr(P_{\overline \Omega})$.
\end{proposition}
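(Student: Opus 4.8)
The plan is to use the fact that a fixed point, resp.\ a cycle of the form $(i,i+1)$, of the Wick pairing $\alpha$ makes the diagram $T_\alpha$ \emph{factor locally}. If $\alpha(i)=i$, then the copies of $G$ and $\bar G$ attached to the $i$-th copy of $W^\Gamma$ in the expansion $\Tr(Q^p)=\Tr\!\big(\Pcomp W^\Gamma\Pcomp W^\Gamma\cdots\Pcomp W^\Gamma\Pcomp\big)$ are Wick-paired with one another, so the $i$-th copy of $W^\Gamma$ is replaced by its average and everything else factors off. If $\alpha$ contains the cycle $(i,i+1)$, then the two $G$'s sitting in the \emph{consecutive} copies $W^{\Gamma}_{(i)}$ and $W^{\Gamma}_{(i+1)}$ are cross-paired, so those two copies are jointly replaced by the average of their cross-contraction. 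In both cases the resulting local operator, once conjugated by the neighbouring copies of $\Pcomp$, collapses to a scalar multiple of a single $\Pcomp$, which deletes one vertex (resp.\ two vertices) together with one copy of $\Pcomp$ and leaves exactly the diagram associated with $\alpha\div(i)$ (resp.\ $\alpha\div(i,i+1)$); the scalar is the stated $\tfrac12$ (resp.\ $\Lambda(N)$).

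\emph{Fixed point.} First I would recall from Proposition~\ref{prop:eigenval-eigenvec-average} that $\E W^\Gamma=M P_s^\Gamma=\tfrac M2\big(I_{N^2}+N\,\ketbra\Omega\Omega\big)$, and combine it with $\Pcomp\,\ketbra\Omega\Omega=0$ and $\Pcomp^2=\Pcomp$ to obtain
$$\Pcomp\,\big(\E W^\Gamma\big)\,\Pcomp=\tfrac M2\,\Pcomp.$$
In the $\alpha$-term of $\E\Tr(Q^p)$ with $\alpha(i)=i$, the $i$-th copy of $W^\Gamma$ is averaged independently of all others, so the two copies of $\Pcomp$ flanking it fuse into one by the displayed identity; peeling off the global factor $M^{\#\alpha}$ absorbs exactly one power of $M$, and what remains is $\tfrac12$ times the length-$(p-1)$ diagram indexed by $\alpha\div(i)$. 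This gives $T_\alpha=\tfrac12 T_{\alpha\div(i)}$; iterating over all of $\operatorname{Stab}(\alpha)$ yields the factor $(1/2)^{|\operatorname{Stab}(\alpha)|}$, and the degenerate case $p=1$ reproduces the convention $T_\emptyset=\Tr\Pcomp=N^2-1$. In the $W_t$-incarnation of Section~\ref{sec:diagrammatics} one uses instead $\E W_t=\tfrac M2(F+N\,\ketbra\Omega\Omega)$, $[\Pcomp,F]=0$, $F^2=I$ and $FW_tF=W_t^*$: the stray $F$ merely turns a $W_t$ into a $W_t^*$ and cancels against the twist already present in the ladder, so the reduced diagram is again of the prescribed form.

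\emph{Two-cycle.} Here I would compute the local operator obtained by cross-Wick-contracting the two $G$'s of the consecutive copies $W^{\Gamma}_{(i)}$, $W^{\Gamma}_{(i+1)}$ against the two $\bar G$'s, pulling the two symmetrizers $P_s=\tfrac12(I+F)$ through, and closing the environment loops (which produces $M^2$, absorbed into $M^{\#\alpha}$). Conjugating the outcome by $\Pcomp$ and expanding the two $P_s$'s, the four terms labelled $(I,I),(I,F),(F,I),(F,F)$ contribute respectively a full $\C^{N^2}$ loop of weight $N^2$, two partial traces $\Tr_1\Pcomp=(N-\tfrac1N)I_N$, and a $\ketbra\Omega\Omega$-type term annihilated by $\Pcomp$; collecting these with the overall $\tfrac14$ from the two symmetrizer normalizations gives $\Lambda(N)=\tfrac14\big(N^2+2(N-\tfrac1N)\big)$, so blocks $i$ and $i+1$ collapse to $\Lambda(N)$ times a single $\Pcomp$ and $T_\alpha=\Lambda(N)\,T_{\alpha\div(i,i+1)}$. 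For $p=2$, $\alpha=(12)$, this reads $T_{(12)}=\Lambda(N)(N^2-1)$, which one can cross-check against the explicit formula \eqref{eq:TN-ev-vs-Graph-ev}.

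The main obstacle I expect is bookkeeping rather than any estimate: one must check that after the local collapse the twist structure of the diagram (the index shuffle defining $W_t$), the alternation of the $\Pcomp$'s, and the surviving $F$'s combine so that the remaining diagram is \emph{literally} $T_{\alpha\div(i)}$ (resp.\ $T_{\alpha\div(i,i+1)}$) --- in particular that the full cycle $\gamma$ on $\{1,\dots,p\}$ becomes, after erasing $i$ (resp.\ $i,i+1$), the full cycle on the remaining labels. This is also where the hypothesis that the $2$-cycle be of the form $(i,i+1)$ with consecutive elements enters: consecutiveness is exactly what makes the two relevant copies of $W^\Gamma$ adjacent in the ladder, hence the contraction local. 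The operator identities used ($\Pcomp\,\ketbra\Omega\Omega=0$, $F\ket\Omega=\ket\Omega$, $\Pcomp^2=\Pcomp$, $\Tr_1\Pcomp=(N-\tfrac1N)I_N$, $\Tr_1 F=I_N$) are all elementary.
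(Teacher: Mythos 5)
Your proposal is correct and follows essentially the same route as the paper: both arguments are local diagrammatic/operator reductions at the fixed point (respectively at the adjacent pair $i,i+1$) using $\Pcomp|\Omega\rangle=0$, $\Pcomp^2=\Pcomp$, and $\Tr_1 \Pcomp = (N-\tfrac1N)I_N$, and both arrive at $\Lambda(N)=\tfrac14\bigl(N^2+2(N-\tfrac1N)\bigr)$. The only substantive slip is the parenthetical ``which produces $M^2$'' in the two-cycle step: the cross-contraction $\bar G_i \leftrightarrow G_{i+1}$, $\bar G_{i+1}\leftrightarrow G_i$ together with the two $\Tr_E$ operations closes a \emph{single} environment loop, contributing $M^1$, consistent with $\#\alpha - \#\bigl(\alpha\div(i,i+1)\bigr)=1$; since $T_\alpha$ is defined with the factor $M^{\#\alpha}$ stripped off, this does not affect the stated relation $T_\alpha=\Lambda(N)\,T_{\alpha\div(i,i+1)}$, but the remark as written would mislead a reader auditing the $M$-powers. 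The paper's version expands the middle $\Pcomp=I-\tfrac1N\ketbra\Omega\Omega$ first and then the two symmetrizers, whereas you expand the two $P_s$'s into $(I,F)$ choices and let $\Pcomp$ kill the $(F,F)$ term; these are just two orderings of the same expansion and yield the same four contributions.
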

\begin{proof}
We proceed to the proof by diagrammatic manipulation of the tensor network. Fixed points $i$ of $\alpha$ can be treated locally by realizing that the corresponding tensor network always have the same structure around vertex $i$ (see the left hand side of equation \eqref{eq:fixed-point-reduction})
\begin{align}\label{eq:fixed-point-reduction}
    T_{\alpha}\ = \ \raisebox{-20mm}{\includegraphics[scale=0.42]{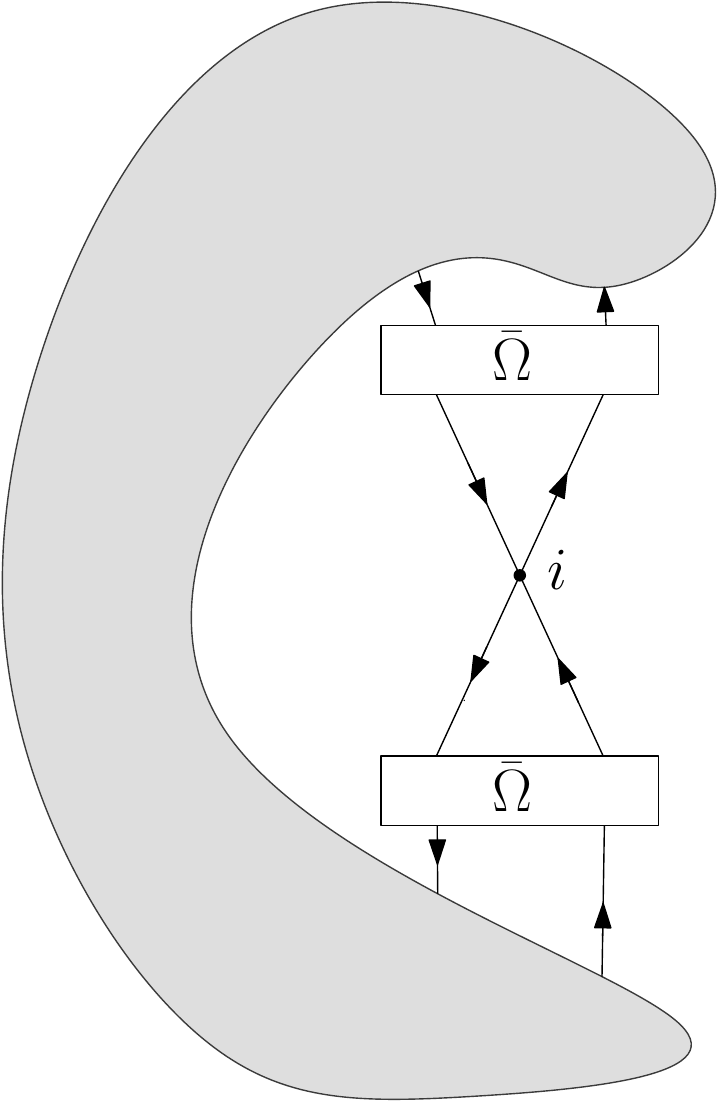}}\ &= \ \frac12 \ \raisebox{-20mm}{\includegraphics[scale=0.42]{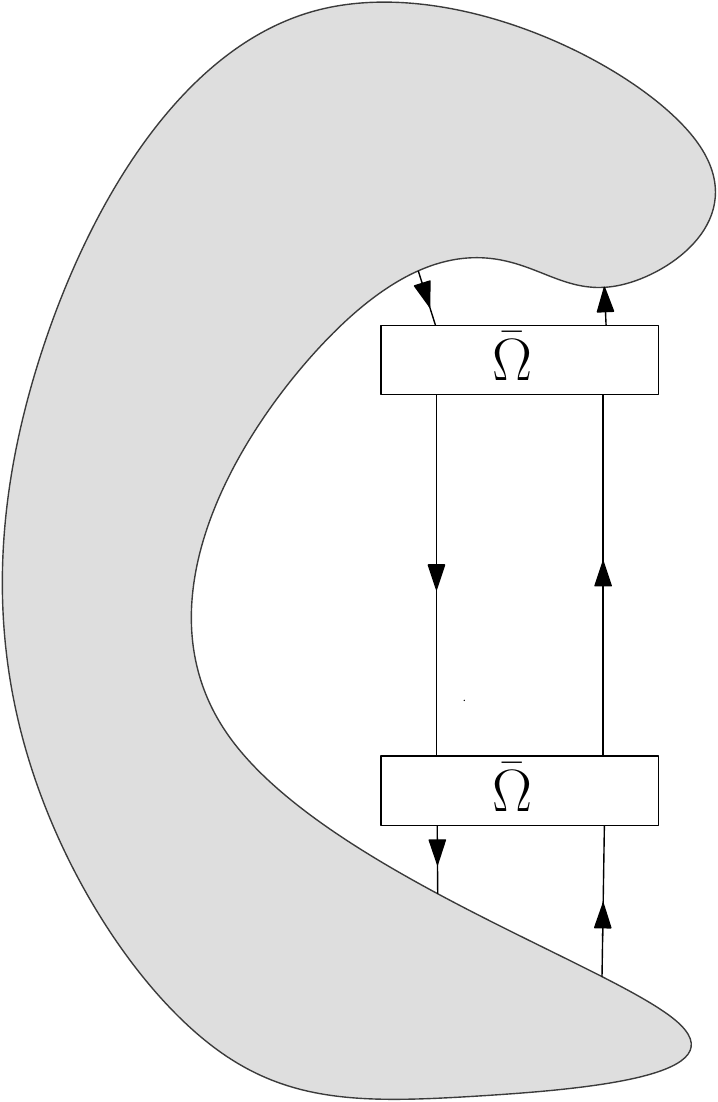}} + \frac12 \ \raisebox{-20mm}{\includegraphics[scale=0.42]{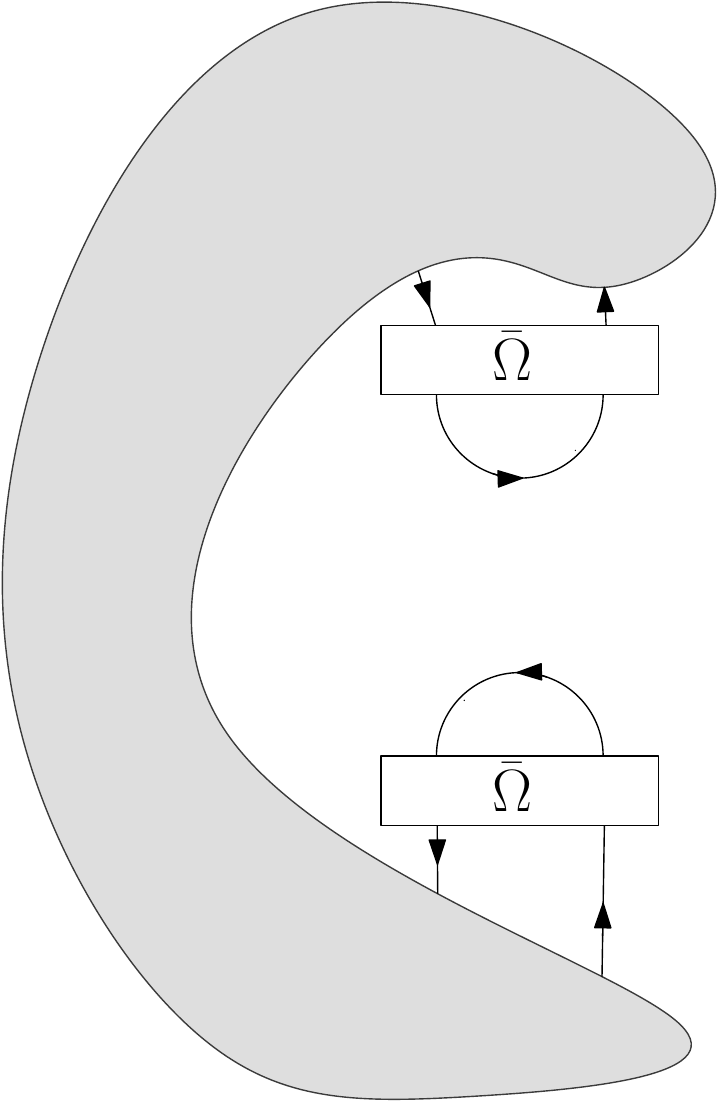}} \ = \frac12 \ \raisebox{-20mm}{\includegraphics[scale=0.4]{Fixed-points-red-rhs-1.pdf}}
\end{align}
where we used the fact that 
\begin{equation}\label{eq:PcompOmega-graphic}
    \raisebox{-8mm}{\includegraphics[scale=0.7]{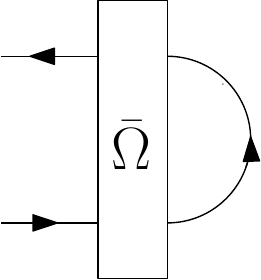}}\ = \ \sqrt{N}\Pcomp \lvert \Omega \rangle=0.
\end{equation}
Together with the fact that $\Pcomp^2=\Pcomp$ his shows that 
\begin{equation}
           T_{\alpha}=\frac12 T_{\alpha\div (i)}.
\end{equation}
The second statement can be obtained by looking at what happens locally on the vertices $i$ and $i+1$ when $\alpha$ has a cycle of the form $(i,i+1)$. One has the following diagrammatic relation
\begin{equation}\label{eq:i-i+1-expansion}
    T_{\alpha} \ = \ \raisebox{-20mm}{\includegraphics[scale=0.42]{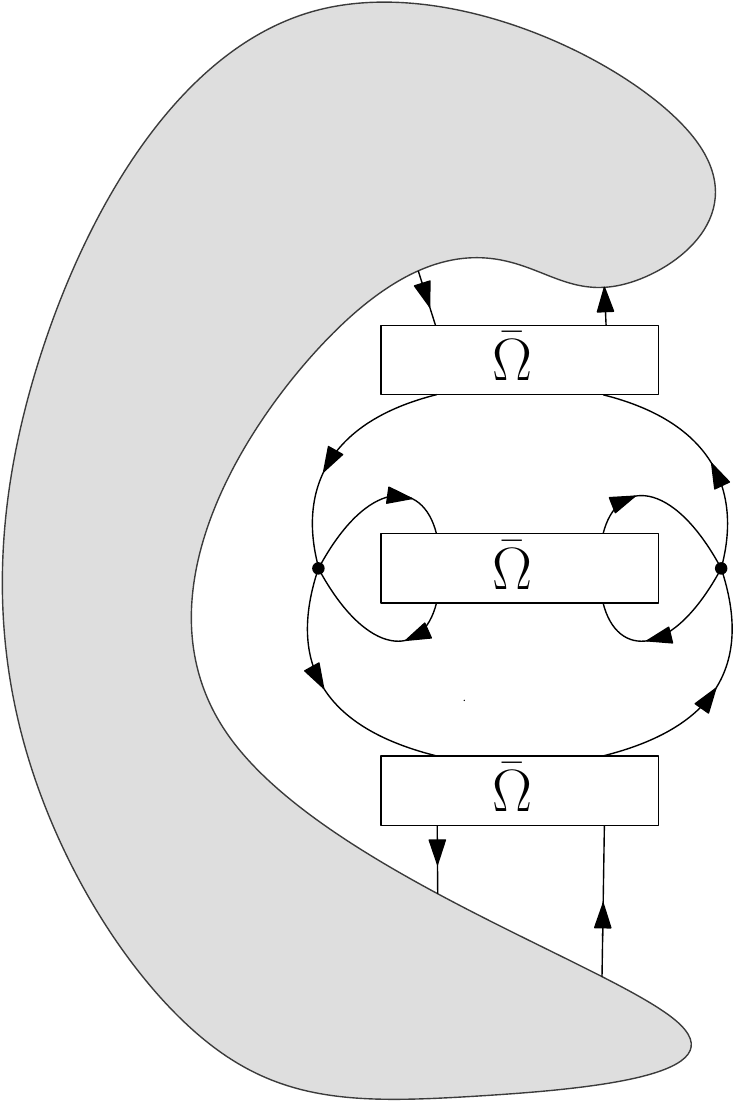}}\ = \ \raisebox{-20mm}{\includegraphics[scale=0.42]{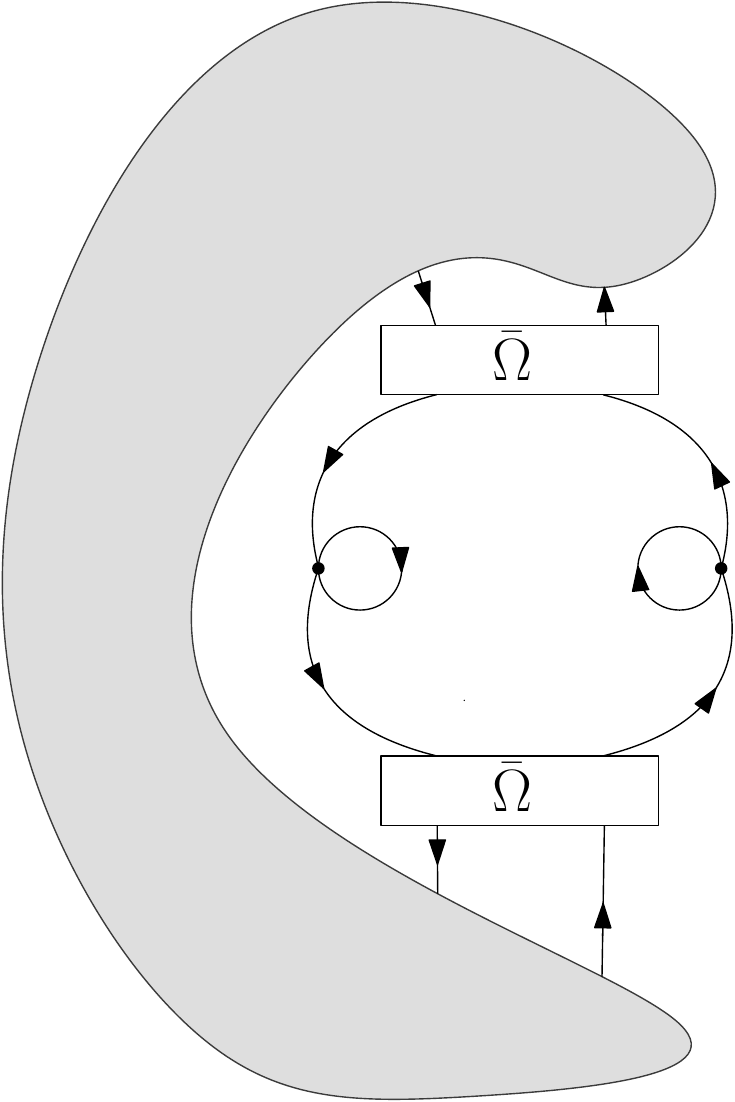}} -\frac1{N} \ \raisebox{-20mm}{\includegraphics[scale=0.42]{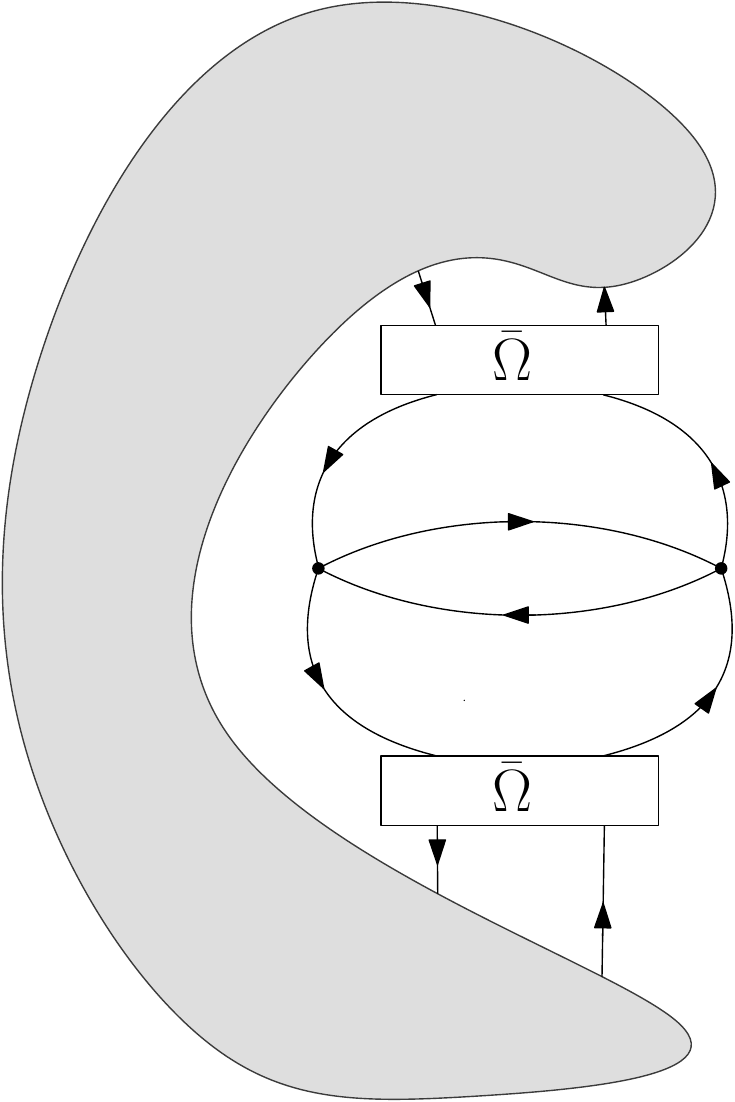}}
\end{equation}
obtained by using the definition of $\Pcomp$. The leftmost term of the right hand side of equation \eqref{eq:i-i+1-expansion} leads after further expanding the two black vertices 
\begin{equation}
    \raisebox{-20mm}{\includegraphics[scale=0.42]{i-i+1-red-rhs-1a.pdf}} = \frac14 (N+1)^2 \ \raisebox{-20mm}{\includegraphics[scale=0.4]{Fixed-points-red-rhs-1.pdf}}
\end{equation}
while the rightmost term of the right hand side of equation \eqref{eq:i-i+1-expansion} can also be extended further. Together with using again the relation \eqref{eq:PcompOmega-graphic} we have
\begin{equation}
    \frac1{N} \ \raisebox{-20mm}{\includegraphics[scale=0.42]{i-i+1-red-rhs-1b.pdf}} \ = \ \frac14 \left(\frac{N+1}{N}+\frac1{N}\right) \ \raisebox{-20mm}{\includegraphics[scale=0.4]{Fixed-points-red-rhs-1.pdf}}.
\end{equation}
Putting everything together we obtain 
\begin{equation}
    T_{\alpha}= \frac14 \left((N+1)^2-\frac{N+1}{N}-\frac1{N}\right) \ \raisebox{-20mm}{\includegraphics[scale=0.4]{Fixed-points-red-rhs-1.pdf}}
\end{equation}
hence $\Lambda(N)=\frac14(N^2+2(N-\frac1{N}))$.
\end{proof}
Thanks to this proposition we can consider permutations without fixed points by just remembering that for every $\alpha\in S_p$, there is a reduced permutation $\alpha_R$ without fixed points such that equation \eqref{eq:TN-to-reduced-TN} is satisfied. In particular it is important to note that by getting rid of the fixed points, we did not produce any factor of $N$. In terms of the right hand side of \eqref{eq:TN-ev-vs-Graph-ev}, this is due to cancellations between terms in the sum over words $f$. The tensor networks formulation allows to formulate these cancellations compactly. One particular case is if $\alpha\in S_p$ is the identity permutation. Then we have 
\begin{equation}T_{\alpha=id}=\frac1{2^p}(N^2-1).\end{equation}

Another particular case is when $\alpha\in \textrm{NC}_{2}(p)$, in that case it is always possible to find a cycle of the form $(i,i+1)$ and reducing it leads to another permutation $\alpha'$ which is in $\textrm{NC}_2(p-2)$. Using this property we can recursively reduce all the cycles of such a permutation and we obtain
\begin{equation}
    T_{\alpha\in \textrm{NC}_2(p)}=\Lambda(N)^{p/2}(N^2-1)=\frac1{2^p}N^{p+2}+O(N^p).
\end{equation}
\section{Proof of the main result}\label{sec:proof-main-result}
We now have all the tools we need to come to the proof of the main theorem \ref{thm:main}. This proof follows the steps:
\begin{enumerate}
    \item Determining the limiting spectrum of $Q$. This is Theorem \ref{thm:eigs-Q}. It is proved by nested applications of Proposition \ref{prop:TN-red-moves}, Lemma \ref{lem:maximal-scaling-cycle-type} and \ref{lem:crossing-length2}.
    \item Showing a  localization result for the overlap $\left\langle \Omega \left\lvert \frac{W^{\Gamma}}{N^3}\right\rvert \Omega\right \rangle$ in Proposition \ref{prop:moments-overlap}.
    \item Proving the main theorem by using Proposition  \ref{prop:moments-overlap} and Theorem \ref{thm:moments-WGamma-3-asympt} to control the largest and smallest eigenvalues of $W^\Gamma$ and using Cauchy's interlacing theorem on $\frac{W^\Gamma}{N^3}$ and $Q$ to obtain the convergence of the empirical distribution of the $N^2-1$ smallest eigenvalues of $W^{\Gamma}$.
\end{enumerate}

\noindent{\bf Limiting spectrum of $Q$.} We tackle the first step of our three steps plan of proof. 
Assume that the permutation $\alpha$ do not contain fixed points. By virtue of Lemma \ref{lem:maximal-scaling-cycle-type}, we know that $e\le 2$ and $e'\le 0$ for such $\alpha$. Moreover, these inequalities can be saturated if $\alpha$ has cycles of length $2$. This implies that there exists a positive constant $R$, independent of $N$,
\begin{equation}
    \lvert T_{\alpha} \rvert \le R N^{2p+2}
\end{equation}
which can be attained if $\alpha$ has only length $2$ cycles. This remark allows us to prove the following theorem
\begin{theorem}\label{thm:eigs-Q}
The limiting moments of the matrix $Q$ are given by 
\begin{equation}
\frac1{N^{2p+2}}\E\left(\Tr(Q^p)\right)=\frac1{2^p}\sum_{\alpha\in \textrm{NC}_{1,2}(p)} c^{\#\alpha}. 
\end{equation}
This completely determines the limiting law of eigenvalues of $Q$ as being a shifted semi-circular law of mean $c/2$ and variance $c/4$. In particular, the limiting support is $[-\sqrt{c}+c/2,\sqrt{c}+c/2]$ which is contained in the positive real line if and only if $c\ge4$.
\end{theorem}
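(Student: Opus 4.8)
The plan is to compute the limiting moments directly from the exact expansion $\E\Tr(Q^p)=\sum_{\alpha\in S_p}M^{\#\alpha}T_\alpha$ with $M=cN^2$, by an induction on $p$ that peels off the easy combinatorial features of $\alpha$ and uses the technical bounds to discard everything else. The base case is $T_\emptyset=\Tr\Pcomp=N^2-1\sim N^2$, which already exhibits the scale $N^{2\cdot 0+2}$ with the correct coefficient $\tfrac1{2^0}c^0=1$.

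\noindent\textbf{Reductions.} First I would reduce, via Proposition~\ref{prop:TN-red-moves}, every fixed point of $\alpha$ (each one multiplies $T_\alpha$ by $\tfrac12$ and $M^{\#\alpha}$ by $M=cN^2$, hence contributes a factor $\tfrac c2 N^2$) and every cycle of the form $(i,i+1)$ (each one multiplies $T_\alpha$ by $\Lambda(N)=\tfrac14(N^2+2N-2/N)$ and $M^{\#\alpha}$ by $M$, contributing $M\Lambda(N)=\tfrac c4 N^4+o(N^4)$), iterating until neither feature remains. A singleton removal lowers $p$ by $1$ and extracts precisely $\tfrac c2$ relative to the scale $N^{2}$; an adjacent-pair removal lowers $p$ by $2$ and extracts precisely $\tfrac c4$ relative to $N^{4}$. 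Since inserting or deleting a singleton or an adjacent pair preserves membership in $\textrm{NC}_{1,2}$, and since for $\alpha\in\textrm{NC}_{1,2}(p)$ with $s$ singletons and $d$ pairs one has $\tfrac1{2^p}c^{s+d}=(c/2)^s(c/4)^d$, these reductions propagate exactly the claimed coefficient $\tfrac1{2^p}c^{\#\alpha}$ from the base case. In particular $T_{\alpha}=\tfrac1{2^p}N^{p+2}+O(N^p)$ for $\alpha\in\textrm{NC}_2(p)$ (as already recorded before Section~\ref{sec:proof-main-result}), and the singleton-decorated version carries the same normalization.

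\noindent\textbf{Discarding the rest.} The core point is that any $\alpha$ which, after exhausting the above reductions, is fixed-point-free and contains no cycle $(i,i+1)$ contributes only $o(N^{2p+2})$, and that no such $\alpha$ lies in $\textrm{NC}_{1,2}(p)$ (a non-crossing pair partition always contains an innermost, hence adjacent, pair). For the bound I distinguish three subcases. If $\alpha$ has a cycle of length $\ge 3$, then by Lemma~\ref{lem:maximal-scaling-cycle-type} (with $\#_1\alpha=0$, so $e=\sum_{i\ge3}(2-i)\#_i\alpha+K\le -1+2=1$) one gets $M^{\#\alpha}T_\alpha=O(N^{2p+1})$, using Proposition~\ref{prop:bound_prop} and Lemma~\ref{lem:maximal-scaling-cycle-type} ($e'\le 0$) to absorb the $N^{-|P_f|}J(G_{\Gamma_{a,P_f}\alpha,\Gamma_{b,P_f}\alpha};N)$ terms at order $N^{2p}$. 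If $\alpha$ is a fixed-point-free involution with $K(G_{\gamma\alpha,\gamma^{-1}\alpha})=1$, again $e=1$ and the same conclusion holds. Finally, if $\alpha$ is a \emph{parity-changing} fixed-point-free involution with no adjacent pair, I would apply Lemma~\ref{lem:crossing-length2}, after verifying the elementary combinatorial fact that such an $\alpha$ must realize a crossing of the required shape $\{a,i+1\},\{i,b\}$ with $a<i<i+1<b$: label $i$ an \emph{opener} if $\alpha(i)>i$ and a \emph{closer} otherwise; then $1$ is an opener, $p$ is a closer, so for some $i\in[p-1]$ position $i$ is an opener and $i+1$ a closer, and the absence of adjacent pairs forces $\alpha(i)\ge i+2$ and $\alpha(i+1)\le i-1$, which is exactly the configuration of Lemma~\ref{lem:crossing-length2}. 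Hence $\deg J(G_{\gamma\alpha,\gamma^{-1}\alpha};N)\le p$, so $M^{\#\alpha}T_\alpha=O(N^{2p})$. In every case these $\alpha$ are subdominant, so only $\alpha\in\textrm{NC}_{1,2}(p)$ survive, giving $\tfrac1{N^{2p+2}}\E\Tr(Q^p)\to\tfrac1{2^p}\sum_{\alpha\in\textrm{NC}_{1,2}(p)}c^{\#\alpha}$.

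\noindent\textbf{Identifying the law.} The right-hand side equals $\sum_{\alpha\in\textrm{NC}_{1,2}(p)}(c/2)^{\#\{\text{singletons}\}}(c/4)^{\#\{\text{pairs}\}}$, which is precisely the $p$-th moment of $\mathrm{SC}_{c/2,\sqrt c/2}$. Since $Q$ acts on an $N^2$-dimensional space, this says the empirical eigenvalue distribution of $N^{-2}Q$ converges in moments to $\mathrm{SC}_{c/2,\sqrt c/2}$; this measure being compactly supported, the moment problem is determinate and the convergence is weak. Its support is $[c/2-\sqrt c,\ c/2+\sqrt c]$, which is contained in $[0,\infty)$ iff $c/2\ge\sqrt c$, i.e.\ iff $c\ge 4$. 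The main obstacle is the bookkeeping of the inductive case split: making sure the reductions of Proposition~\ref{prop:TN-red-moves} together with Lemmas~\ref{lem:maximal-scaling-cycle-type} and~\ref{lem:crossing-length2} genuinely exhaust $S_p$ and single out $\textrm{NC}_{1,2}(p)$ as \emph{exactly} the dominant index set with the correct constant $\tfrac1{2^p}c^{\#\alpha}$ — in particular the little opener/closer lemma guaranteeing that the hypotheses of Lemma~\ref{lem:crossing-length2} are always met in the remaining case.
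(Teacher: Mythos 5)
Your proposal is correct and follows essentially the same route as the paper: the same starting expansion $\E\Tr(Q^p)=\sum_{\alpha}M^{\#\alpha}T_\alpha$, the same reductions from Proposition~\ref{prop:TN-red-moves} for fixed points and adjacent transpositions, the same bounds from Lemma~\ref{lem:maximal-scaling-cycle-type} and Proposition~\ref{prop:bound_prop} to kill cycles of length $\ge 3$, and the same use of Lemma~\ref{lem:crossing-length2} for crossing involutions. The paper's own proof organizes the bookkeeping slightly differently — it first sums over the number of fixed points $q$ with a $\binom{p}{q}$ prefactor, restricts to fixed-point-free involutions, restricts further to $\mathrm{NC}_2(p-q)$, and then closes the computation with the identity $T_{\alpha\in\mathrm{NC}_2(k)}=\Lambda(N)^{k/2}(N^2-1)$ to produce Catalan numbers — while you iterate the two reductions simultaneously and track the per-reduction factors $\frac c2$ and $\frac c4$; these are equivalent accountings of the same computation.

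Two small points where your write-up is actually \emph{more} careful than the paper. First, you separate out the case of a fixed-point-free involution that is \emph{not} parity changing (so $K(G_{\gamma\alpha,\gamma^{-1}\alpha})=1$, hence $e=1$) from the parity-changing case where Lemma~\ref{lem:crossing-length2} applies; the paper only invokes Lemma~\ref{lem:crossing-length2} at this step, which is stated under the parity-changing hypothesis, and leaves the $K=1$ subcase implicit (it is of course covered by $e\le 1$, but this is not spelled out). Second, the paper merely asserts that a fixed-point-free involution without adjacent transpositions ``must have two cycles of the form $(a,i+1),(i,b)$ with $a<i<i+1<b$''; your opener/closer argument supplies the short proof of this fact, which is a welcome addition. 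One thing to keep in mind if you were to write this out in full: the reductions of Proposition~\ref{prop:TN-red-moves} remove elements from the index set, so after relabeling one needs the reduced permutation to still live on a cyclically ordered set on which ``adjacent'' makes sense; the opener/closer argument goes through on any such set, since fixed-point-freeness alone forces the smallest label to be an opener and the largest to be a closer.
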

\begin{proof}
We start with the relation
\begin{equation}
    \E\left( \Tr(Q^p)\right)=\sum_{\alpha\in S_p}N^{2\#\alpha} c^{\#\alpha} T_{\alpha}.
\end{equation}
Thanks to Proposition \ref{prop:TN-red-moves} we can reduce to permutations $\alpha'$ that do not contain fixed points at the cost of a factor $\frac12$, that is
\begin{equation}
    \E\left( \Tr(Q^p)\right)=\sum_{q=0}^p\frac1{2^q}\binom{p}{q}\sum_{\alpha'\in S_{p-q}}N^{2q +2\#\alpha'} c^{q+ \#\alpha'} T_{\alpha'}.
\end{equation}
According to Lemma \ref{lem:maximal-scaling-cycle-type}, we have 
\begin{equation}
    \sum_{q=0}^p\frac1{2^q}\binom{p}{q}\sum_{\alpha'\in S_{p-q}}N^{2q +2\#\alpha'} c^{q+ \#\alpha'} T_{\alpha'}=\sum_{q=0}^p\frac1{2^q}\binom{p}{q}\sum_{\substack{\alpha'\in S_{p-q}\\ \#_2\alpha'=\frac{p-q}{2}}}N^{2q +2\#\alpha'} c^{q+ \#\alpha'} T_{\alpha'} +O(N^{2p+1})
\end{equation}
as indeed all the cycles of $\alpha'$ must be of length $2$ since $\alpha'$ is not allowed to have fixed points. Making use of Lemma \ref{lem:crossing-length2} and the relation of equation \eqref{eq:TN-ev-vs-Graph-ev}, we can further reduce the sum 
\begin{equation}
    \sum_{\substack{\alpha'\in S_{p-q}\\ \#_2\alpha'=\frac{p-q}{2}}}N^{2q +2\#\alpha'} c^{q+ \#\alpha'} T_{\alpha'}=\sum_{\alpha'\in \textrm{NC}_2(p-q)} N^{2q +2\#\alpha'} c^{q+ \frac{p-q}{2}} T_{\alpha'} + O(N^{2p+1})
\end{equation}
because if $\alpha$ only have cycles of length $2$ and $\alpha\notin \textrm{NC}_2(p-q)$ then $\alpha$ can be reduced to a non trivial permutation $\tilde \alpha$ that do not contain cycles of the form $(i,i+1)$ by using Proposition \ref{prop:TN-red-moves} and recursively reducing cycles of this type. Then, $\tilde \alpha$ must have two cycles of the form $(a,i+1), (i,b)$ with $a<i<i+1<b$. This allows us to use Lemma \ref{lem:crossing-length2} to bound the contribution of $T_{\tilde \alpha}$. Finally, using the consequence of Proposition \ref{prop:TN-red-moves} for permutations in $\textrm{NC}_2(k)$, we have that 
\begin{align}
    \sum_{\alpha'\in \textrm{NC}_2(p-q)} N^{2q +2\#\alpha'} c^{q+ \frac{p-q}{2}} T_{\alpha'} &=\textrm{Cat}_{\frac{p-q}{2}}N^{2q +2\#\alpha'} c^{q+\frac{p-q}{2}}\Lambda(N)^{\frac{p-q}{2}}T_{\emptyset}\\
    &=\frac1{2^{p-q}}c^{q+\frac{p-q}{2}}\textrm{Cat}_{\frac{p-q}{2}}N^{2p}(N^2-1),
\end{align}
provided $p-q$ is even. Otherwise the sum is zero because there are no terms in the sum. Putting these nested sums together, we have that 
\begin{align}
    \E(\Tr(Q^p))=\frac1{2^p}\sum_{q=0}^p\binom{p}{q}\mathbf{1}_{p-q=0 \textrm{ mod }2}c^{q+\frac{p-q}{2}}\textrm{Cat}_{\frac{p-q}{2}}N^{2p+2}+O(N^{2p+1}).
\end{align}
This sum rewrites as 
\begin{equation}
    \E(\Tr(Q^p))=\frac{N^{2p+2}}{2^p}\sum_{\alpha\in \textrm{NC}_{1,2}(p)}c^{\#\alpha}+O(N^{2p+1}).
\end{equation}
\end{proof}
\noindent{\bf Localization of the overlap.} We now come to the second step of our three steps plan. We state and prove the localization result for the overlap below.

\begin{proposition}\label{prop:moments-overlap}
For all $p \geq 1$, 
$$\E  \left\langle \Omega \left\vert \frac{W^{\Gamma}}{N^3} \right\vert \Omega \right\rangle^p = N^{-4p} \sum_{\alpha \in S_p} M^{\#\alpha} N[2]^{\#\alpha}.$$
In particular, we have 
$$\E \left\langle \Omega \left\vert \frac{W^{\Gamma}}{N^3} \right\vert \Omega \right\rangle = \frac c 2, \qquad\qquad \operatorname{Var} \left\langle \Omega \left\vert \frac{W^{\Gamma}}{N^3} \right\vert \Omega \right\rangle =  \frac{M N[2]}{N^8} \sim \frac c 2 N^{-4},$$
and
$$\lim_{N \to \infty} \E \left\langle \Omega \left\vert \frac{W^{\Gamma}}{N^3} \right\vert \Omega \right\rangle^p = \left(\frac c 2\right)^p.$$
\end{proposition}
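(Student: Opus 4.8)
The plan is to bypass the circuit-polynomial and partial-trace machinery entirely, by first collapsing the scalar $\langle \Omega | W^\Gamma | \Omega \rangle$ to an ordinary trace of $W$, and then carrying out a routine Gaussian moment computation. First I would compute, in components,
$$\langle \Omega | W^\Gamma | \Omega \rangle = \frac 1N \sum_{a,b=1}^N \langle aa | W^\Gamma | bb \rangle = \frac 1N \sum_{a,b=1}^N \langle ab | W | ba \rangle.$$
Since $W = P_s \Tr_E(GG^*) P_s$ and $P_s$ is the orthogonal projector onto the symmetric subspace, we have $P_s |ba\rangle = P_s |ab\rangle$ (equivalently $P_s F = P_s$, hence $WF = W$), so $\langle ab | W | ba\rangle = \langle ab | W | ab \rangle$, and summing over $a,b$ yields the identity $\langle \Omega | W^\Gamma | \Omega \rangle = \frac 1N \Tr W$. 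Therefore $\big\langle \Omega \big| \tfrac{W^\Gamma}{N^3} \big| \Omega \big\rangle^p = N^{-4p} (\Tr W)^p$, and everything reduces to computing the ordinary moments $\E (\Tr W)^p$ of the (un-normalised) bosonic trace.

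For this I would expand $\Tr W = \sum (P_s)_{cd,ab}\, G_{(ab)e}\, \overline{G_{(cd)e}}$, take the $p$-th power, and apply the Wick--Isserlis theorem exactly as in the proof of Proposition~\ref{prop:moments-W-Gamma}: Gaussian pairings are indexed by a permutation $\alpha \in S_p$ matching the $i$-th $G$ with the $\alpha(i)$-th $\overline G$. The sum over the environment index then produces a factor $M^{\#\alpha}$ (one free index per cycle of $\alpha$), while the sum over the system indices factorises over the cycles of $\alpha$ and equals, along a cycle of length $\ell$, $\Tr(P_s^\ell) = \Tr P_s = N[2]$ by idempotency of $P_s$. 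Collecting the two contributions gives $\E (\Tr W)^p = \sum_{\alpha \in S_p} M^{\#\alpha} N[2]^{\#\alpha}$, which is the asserted formula after dividing by $N^{4p}$. As a shortcut one may instead observe that $\Tr W = \|(P_s \otimes I_E)G\|^2$ has the law of a sum of $N[2]\,M$ i.i.d.\ $|g|^2$ with $g$ a standard complex Gaussian, so that $\E(\Tr W)^p$ equals the rising-factorial $p$-th moment of a Gamma distribution, namely $\sum_{\alpha \in S_p} d^{\#\alpha}$ with $d = N[2]M$.

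The three stated consequences are then immediate. The expectation is the $p=1$ term $N^{-4} M N[2]$, which tends to $c/2$ since $M \sim cN^2$ and $N[2] = \tfrac{N(N+1)}{2} \sim \tfrac{N^2}{2}$. The variance is $N^{-8}\big[(MN[2])^2 + MN[2]\big] - N^{-8}(MN[2])^2 = MN[2]/N^8 \sim \tfrac c2 N^{-4}$. For general $p$, since $M N[2] \sim \tfrac c2 N^4$, the term indexed by $\alpha$ behaves like $(c/2)^{\#\alpha} N^{4(\#\alpha - p)}$, which vanishes in the limit unless $\#\alpha = p$, i.e.\ $\alpha = \mathrm{id}$ — the unique such permutation — contributing $(c/2)^p$; hence the $p$-th moment converges to $(c/2)^p$.

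I do not expect a genuine obstacle here: the only real idea needed is the collapse $\langle \Omega | W^\Gamma | \Omega\rangle = \frac 1N \Tr W$, after which the argument is a standard Wishart-type moment computation that requires none of the later technology. The one point demanding a little care is the Wick bookkeeping in the middle step — verifying that each cycle of the pairing contributes precisely the factor $M \cdot N[2]$, using the idempotency of $P_s$ and of $I_E$ simultaneously.
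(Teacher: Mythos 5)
Your proof is correct, and the first step is a genuinely different and pleasantly elementary reduction compared to what the paper does. The paper applies Wick directly to the overlap $\langle \Omega | W^\Gamma | \Omega \rangle^p$ and organizes the system-index contractions as a generalized bipartite trace $\Tr_{\alpha,\alpha}(P_s)$ (illustrated with a tensor-network picture), then observes that equality of the two permutations makes it collapse to $(\Tr P_s)^{\#\alpha} = N[2]^{\#\alpha}$. You instead notice at the outset that $W F = W$ (since $P_s F = P_s$), which gives the clean scalar identity $\langle \Omega | W^\Gamma | \Omega \rangle = N^{-1}\Tr W$; after that, everything is the ordinary Wishart-type moment computation (environment cycle gives $M$, system cycle gives $\Tr P_s^\ell = N[2]$ by idempotency), with the same answer. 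Your Gamma-distribution shortcut via $\Tr W = \|(P_s \otimes I_E)G\|^2 \sim \Gamma(N[2]M,1)$ is also correct — the rising-factorial moment $\sum_{\alpha \in S_p} d^{\#\alpha}$ with $d = N[2]M$ is a neat way to see the formula without Wick bookkeeping at all. The upshot is that your route avoids introducing the generalized trace $\Tr_{\alpha,\alpha}$ and its graphical calculus for this particular lemma, which buys simplicity here, whereas the paper's formulation is uniform with the heavier machinery it already needs elsewhere (Sections 7--9). The variance and limit computations are all correct.
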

\begin{proof}
To show the first statement, use the Wick formula to write
$$\E  \left\langle \Omega \left\vert \frac{W^{\Gamma}}{N^3} \right\vert \Omega \right\rangle^p = N^{-4p} \sum_{\alpha \in S_p} M^{\#\alpha} \Tr_{\alpha, \alpha}(P_s).$$
Indeed, we have, graphically, 
\begin{center}
    \includegraphics{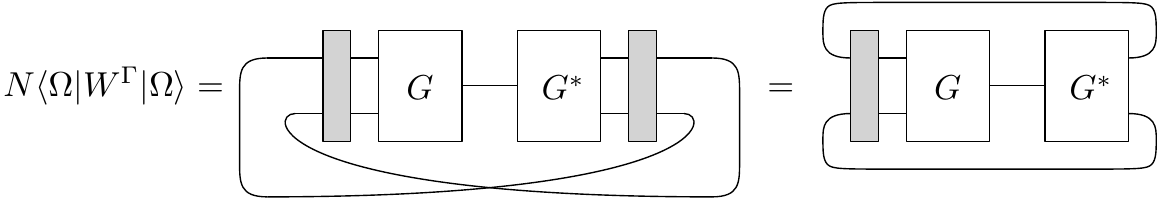}
\end{center}
Use now the fact that that $P_s$ is a projection, and thus
$$\Tr_{\alpha, \alpha}(P_s) = (\Tr P_s)^{\#\alpha} = N[2]^{\#\alpha}.$$
\end{proof}

\noindent {\bf Interlacing of eigenvalues and asymptotic of $W^{\Gamma}$.} We are now ready to prove the main result of the paper. As a reminder, let us start by stating a classical result, namely Cauchy's interlacing theorem
\begin{theorem}[Cauchy's interlacing theorem]\label{thm:cauchy-interlacing}
Let $A$ be a Hermitian operator on a Hilbert space $\mathcal{H}$, and let $B$ be its compression to a codimension $k$ subspace $\mathcal{N}$. Then for $j=1,2,\ldots, n-k$
\begin{equation}
 \lambda_j(A)\ge \lambda_{j}(B)\ge \lambda_{j+k}(A).
\end{equation}
\end{theorem}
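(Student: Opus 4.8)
The plan is to derive both inequalities from the Courant--Fischer min-max characterization of the eigenvalues of a Hermitian operator; nothing about $W^\Gamma$ or $Q$ enters, so the statement is purely linear-algebraic. Write $n = \dim \mathcal{H}$ and, for $0 \neq x \in \mathcal{H}$, put $R_A(x) = \langle x, Ax \rangle / \langle x, x \rangle$. First I would recall the variational formula
\[
\lambda_j(A) = \max_{V \subseteq \mathcal{H},\ \dim V = j} \ \min_{0 \neq x \in V} R_A(x), \qquad j = 1, \dots, n,
\]
together with the analogous formula for $B$, in which the subspaces $V$ now range over subspaces of $\mathcal{N}$, a space of dimension $n-k$. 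The one structural input needed is that $B$ is the compression $P_{\mathcal{N}} A P_{\mathcal{N}}$ restricted to $\mathcal{N}$, so that $R_B(x) = R_A(x)$ for every $0 \neq x \in \mathcal{N}$; thus the two maximization problems differ only in the family of subspaces that is allowed.

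For the upper bound $\lambda_j(B) \le \lambda_j(A)$, I would simply note that any $j$-dimensional subspace $V \subseteq \mathcal{N}$ is in particular a $j$-dimensional subspace of $\mathcal{H}$ on which $R_B$ and $R_A$ coincide; hence the maximum defining $\lambda_j(B)$ is taken over a sub-family of the subspaces available in the maximum defining $\lambda_j(A)$, and the inequality follows.

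For the lower bound $\lambda_j(B) \ge \lambda_{j+k}(A)$, I would let $S$ be the span of the eigenvectors of $A$ corresponding to its $j+k$ largest eigenvalues, so that $\dim S = j+k$ and $R_A(x) \ge \lambda_{j+k}(A)$ for all $0 \neq x \in S$. A dimension count gives $\dim(S \cap \mathcal{N}) \ge (j+k) + (n-k) - n = j$, so I may pick a $j$-dimensional subspace $V \subseteq S \cap \mathcal{N}$; on $V$ one has $R_B(x) = R_A(x) \ge \lambda_{j+k}(A)$, and since $V$ is admissible in the maximum defining $\lambda_j(B)$, this yields $\lambda_j(B) \ge \lambda_{j+k}(A)$ for every $j = 1, \dots, n-k$.

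I do not anticipate a genuine obstacle --- this is a textbook fact --- and the only points requiring care are the dimension count $\dim(S \cap \mathcal{N}) \ge j$ and the check that all the subspaces invoked are well-defined throughout the stated range $1 \le j \le n-k$. If one prefers to avoid the direct count, an alternative is to establish the codimension-one case first (compression to a hyperplane, where the argument above directly yields $\lambda_j(A) \ge \lambda_j(B) \ge \lambda_{j+1}(A)$) and then iterate it $k$ times; or, the result being standard, one may simply cite it.
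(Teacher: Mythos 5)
Your proof is correct: the Courant--Fischer characterization, combined with the fact that $R_B = R_A$ on $\mathcal{N}$ and the dimension count $\dim(S\cap\mathcal{N}) \ge (j+k)+(n-k)-n = j$, yields both inequalities exactly as you describe, and all subspaces you invoke are well-defined for $1 \le j \le n-k$. The paper does not prove this theorem at all --- it simply states it as a classical fact to be applied to $N^{-2}W^\Gamma$ and $N^{-2}Q$ --- so there is no ``paper's proof'' to compare against; your argument is the standard textbook one and would be a perfectly good citation or inline justification.
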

We use this theorem to recover the behaviour of the limiting eigenvalues of $W^{\Gamma}$ from the limiting spectrum of $Q$ and the large eigenvalue of $W^{\Gamma}$. We now come to the proof of theorem \ref{thm:main}.\\

\begin{proof}[Proof of the Theorem \ref{thm:main}]
Let us first show that the largest eigenvalue of the matrix $W^\Gamma$ behaves, asymptotically as $N \to \infty$, as $\frac c 2 N^3$. To this end, we shall compare it with two quantities: the overlap $\langle \Omega | W^\Gamma | \Omega \rangle$ and wih the moments of the random matrix $W^\Gamma$.

On the one hand, we have (we denote $\lambda_{\max} = \lambda_1$ for clarity), for all $p \geq 2$, 
$$\Tr\left( \frac{W^\Gamma}{N^3}\right)^{2p}\ge \lambda_{\max}^{2p}\left( \frac{W^\Gamma}{N^3}\right)\ge \left\langle \Omega\left\lvert \frac{W^\Gamma}{N^3} \right\rvert \Omega\right\rangle^{2p}.$$
Using Theorem \ref{thm:moments-WGamma-3-asympt} and Proposition \ref{prop:moments-overlap}, we conclude that
$$\lim_{N \to \infty} \lambda_{\max}^{2p}\left( \frac{W^\Gamma}{N^3}\right) = \left(\frac c 2 \right)^{2p}.$$
By Markov's inequality, we have thus 
$$\lim_{N \to \infty}\lambda_{\max}^{2}\left( \frac{W^\Gamma}{N^3}\right) = \left(\frac c 2 \right)^{2}$$
in probability. But since $\Tr W^\Gamma = \Tr W \geq 0$, $\lambda_{\max}(W^\Gamma) \geq 0$ almost surely, so we also have, in probability,
$$\lim_{N \to \infty}\lambda_{\max}\left( \frac{W^\Gamma}{N^3}\right) = \frac c 2,$$
establishing the first claim.

In order to prove the second part of the statement, concerning the lower $N^2-1$ eigenvalues of $W^\Gamma$, we shall make use of Theorems \ref{thm:eigs-Q} and \ref{thm:cauchy-interlacing}. Indeed, if we denote by $\mu_1 \geq \cdots \geq \mu_{N^2-1}$ the (non-zero) eigenvalues of the matrix $N^{-2}Q$ and by $\lambda_1 \geq \lambda_2 \geq \cdots \geq \lambda_{N^2-1} \geq \lambda_{N^2}$ those of $N^{-2}W^\Gamma$, by Cauchy's interlacing theorem, we have 
$$\lambda_1 \geq \mu_1 \geq \lambda_2 \geq \mu_2 \geq \cdots \geq \lambda_{N^2-1} \geq \mu_{N^2-1} \geq \lambda_{N^2}.$$
On the other hand, we know by Theorem \ref{thm:eigs-Q} that the empirical distribution of the $\mu$ eigenvalues converges to the desired shifted semicircle distribution. Hence, in order to conclude, we need to control the smallest eigenvalue $\lambda_{N^2}$ and show that it does not contribute asymptotically to moments. We have 
\begin{align*} \E \lambda_{\min}^4\left(\frac{W^\Gamma}{N^3}\right)& \leq \E \Tr\left(\frac{W^\Gamma}{N^3}\right)^4 - \E \lambda_{\max}^4\left(\frac{W^\Gamma}{N^3}\right) \\
& \leq \E \Tr\left(\frac{W^\Gamma}{N^3}\right)^4 - \E \left\langle \Omega \left\vert \frac{W^{\Gamma}}{N^3} \right\vert \Omega \right\rangle^4 \leq o(1) + \left(\frac c 2 \right)^4 - \left(\frac c 2 \right)^4 = o(1).
\end{align*}
Hence, $\E|\lambda_{\min}(W^\Gamma)| = o(N^{3})$, which implies that $\E |\lambda_{N^2}| = o(N)$, completing the proof. 
\end{proof}

\section{Conclusion}\label{sec:conclusion}
In this paper we have introduced the ensembles of bosonic and fermionic density matrices, which are random quantum states supported on the symmetric, resp.~anti-symmetric subspaces of a tensor product space. We have then studied the entanglement of typical states from these ensembles, focusing on the partial transposition criterion. 

In the fermionic case, we have found that the partial transposition of such a random density matrix has a large negative eigenvalue, hence a typical fermionic bipartite density matrix is entangled. The bosonic case is more delicate, since the symmetry condition imposes a large positive eigenvalue for the partial transposition, so a finer analysis of the spectrum is required. 

We use the moment method to completely describe the spectrum of the partial transposition of a symmetric random density matrix. We find that the bulk of the spectrum follows a shifted semicircular distribution at scale $N^2$, similarly to the non-symmetric case discussed in \cite{aubrun2012partial}. We also find a large outlier positive eigenvalue, on the scale $N^3$, which is a signature of the symmetry. Our method relies on establishing a connection between the moments of symmetric random matrices and the circuit counting graph polynomial. This allows us to characterize the asymptotic ratio between the system size and the environment size for which the PPT criterion is strong enough to certify entanglement. 

The current paper is a first step in the study of the entanglement properties of the bosonic and fermionic ensembles of quantum states introduced here, several questions being left for future work. In the bipartite case (which is the main focus of this paper), the question of the convergence of the smallest eigenvalue of the partial transposition towards the left edge of the spectrum is left open. Probably, this would require the detailed analysis of large moments of the corresponding random matrix, similarly to the method used in \cite{aubrun2012partial}; such an analysis is hindered in the current situation by the presence of the outlier eigenvalue on a larger scale. 

We have not addressed the multipartite case, $r \geq 3$, mainly because the tools used in this work (connection to graph polynomials) are not efficient enough in the general case. However, first investigations of this general case suggest that an interesting phenomenon is at play. Indeed, we expect the spectrum of the partial transpose to split at several scales, such that the empirical eigenvalues distribution has several ($r-1$) connected dense parts with each part appearing at a specific scale. We expect that the rigorous study of this general case will make heavy use of representation theory. 

Another interesting direction for future study is the de Finetti theorem for the bosonic ensemble. This would require analyzing the separability property of the partial trace of a symmetric random density matrix in the regime where the parameter $r$ is large, and the Hilbert space dimension $N$ is fixed. This asymptotic regime usually requires completely different techniques in random matrix theory. 

In conclusion, we have set up and studied basic entanglement properties of symmetric (and anti-symmetric) random density matrices. Several questions regarding these ensembles are left open, and we hope that our work will generate interest in these ensembles of random matrices / tensors, and their connection to combinatorics and graph polynomials.

\bigskip

\paragraph*{\bf Acknowledgments.}
S.~Dartois has been partially supported by the Australian Research Council grant DP170102028, the French ANR project ANR-18-CE47-0010 (\href{https://www.irif.fr/~magniez/qudata/}{QUDATA}). I.~Nechita has been supported by the ANR project \href{https://esquisses.math.cnrs.fr/}{ESQuisses}, grant number ANR-20-CE47-0014-01. Both S.~Dartois and I.~Nechita were supported on this project by the ANR project ``Investissements d'avenir'' ANR-11-LABX-0040. A. Tanasa has been partially supported by the ANR-20-CE48-0018 ``3DMaps" grant.

\bibliographystyle{alpha}
\bibliography{tensor-QIT}
\end{document}